\definecolor{myblu}{HTML}{2b4781}
\definecolor{myor}{HTML}{dd7008}
\numberwithin{equation}{section}
\newtheorem{corollary}{Corollary}[section]
\newtheorem{proposition}[corollary]{Proposition}
\newtheorem{lemma}[corollary]{Lemma}
\newtheorem{theorem}[corollary]{Theorem}
\newtheorem{remark}[corollary]{Remark}
\newtheorem{example}[corollary]{Example}
\newtheorem{definition}[corollary]{Definition}
\newcommand{\numset}[1]{\mathbf{#1}}
	\newcommand{\cc}{\numset{C}}
	\newcommand{\rr}{\numset{R}}
	\newcommand{\nn}{\numset{N}}
	\newcommand{\one}{\mathbf{1}}
		\newcommand{\Exp}[1]{\mathrm{e}^{#1}}
	\renewcommand{\Re}{\operatorname{Re}}
	\newcommand{\vol}{\operatorname{vol}}
	\newcommand{\ee}{\mathbf{E}}
	\newcommand{\pp}{\mathbf{P}}
	\providecommand*{\diff}%
		{\@ifnextchar^{\DIfF}{\DIfF^{}}}
	\def\DIfF^#1{%
		\mathop{\mathrm{\mathstrut d}}%
			\nolimits^{#1}\gobblespace}
	\def\gobblespace{%
		\futurelet\diffarg\opspace}
	\def\opspace{%
		\let\DiffSpace\!%
		\ifx\diffarg(%
			\let\DiffSpace\relax
		\else
			\ifx\diffarg[%
				\let\DiffSpace\relax
			\else
				\ifx\diffarg\{%
					\let\DiffSpace\relax
				\fi\fi\fi\DiffSpace}
	\renewcommand{\d}{\diff}
	\renewcommand{\sp}{\operatorname{sp}}
	\DeclareMathOperator{\spb}{spb}
	\DeclareMathOperator{\supp}{supp}
		\newcommand{\cD}{\mathsf{D}}
	\DeclareMathOperator{\cl}{cl}
	\DeclareMathOperator{\interior}{int}
	\DeclareMathOperator{\tr}{tr}
	\renewcommand{\ss}{\mathcal{S}}
	\newcommand{\cP}{\mathcal{P}}
	\newcommand{\cQ}{\mathcal{Q}}
	\newcommand{\cCt}{\mathcal{C}_t}
	\newcommand{\cB}{\mathcal{B}}
	\newcommand{\cA}{\mathcal{A}}
	\newcommand{\g}{g}
	\providecommand{\upDelta}{\Delta}
	\providecommand{\upGamma}{\Gamma}
\begin{document}

\title{The large-time and vanishing-noise limits \\ for entropy production in nondegenerate diffusions}
\date{}
\author{Renaud Raqu\'epas}

\maketitle

{\begin{center}
		\small
		\begin{tabular}{c}
			 New York University  \\
			 Dept.\ of Mathematics at the Courant Institute \\
			 251 Mercer Street \\
			 New York, NY 10012, United States
		\end{tabular}
	\end{center}
}

\begin{abstract}
  We investigate the behaviour of a family of entropy production functionals associated to stochastic differential equations of the form
	\[
		\d X_s = -\nabla V(X_s) \d s + b(X_s) \d s + \sqrt{2\epsilon} \d W_s,
	\]
	where~$b$ is a globally Lipschitz nonconservative vector field keeping the system out of equilibrium, with emphasis on the large-time limit and then the vanishing-noise limit. Different members of the family correspond to different choices of boundary terms. Our analysis yields a law of large numbers and a local large deviation principle which does not depend on the choice of boundary terms and which exhibits a Gallavotti--Cohen symmetry. We use techniques from the theory of semigroups and from semiclassical analysis to reduce the description of the asymptotic behaviour of the functional to the study of the leading eigenvalue of a quadratic approximation of a deformation of the infinitesimal generator near critical points of~$V$. \\

	\noindent\textbf{Keywords:} {time reversal}, {large deviations}, {leading eigenvalue}, {Feynman--Kac semigroup}, {semiclassical limit} \\

	\noindent\textbf{MSC2020:}  {82C31}, {82C35}, {60H10}, {47D08}
\end{abstract}

\tableofcontents

\section{Introduction}

The study of reversibility of diffusion processes was pioneered by A.N.~Kolmogorov in~\cite{Ko37}, with one of its first basic results being that a diffusion
\begin{equation}\label{eq:nd-diff}
	\d X_s = c(X_s) \d s+ \sigma \d W_s
\end{equation}
on~$\rr^N$ with constant diffusion matrix~$\sigma > 0$ and initial condition~$X_0 \sim \lambda$ is reversible if and only if there exists a function~$U$ such that $c = -\sigma \sigma^\mathsf{T} \nabla U $ and $\lambda$ is the unique probability measure whose density is proportional to $\exp(-2 U)$.
In all other cases, the time reversal of the original diffusion is a Markov process which is different from the original one.

The question whether the reversal of a diffusion is itself a diffusion was explored and understood in the 1980s, most notably by B.\,D.~O.~Anderson~\cite{An82} and by E.~Pardoux and U.~Haussmann~\cite{HP86}. When it is the case, it is natural to ask how distinguishable the two diffusions are: this more quantitative question\,---\,and its connection with thermodynamics\,---\,is the subject of the present paper. It has a long history in both the physics and mathematics literature, but we will only give references to the mathematically rigorous works on the particular aspects we are interested in.

Both the original process observed during the interval~$[0,t]$ and its time reversal give rise to probability measures on a space of continuous functions (the trajectories, or paths); let us call them respectively~$\cP_t$ and~$\cP_t \circ \Theta_t^{-1}$. Using statistical tools to distinguish between these two measures is called \emph{hypothesis testing of the arrow of time} in~\cite{JOPS12,EPbrief}. To explore the basic questions in the realm of hypothesis testing, the log-likelihood ratio
\begin{equation}\label{eq:LLR}
	S_t^{\textnormal{LLR}} := \log \frac{\d \cP_t}{\d (\cP_t \circ \Theta_t^{-1})}
\end{equation}
and its moment-generating function are of great significance;~$S_t^{\textnormal{LLR}}$ is sometimes called the \emph{canonical entropy production functional}.

In dimension~2 or~3, the diffusion~\eqref{eq:nd-diff}\,---\,sometimes called an overdamped Langevin equation\,---\,is naturally interpreted as a small-inertia approximation of the dynamics of a single particle in the force field~$c$, perturbed by a thermal noise; the matrix $\sigma\sigma^{\mathsf{T}}$ is related to temperature through an Einstein-type relation.
Hence, a thermodynamical notion of entropy production is natural: with $b$ the part of~$c$ that is nonconservative, the integral
\begin{equation}\label{eq:NCW}
	S_t^{\textnormal{W}} := 2 \int_0^t \big\langle \big(\sigma\sigma^\mathsf{T}\big)^{-1} b(X_s), \circ \d X_s \big\rangle
\end{equation}
is the work done by the nonconservative force, appropriately rescaled by the temperature~\cite{Ku98,LS99}. When considering several particles,~$\sigma\sigma^{\mathsf{T}}$ may have different blocks for different particles, each related to a (possibly different) temperature, and the above integral can be split into a sum of the corresponding contributions~\cite{LS99,MNV03}.

One expects~$S_t^{\textnormal{LLR}}$ and~$S_t^{\textnormal{W}}$ to be quantities of order~$t$ and to only differ by an additive term that depends on the initial and final conditions of the paths. In the present article, we consider an abstract entropy production functional~$\mathcal{S}_t$ corresponding to any sufficiently well-behaved modification of these so-called \emph{boundary terms} and study its behaviour as~$t \to \infty$ and then in the limit as~$\sigma$ vanishes.
The way in which we take~$\sigma$ to~$0$ leaves out some geometric considerations: we consider $\sigma^\epsilon = \sqrt{2\epsilon}\one$ and take the scalar parameter~$\epsilon$ to~$0$. Considering the more general case~$\sigma^\epsilon = \sqrt{\epsilon}\sigma^1$ for some fixed positive-definite matrix~$\sigma^1$ has been sacrificed for readability and ease of interpretation of the formulas:
one can perform a suitable change of variables and carry on with a similar analysis, but one must then be careful with the physical interpretation. Indeed, from the physical point of view, the case we look at here corresponds to situations where the lack of equilibrium comes from a nonconservative driving force and is conceptually different from situations where the lack of equilibrium comes from an imbalance between the sources of thermal fluctuations.

The asymptotic behaviour of entropy production functionals as~$t \to \infty$ at fixed~$\epsilon > 0$ was studied by L.~Bertini and G.~Di Ges\`u in~\cite{BD15} and by F.Y.~Wang, J.~Xiong and L.~Xu in~\cite{WXX16} under more restrictive technical conditions. For a class of \emph{degenerate} linear diffusions, V.~Jak\v{s}i\'c, C.-A.~Pillet and A.~Shirikyan have performed a very detailed analysis of the limit~$t \to \infty$~\cite{JPS17}.
In~\cite{BD15}, the authors also tackled the rescaled limit $\epsilon \to 0$ at fixed~$t > 0$ by means of Freidlin--Wentzell theory and then the limit as $t \to \infty$ using subadditivity and results on $\upGamma$-convergence; also see~\cite{RT00} and~\cite{Ku07}.
As already discussed by some of these authors,
taking~$t \to \infty$ \emph{first} and~\emph{then} $\epsilon \to 0$ is physically more natural and was left open. This order is the one taken here, revealing a different picture than in~\cite{BD15}.

In Section~\ref{sec:setup}, we set our assumptions, discuss the basic theory surrounding the time reversal of the diffusion
and rigorously relate~\eqref{eq:LLR} to~\eqref{eq:NCW} by boundary terms.
In Section~\ref{sec:EPF}, we introduce a family of entropic functionals~$\ss_t^\epsilon$ depending on the choice of boundary terms. We then give a representation for the moment-generating function~$\chi_t^\epsilon(\alpha)$ involving the chosen boundary term and the compact and irreducible semigroup generated by the deformation
\[
\Lambda^{\epsilon,\alpha} = \epsilon \upDelta +  \braket{c - 2\alpha b,\nabla } - \tfrac{\alpha(1-\alpha)}{\epsilon} |b|^2  + \tfrac {\alpha}{\epsilon} \braket{b,b - c} - \alpha\operatorname{div} b
\]
of the generator associated to~\eqref{eq:nd-diff}. Relevant spectral properties of~$\Lambda^{\alpha,\epsilon}$, including domain technicalities, a Perron--Frobenius-type result for its spectral bound, denoted~$e^\epsilon(\alpha)$, and a result of convergence in the large-time limit for the generated semigroup, are given in Appendix~\ref{app:semig}.

In Section~\ref{sec:large-t}, we study the asymptotics of the moment-generating function as~$t \to \infty$ for fixed~$\epsilon > 0$: we show that
\[
	\lim_{t\to\infty}\frac 1t\log \chi_t^\epsilon(\alpha) = e^\epsilon(\alpha)
\]
for a set of~$\alpha$ which depends on the behaviour of the boundary terms at infinity.
Our set of assumptions is more general than that of~\cite{BD15}: we most notably allow~$b$ to be unbounded; see Assumptions~\textnormal{(L0)},~\textnormal{(L1)} and~\textnormal{(RB)}.
{Yet, they still allow us to prove our \emph{first main result}: if the behaviour at infinity of the boundary terms is suitable\,---\,as made precise by Assumption~\textnormal{(IP)}\,---, a local large deviation principle (\textsc{ldp}) holds. This \textsc{ldp} is stated precisely as Proposition~\ref{prop:ldp-t} but can be roughly summarized as the validity of the asymptotics
\begin{equation}
\label{eq:heur-ldp}
		\pp \{ t^{-1}\ss_t^\epsilon \approx \varsigma \} \asymp \exp\left(-t e^\epsilon_*(\varsigma)\right)
\end{equation}
for $0 < \epsilon^{-1} \ll t$ and} all $\varsigma$ close enough to the mean,
where the rate function~$e^\epsilon_*$ is the Legendre transform of~$e^\epsilon$.
{We emphasize that this local \textsc{ldp} is common to all the boundary terms that satisfy our assumptions, including the two natural choices presented earlier\,---\,this is rare in the existing literature. Since it is known that these choices may give rise to different behaviour of the rate functions far away from the mean~\cite{CVZ03,JPS17}, this shows that\,---\,even though we have not focused on enlarging the interval of validity of the principle as much as technically possible\,---\,the local nature is not merely technical.} In Section~\ref{sec:lin}, we characterize the vanishing of the mean entropy production per unit time and give detailed information on the rate function~$e^*_\epsilon$ in the case where the diffusion is linear.

In Section~\ref{sec:small-e}, we use the linear case and a result in semiclassical analysis proved in Section~\ref{app:semiclass} to describe the asymptotic behaviour of the rate function~$e_*^\epsilon$ as~$\epsilon \to 0$ in the more general case covered by our assumptions.
This gives our \emph{second main result}, Theorem~\ref{prop:ldp-t-eps}. Summarized roughly, it states the validity of the asymptotics
\begin{equation}
\label{eq:heur-ldp-van}
		\pp \{ t^{-1}\ss_t^\epsilon \approx \varsigma \} \asymp \exp\left(-t e_*(\varsigma)\right)
\end{equation}
for $1 \ll \epsilon^{-1} \ll t$ and all $\varsigma$ close enough to the mean,
for some limiting rate function~$e_*$. A key point here is that the limit~$e_*$ is easily numerically accessible through simple matrix equations and that, analytically, many properties can be deduced from the behaviour of the noiseless dynamics near the critical points of~$V$. However, this analysis of the vanishing-noise limit requires extra conditions on the behaviour of the vector fields near the critical points of~$V$; see Assumption~\textnormal{(ND)}.

\paragraph*{Acknowledgements.}
The work presented in this article was done while the author was a student at McGill University (Dept.\ of Mathematics and Statistics) and at Univ.\ Grenoble Alpes (Institut Fourier).
During this period, the research of the author was partially supported by the Natural Sciences and Engineering Research Council of Canada and by the {Agence Nationale de la Recherche} through the grant NonStops (ANR-17-CE40-0006). The author wishes to thank Vojkan Jak\v{s}i\'c and Armen Shirikyan for guidance through the early stages of this project, as well as No\'e Cuneo, Alain Joye and Claude-Alain Pillet for comments on earlier versions of this work.

\section{Setup, definitions and preliminary results}
\label{sec:setup}

We study a stochastic differential equation (\textsc{sde}) in~$\rr^N$ of the form
\begin{equation*}
	\d X_s = -\nabla V(X_s) \d s + b(X_s) \d s + \sqrt{2\epsilon} \d W_s,
\end{equation*}
where~$V$ is a coercive Morse function and~$b$ is a nonconservative vector field vanishing at the critical points of~$V$, and the log-likelihood ratio~\eqref{eq:LLR} between the corresponding path measure and its time reversal. We will explicitly keep track of the dependence on the initial condition and on~$\epsilon$ as superscripts for relevant quantities.

\begin{remark}\label{rem:orth-decomp}
	There is some freedom in decomposing a deterministic drift in the form~$-\nabla V + b$. Because this drift may already be provided in a given such decomposition coming from a physical context, we facilitate the verification of our hypotheses by avoiding making the assumption that this decomposition is in any sense canonical.
\end{remark}

\subsection{Assumptions on the equation and immediate consequences}

Throughout the paper, $N \geq 2$ is a fixed natural number and the $N$-dimensional euclidean space~$\rr^N$ is equipped with the standard inner product~$\braket{\,\cdot\,,\cdot\,}$. Let $V : \rr^N \to \rr$ be a fixed function of class~$C^3$ and $b : \rr^N \to \rr^N$ a fixed globally Lipschitz vector field of class~$C^2$.
We introduce the following assumptions.
\begin{description}
	\item[Assumption (L0).] There exists a positive-definite matrix~$H_0$ and a constant~$K_0$ such that
	\[
		\braket{\nabla V (x),H_0 x} \geq |x|^2 - K_0
	\]
	for all~$x \in \rr^N$ and the function $x \mapsto |\nabla V(x)|^2 - a \|D^2 V(x)\|$ is bounded  below for all values of~$a \in \rr$.
	\item[Assumption (L1).] There exists a positive-definite matrix $H_b$ and a constant~$K_b$ such that
	\begin{equation*}
		\braket{\nabla V(x) - b(x), H_b x} \geq |x|^2 - K_b
		\label{eq:coe}
	\end{equation*}
	for all~$x \in \rr^N$.
	\item[Assumption (RB).] There exist constants $h_b \in (0,\infty)$ and $k_b \in [0,\tfrac 12)$ such that
	\begin{equation*}
		\braket{b(x),\nabla V(x)} \leq k_b|\nabla V(x)|^2
	\qquad
	\textnormal{ and }
	\qquad
		|b(x)|^2 \leq h_b |\nabla V(x)|^2
	\end{equation*}
	for all~$x \in \rr^N$.
	\item[Assumption (ND).] The critical points of~$V$ form a finite set~$\{x_j\}_{j=1}^m$ and
	\begin{equation*}
		\det D^2V|_{x_j} \neq 0
	\end{equation*}
	for each~$j=1,\dotsc,m$.
\end{description}

\medskip

Assumption~\textnormal{(L0)} yields a Lyapunov structure for the ordinary differential equation
\begin{equation*}
	\dot Y = -\nabla V(Y)
\end{equation*}
and Assumption~\textnormal{(L1)} plays the same role for
\begin{equation*}
	\dot X = -\nabla V(X) + b(X).
\end{equation*}
The relative bounds in Assumption~\textnormal{(RB)} guarantee that these two deterministic dynamics have the same fixed points, which form a finite set~$\{x_j\}_{j \in \nn}$ and are all nondegenerate by~\textnormal{(ND)}.
The regularity assumptions on~$V$ and~$b$ are made for simplicity of technical estimates and can be relaxed if necessary. For example, in the case of~$V$, class~$C^2$ with~$x \mapsto D^2V|_{x}$ locally H\"older continuous would only require minor changes to the proofs in Section~\ref{app:semiclass}.

The Lyapunov structure for the ordinary differential equations guarantees the existence and uniqueness of the solutions to the \textsc{sde}s
\begin{equation}
\left\{
\begin{array}{r l}
	\d Y^{y,\epsilon}_s &\!= -\nabla V(Y^{y,\epsilon}_s) \d s + \sqrt{2\epsilon} \d W_s, \\
	Y^{y,\epsilon}_0 &\!= y,
\end{array}
\right.
\label{eq:SDE-Y}
\end{equation}
and
\begin{equation}
\left\{
\begin{array}{r l}
	\d X^{x,\epsilon}_s &\!= -\nabla V(X^{x,\epsilon}_s) \d s + b(X^{x,\epsilon}_s) \d s + \sqrt{2\epsilon} \d W_s, \\
	X^{x,\epsilon}_0 &\!= x.
\end{array}
\right.
\label{eq:SDE}
\end{equation}
The study of~\eqref{eq:SDE} is intimately related to partial differential equations involving
\[
	\Lambda^{\epsilon,0} := \epsilon\upDelta + \braket{-\nabla V + b, \nabla}
\]
and its adjoint. We refer to~\cite[\S{3.6}]{Has} 
for the general basic relations and to Appendix~\ref{app:semig} for more precise technical properties of~$\Lambda^{\epsilon,0}$ in this specific case. {Basic probabilistic properties of the solutions of~\eqref{eq:SDE} are provided in Appendix~\ref{app:cons}.}
{In particular,} there exists a unique stationary measure~$\lambda^\epsilon_\textnormal{inv}$ for~\eqref{eq:SDE};~$\mu_0^\epsilon$ for~\eqref{eq:SDE-Y}. Both~$\lambda^\epsilon_\textnormal{inv}$ and~$\mu_0^\epsilon$ possess positive continuous densities with respect to the Lebesgue measure on~$\rr^N$, denoted~``vol'' hereafter. Whenever we write ``almost everywhere'' or ``almost all'' without specifying the measure, it is with respect to any of those equivalent measures. While we do not have a general explicit formula for the density of~$\lambda^\epsilon_\textnormal{inv}$\,---\,decay and regularity are discussed in Appendix~\ref{app:cons}\,---, we have
\begin{equation}
	\label{eq:eq-meas}
	\mu_0^\epsilon(E) = \frac{\int_{E} \Exp{-\epsilon^{-1}V(x)} \d x}{\int_{\rr^N} \Exp{-\epsilon^{-1}V(y)} \d y}
\end{equation}
for all Borel subsets~$E$ of~$\rr^N$.

\subsection{Time reversal and the canonical entropy production functional}

Throughout the paper, we use the shorthand~$\cCt$ for the space~$C([0,t];\rr^N)$ of continuous paths in~$\rr^N$ over the time interval~$[0,t]$. It is always equipped with the supremum norm~$\|\,\cdot\,\|_\infty$; the corresponding Borel~$\sigma$-algebra is denoted~$\cB_t$.

We denote the distribution of~$(X^{x,\epsilon}_s)_{0 \leq s \leq t}$ in~\eqref{eq:SDE} by~$\cP_t^{x,\epsilon}$. This is a measure on~$(\cCt,\cB_t)$. With a slight abuse of notation,
we define~$\cP_t^{\lambda,\epsilon}$ as the analogous object but with random initial condition~$X_0^{\lambda,\epsilon} \sim \lambda$ (independent of~$W$) for a probability measure~$\lambda$ on~$\rr^N$. In other words,~$\cP_t^{\lambda,\epsilon}$ is the unique Borel measure on~$\cCt$ such that
\begin{equation}\label{eq:disint}
  \int_{\cCt} H(\gamma) \,\cP_t^{\lambda,\epsilon}(\d\gamma)  = \int_{\rr^N} \bigg( \int_{\cCt} H(\gamma)\, \cP_t^{x,\epsilon}(\d \gamma)\bigg) \, \lambda(\d x)
\end{equation}
for any nonnegative measurable function $H : \cCt \to \rr$.

The measures~$\mathcal{Q}_t^{x,\epsilon}$ and~$\mathcal{Q}_t^{\lambda,\epsilon}$ are defined analogously using~\eqref{eq:SDE-Y} {i.e.}\ the case~$b \equiv 0$. We have mentioned in the Introduction that~$\mathcal{Q}_t^{\lambda,\epsilon}$ is invariant under time reversal if and only if~$\lambda$ possesses a density proportional to~$\exp(-\epsilon^{-1}V)$. This is made more precise by the following identity:
\begin{equation}
\label{eq:change-ic}
	\log \frac{\d\mathcal{Q}_t^{\lambda,\epsilon}}{\d(\mathcal{Q}_t^{\lambda,\epsilon} \circ \Theta_t)}(\gamma) = \log \frac{\d\lambda}{\d\mu_0^\epsilon}(\gamma(0))
	-
	\log \frac{\d\lambda}{\d\mu_0^\epsilon}(\gamma(t))
\end{equation}
for $\mathcal{Q}_t^{\lambda,\epsilon}$-almost all~$\gamma \in \cCt$,
where $\pi_s : \cCt \to \rr^N$ is evaluation map $\gamma \mapsto \gamma(s)$, and where  \emph{time reversal} is the unique involution $\Theta_t : \cCt \to \cCt$ determined by the relation
$
	\pi_s \circ \Theta_t = \pi_{t-s}
	.
$
{We provide a proof of this identity in Appendix~\ref{app:cons}.}

The behaviour of~$\cP_t^{\lambda,\epsilon}$ under the time reversal~$\Theta_t$ is in general more subtle and, consistently with the intuition from thermodynamics, the dependence of the Radon--Nikodym derivative is not limited to the initial and final conditions of the path. The proof we give of the proposition below uses comparison with~$\mathcal{Q}_t^{\lambda,\epsilon}$, as in~e.g.~\cite[\S{5.4}]{JPS17}. Another possible route is to use the results of~\cite{HP86} on the reversal of~$(X_{s}^{\lambda,\epsilon})_{s \geq 0}$.

\begin{proposition}
\label{prop:RN}
	Under Assumptions~\textnormal{(L0)} and~\textnormal{(L1)}, if~$\lambda$ and the Lebesgue measure are mutually absolutely continuous, then~$\cP_t^{\lambda,\epsilon}$ and~$\cP_t^{\lambda,\epsilon} \circ \Theta^{-1}$ are mutually absolutely continuous and
	\begin{equation}
		\log \frac{\d\cP_t^{\lambda,\epsilon}}{\d(\cP_t^{\lambda,\epsilon} \circ \Theta_t^{-1})}(\gamma)
		=
		\log \frac{\d\lambda}{\d\mu_0^\epsilon}( \gamma(0))
		-
		\log \frac{\d\lambda}{\d\mu_0^\epsilon}( \gamma(t))
		+
		\frac{1}{\epsilon} \int \braket{b(\gamma),\circ\d\gamma}
	\label{eq:RN}
	\end{equation}
	for $\cP_t^{\lambda,\epsilon}$-almost all~$\gamma \in \cCt$, where~$\mu_0^\epsilon$ is defined by~\eqref{eq:eq-meas}.
  Moreover,
  \begin{equation}
  \label{eq:ES}
    \int_{\cCt} \bigg(\frac{\d\cP_t^{\lambda,\epsilon}}{\d(\cP_t^{\lambda,\epsilon} \circ \Theta_t^{-1})}(\gamma)\bigg)^\alpha \, \cP_t^{\lambda,\epsilon}(\d\gamma)
    = \int_{\cCt} \bigg(\frac{\d\cP_t^{\lambda,\epsilon}}{\d(\cP_t^{\lambda,\epsilon} \circ \Theta_t^{-1})}(\gamma)\bigg)^{1-\alpha} \, \cP_t^{\lambda,\epsilon}(\d\gamma)
  \end{equation}
  for all~$\alpha \in \rr$ for which both sides are finite. Both sides of~\eqref{eq:ES} are log-convex in~$\alpha$.
\end{proposition}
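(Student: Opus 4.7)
The plan is to take $\mathcal{Q}_t^{\lambda,\epsilon}$ (the path measure of the gradient dynamics~\eqref{eq:SDE-Y}) as a reference, since its time-reversal behaviour is already pinned down by Lemma~\ref{lem:change-ic}. The starting point is the chain-rule factorisation
\[
  \frac{\d\cP_t^{\lambda,\epsilon}}{\d(\cP_t^{\lambda,\epsilon}\circ\Theta_t^{-1})}
  = \frac{\d\cP_t^{\lambda,\epsilon}/\d\mathcal{Q}_t^{\lambda,\epsilon}}{\d(\cP_t^{\lambda,\epsilon}\circ\Theta_t^{-1})/\d(\mathcal{Q}_t^{\lambda,\epsilon}\circ\Theta_t^{-1})}\cdot\frac{\d\mathcal{Q}_t^{\lambda,\epsilon}}{\d(\mathcal{Q}_t^{\lambda,\epsilon}\circ\Theta_t^{-1})},
\]
in which the last factor comes directly from Lemma~\ref{lem:change-ic} and supplies the two boundary terms in~\eqref{eq:RN}. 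For the first factor I will apply Girsanov's theorem to express $\d\cP_t^{\lambda,\epsilon}/\d\mathcal{Q}_t^{\lambda,\epsilon}$ as the exponential of an It\^o stochastic integral of~$b$ against~$\d\gamma$ plus Lebesgue integrals of~$\braket{b,\nabla V}$ and~$|b|^2$; the middle factor is the pushforward of the first by the involution~$\Theta_t$, so it is obtained from the same expression by the substitution $\gamma\mapsto\Theta_t\gamma$.

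The main calculation will then be converting the It\^o integral $\int_0^t\braket{b(\gamma_s),\d\gamma_s}$ into the Stratonovich integral $\int_0^t\braket{b(\gamma_s),\circ\d\gamma_s}$. The resulting It\^o--Stratonovich correction, equal to $\epsilon\int_0^t\operatorname{div} b(\gamma_s)\,\d s$, is $\Theta_t$-invariant, as are the Lebesgue integrals of~$\braket{b,\nabla V}$ and~$|b|^2$; only the Stratonovich integral is $\Theta_t$-odd. Taking the ratio of the two Girsanov factors thus cancels every $\Theta_t$-invariant contribution and leaves exactly $\frac{1}{\epsilon}\int\braket{b(\gamma),\circ\d\gamma}$, which combined with the boundary terms from Lemma~\ref{lem:change-ic} gives~\eqref{eq:RN}. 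The place I expect to need the most care is verifying a Novikov-type integrability condition so that Girsanov's theorem applies, for which Assumption~\textnormal{(RB)} together with the dissipation estimate of Lemma~\ref{lem:dissipation-with-sup} should give control of $\int_0^t|b(\gamma_s)|^2\,\d s$.

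For~\eqref{eq:ES}, write $f=\d\cP_t^{\lambda,\epsilon}/\d(\cP_t^{\lambda,\epsilon}\circ\Theta_t^{-1})$. Since $\Theta_t$ is an involution, a general pushforward argument yields $f\circ\Theta_t=1/f$ almost surely; this is also visible directly from~\eqref{eq:RN}, since $\gamma(0)\leftrightarrow\gamma(t)$ swaps the boundary terms while the Stratonovich integral reverses sign. Starting from $\int f^\alpha\,\d\cP_t^{\lambda,\epsilon}$, I rewrite it as $\int f^{\alpha+1}\,\d(\cP_t^{\lambda,\epsilon}\circ\Theta_t^{-1})$ via $\d\cP_t^{\lambda,\epsilon}=f\,\d(\cP_t^{\lambda,\epsilon}\circ\Theta_t^{-1})$, use the pushforward change of variables $\int h\,\d(\cP_t^{\lambda,\epsilon}\circ\Theta_t^{-1})=\int h\circ\Theta_t\,\d\cP_t^{\lambda,\epsilon}$, and then substitute $f\circ\Theta_t=1/f$ to match the companion integral on the right of~\eqref{eq:ES}, whenever both sides are finite. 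Log-convexity of each side in~$\alpha$ is then an immediate consequence of H\"older's inequality applied to the positive function~$f$.
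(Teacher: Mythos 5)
Your overall route is the same as the paper's: compare $\cP_t^{\lambda,\epsilon}$ with the reference measure $\mathcal{Q}_t^{\lambda,\epsilon}$ via Girsanov, import the boundary terms from Lemma~\ref{lem:change-ic}, and observe that in the ratio $Z_t(\gamma)/Z_t(\Theta_t\gamma)$ all $\Theta_t$-even contributions (the Lebesgue integrals of $|b|^2$ and $\braket{b,\nabla V}$ and the It\^o--Stratonovich correction $\epsilon\int\operatorname{div}b$) cancel, leaving only the $\Theta_t$-odd Stratonovich integral. Your treatment of~\eqref{eq:ES} (involution plus change of variables, H\"older for log-convexity) is exactly the paper's.

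The genuine gap is in how you propose to justify Girsanov for the \emph{unbounded} drift $b$. You plan to verify a Novikov condition directly, using Assumption~\textnormal{(RB)} and Lemma~\ref{lem:dissipation-with-sup} to ``control $\int_0^t|b(\gamma_s)|^2\,\d s$''. Novikov requires the finiteness of an \emph{exponential} moment,
\[
  \ee\,\Exp{\tfrac{1}{4\epsilon}\int_0^t|b(X_s^{\lambda,\epsilon})|^2\,\d s}<\infty,
\]
and with $b$ globally Lipschitz (hence possibly of linear growth) the integrand grows like $|X_s|^2$. Lemma~\ref{lem:dissipation-with-sup} only gives the Chebyshev-type bound $\pp\{\sup_{s\le t}|X_s|\ge R\}\lesssim R^{-2}\Exp{t}$, i.e.\ a second moment of the running supremum; this says nothing about exponential integrability of $\int_0^t|X_s|^2\,\d s$, and indeed for linear $b$ and $h_b$ large that exponential moment can be infinite for $t$ beyond a finite threshold. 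So the step ``Novikov holds by~\textnormal{(RB)} and Lemma~\ref{lem:dissipation-with-sup}'' does not go through; you would need either genuinely Gaussian-type tail estimates on $\sup_s|X_s|$ combined with a subdivision of $[0,t]$, or a localisation of Girsanov by stopping times. The paper sidesteps this entirely with a preliminary truncation step: it first proves the proposition for compactly supported $b_R$ (where Novikov is trivial since $b_R$ is bounded), notes that $\cP_t^\lambda[R]$ and $\cP_t^\lambda$ agree on the $\Theta_t$-invariant ball $B_R\subset\cCt$, and then lets $R\to\infty$ using Lemma~\ref{lem:dissipation-with-sup} only to show $\cP_t^\lambda(B_R)\to1$ --- a purpose for which the polynomial tail bound \emph{is} sufficient. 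Your proof needs either this truncation layer or a substantially stronger integrability estimate than the one you cite. A secondary, more minor point: the identification of the time-reversed It\^o integral $\int\braket{b(\Theta_t\gamma),\d W(\Theta_t\gamma)}$ with the forward one minus twice the (rescaled) Stratonovich integral is asserted rather than proved; the paper establishes it through the explicit formula~\eqref{eq:BM-as-fn-of-path-} and a discretisation along random partitions, and some such argument should be supplied.
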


Before we proceed with the proof, let us briefly clarify the meaning of the expression~\eqref{eq:RN}. On the canonical probability space~$(\cCt, \mathcal{B}_t, \cP_t^{\lambda,\epsilon})$, the random variable $W = (W_s)_{s\geq 0}$ defined by
\begin{equation}\label{eq:BM-as-fn-of-path-}
  W_s(\gamma) := \frac{1}{\sqrt{2\epsilon}} \Big( \gamma(s) - \gamma(0) - \int_0^s \big(-\nabla V(\gamma(r)) + b(\gamma(r))\big) \d r \Big)
\end{equation}
is a Brownian motion, the evaluation map~$\pi_0$ has distribution~$\lambda$ and is independent of~$W$, and the canonical process~$(\pi_s)_{s \geq 0}$ is the unique solution to the~\textsc{sde}~\eqref{eq:SDE} with initial condition~$\lambda$. Hence,~$(\pi_s)_{s \geq 0}$ is a continuous semimartingale and we allow ourselves notational shortcuts such as
\[
\int \braket{b(\gamma), \circ \d\gamma} := \Big(\int_0^t \braket{b(\pi_s),\circ \d \pi_s}\Big)(\gamma)
\]
and
\[
\int \braket{b(\gamma),\d W(\gamma)} := \Big(\int_0^t \braket{b(\pi_s),\d W_s}\Big)(\gamma),
\]
where the right-hand sides are defined $\cP_t^{\lambda,\epsilon}$-almost surely according to the usual theory of Stratonovich and It\^o stochastic integration with respect to continuous semimartingales~\cite[\S{II.4--II.7}]{Pro}.

\begin{proof}[Proof of Proposition~\ref{prop:RN}]
  Throughout the proof, we omit keeping explicitly track of the dependence on~$\epsilon$ in the notation. We first reduce the general case to the technically easier case where the nonconservative vector field has compact support. For the latter, we suppose that the reader is familiar with Girsanov's theorem and related criteria; see e.g.~\cite[\S{III.8}]{Pro}

	Once mutual absolute continuity is proved, the symmetry expressed in~\eqref{eq:ES} is an immediate consequence of the definition of the Radon--Nikodym derivative and the fact that~$\Theta_t$ is an involution. Log-convexity is a consequence of H\"older's inequality.
  \begin{description}
    \item[Step 1: Reduction to the case where~$b$ has compact support.] Suppose that the proposition has been proved in the case where~$b$ has compact support.
    For~$R \in \nn$, pick a globally Lipschitz vector field~$b_{R}$ satisfying~$|b_{R}(x)| \leq |b(x)|$ for all~$|x| \in \rr^N$, $b_{R}(x) = b(x)$ whenever~$|x| \leq {R-1}$, and~$b_{R}(x) = 0$ whenever~$|x| \geq R$.

  	Let $\cP^{\lambda}_t[R]$ be the path measure associated to the~\textsc{sde} with initial condition~$\lambda$ and drift~$b_{R}$, and let~$B_{R}$ denote the centered open ball of radius~$R$ in~$\cCt$. Observe that~$B_R$ is invariant under~$\Theta_t$ and that
  	\[
  		\cP_t^{\lambda}[R](\Gamma \cap B_R) = \cP_t^{\lambda}(\Gamma \cap B_R)
  	\]
  	for all Borel sets~$\Gamma \subseteq \cCt$; see e.g.~the construction in~\cite[\S{3.4}]{Has}. Hence, by hypothesis,~$(\cP_t^{\lambda} \circ \Theta_t^{-1}) (B_R \cap \,\cdot\,)$ is absolutely continuous with respect to~$\cP^{\lambda}_t(B_R \cap \,\cdot\,)$ and
  	\begin{equation*}
  		F_R(\gamma)
  		:=
  		\one_{B_R}(\gamma) \frac{\d\lambda}{\d\mu_0}(\gamma(0))\frac{\d\mu_0}{\d\lambda}(\gamma(t))
  		\exp\Big(\frac{1}{\epsilon} \int \braket{b_R(\gamma),\circ\d\gamma}\Big)
  	\end{equation*}
  	is a Radon--Nikodym derivative.
  	Because,
  	\[
  		\lim_{R \to \infty} \cP_t^{\lambda}(B_R) = 1
  	\]
  	by Lemma~\ref{lem:dissipation-with-sup}, we can deduce that $\cP_t^{\lambda}  \circ \Theta_t$ is absolutely continuous with respect to~$\cP_t^{\lambda}$, with a Radon--Nikodym derivative
  	\[
  		F(\gamma) := \lim_{R \to \infty} F_R(\gamma).
  	\]

  	The fact that $F(\gamma)$ is strictly positive  and equals the right-hand side of~\eqref{eq:RN} follows from basic properties of the exponential, the absolute-continuity assumption and the fact that
  	\[
  		\lim_{R \to \infty}  \int \braket{b_R(\gamma),\circ\d\gamma} =  \int \braket{b(\gamma),\circ\d\gamma}
  	\]
  	for all~$\gamma \in \cCt$ (given~$\gamma$, take $R > \|\gamma\|_\infty$). \\

    \item[Step 2: Proof in the case where~$b$ has compact support]\hspace{-.5em}.
      \begin{description}
        \item[Step 2a: Comparing~$\cP_t^{\lambda,\epsilon}$ and~$\cQ_t^{\lambda,\epsilon}$.]
        Because~$b$ is bounded, Novikov's condition is satisfied and the process~$(Z_s)_{s \in [0,t]}$ defined by the Dol\'eans-Dade exponential
      	\begin{equation}
      		Z_s(\gamma) := \exp\Big(\frac{1}{\sqrt{2\epsilon}} \int_0^s \braket{b(\gamma),\d W(\gamma)} - \frac{1}{4 \epsilon}\int_0^s |b(X^{\lambda,\epsilon}_r(\gamma))|^2\d r \Big)
      	\label{eq:def-Z-mart}
      	\end{equation}
      	is a martingale. Hence, by Girsanov's theorem, the process~$(w_s)_{s \in [0,t]}$ defined by
      	\begin{equation}
      		w_s(\gamma) := W_s(\gamma) - \frac{1}{\sqrt{2\epsilon}}\int_0^{s} b(\gamma(r)) \d r
      		\label{eq:Girs-chg}
      	\end{equation}
      	is a Brownian motion with respect to the measure
      	$
      		\d P_t = Z_t \d \mathcal{Q}_t^{\lambda,\epsilon}
      	$.
      	Substituting~\eqref{eq:Girs-chg} into~\eqref{eq:SDE-Y} and comparing with~\eqref{eq:SDE}, we deduce that~$P_t = \cP_t^{\lambda,\epsilon}$, that is
        \begin{equation*}
          \frac{\d \cP_t^{\lambda,\epsilon}}{\d \mathcal{Q}_t^{\lambda,\epsilon}}(\gamma) = Z_t(\gamma) > 0.
        \end{equation*}
        \item[Step 2b: Comparing~$\cP_t^{\lambda,\epsilon}$ and~$\cP_t^{\lambda,\epsilon} \circ \Theta_t^{-1}$.] Combining~\eqref{eq:change-ic} and Step~2a, we have
        \begin{equation}
        \label{eq:RNd-almost-done}
          \frac{\d \cP_t^{\lambda,\epsilon}}{\d \cP_t^{\lambda,\epsilon} \circ \Theta_t^{-1}}(\gamma) =
          {Z_t(\gamma)} \frac{\d\lambda}{\d\mu_0^\epsilon}(\pi_0\gamma)\frac{\d\mu_0^\epsilon}{\d\lambda}(\pi_t\gamma) \frac{1}{Z_t(\Theta_t\gamma)}.
        \end{equation}
        Using the identity
        \begin{align*}
        	W_s(\Theta_t \gamma) 	&=  -(W_{t-s}(\gamma) - W_t(\gamma)) + \frac{2(\pi_s\Theta_t\gamma - \pi_t\gamma)}{\sqrt{2\epsilon}}
        \end{align*}
        following from~\eqref{eq:BM-as-fn-of-path-} in a sequence of approximations of $Z_t(\Theta_t \gamma)$ by discretisation of~$(b \circ \pi_s)_{s \in [0,t]}$ using a random partition of~$[0,t]$ tending to the identity in the sense of~\cite[\S{II.5}]{Pro}, we find
        \begin{align*}
      		\int \braket{ b(\Theta_t\gamma),\d W(\Theta_t\gamma)} &= \int \braket{b(\gamma),\d W(\gamma)}
            - \frac{2}{\sqrt{2\epsilon}} \int \braket{b(\gamma), \circ \d \gamma}.
      	\end{align*}
        Therefore,
        \[
          \log \frac{Z_t(\gamma)}{Z_t(\Theta_t\gamma)} = \frac{1}{\epsilon} \int \braket{b(\gamma), \circ \d \gamma}
        \]
        and taking the logarithm of~\eqref{eq:RNd-almost-done} gives the proposed formula. \qedhere
      \end{description}
  \end{description}
\end{proof}

The logarithm of the Radon--Nikodym derivative in Proposition~\ref{prop:RN} is called the \emph{canonical entropy production functional} in~\cite{JPS17}. We note the following immediate corollary of Proposition~\ref{prop:RN}, the explicit formula~\eqref{eq:eq-meas} for~$\mu_0^\epsilon$ and well-known properties of Stratonovich integrals; see e.g.~\cite[\S{V.5}]{Pro}.

\begin{corollary}
  Under Assumption~\textnormal{(L0)} and~\textnormal{(L1)}, if~$\lambda$ and the Lebesgue measure are mutually absolutely continuous, then~$\cP_t^{\lambda,\epsilon}$ and~$\cP_t^{\lambda,\epsilon} \circ \Theta_t^{-1}$ are mutually absolutely continuous and
  \begin{equation*}
		\log \frac{\d\cP_t^{\lambda,\epsilon}}{\d(\cP_t^{\lambda,\epsilon} \circ \Theta_t^{-1})}(\gamma)
		=
		\log \frac{\d\lambda}{\d\vol}(\gamma(0)) - \log \frac{\d\lambda}{\d\vol}(\gamma(t))
		+ \frac{1}{\epsilon} \int \braket{-\nabla V(\gamma) + b(\gamma),\circ\d\gamma}
	\end{equation*}
  for $\cP_t^{\lambda,\epsilon}$-almost all~$\gamma \in \cCt$.
\end{corollary}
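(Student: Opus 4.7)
The plan is to start from Proposition~\ref{prop:RN} and rewrite the two boundary Radon--Nikodym terms with respect to~$\vol$ instead of~$\mu_0^\epsilon$, while transforming the constant difference~$\epsilon^{-1}(V(\gamma(0))-V(\gamma(t)))$ that this change of reference introduces into a Stratonovich integral. Mutual absolute continuity is already given by Proposition~\ref{prop:RN}, so only the formula remains to be established.

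By the explicit expression~\eqref{eq:eq-meas}, the density $\d\mu_0^\epsilon/\d\vol$ is proportional to~$\Exp{-\epsilon^{-1}V}$ with a normalising constant~$Z^\epsilon$ that does not depend on the point. Hence, for any~$x \in \rr^N$,
\[
    \log \frac{\d\lambda}{\d\mu_0^\epsilon}(x) = \log \frac{\d\lambda}{\d\vol}(x) + \epsilon^{-1} V(x) + \log Z^\epsilon,
\]
and subtracting the value at~$\gamma(t)$ from the value at~$\gamma(0)$ eliminates~$\log Z^\epsilon$. Plugging this into~\eqref{eq:RN} yields
\[
    \log \frac{\d\cP_t^{\lambda,\epsilon}}{\d(\cP_t^{\lambda,\epsilon} \circ \Theta_t^{-1})}(\gamma)
    = \log \frac{\d\lambda}{\d\vol}(\gamma(0)) - \log \frac{\d\lambda}{\d\vol}(\gamma(t))
    + \epsilon^{-1}\bigl(V(\gamma(0)) - V(\gamma(t))\bigr) + \epsilon^{-1}\int \braket{b(\gamma), \circ \d\gamma}.
\]

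To conclude, it suffices to express the difference $V(\gamma(t)) - V(\gamma(0))$ as a Stratonovich integral. Under~$\cP_t^{\lambda,\epsilon}$, the canonical process $(\pi_s)_{s\in[0,t]}$ is a continuous semimartingale and~$V$ is of class~$C^3$, so the Stratonovich chain rule (see e.g.~\cite[\S{V.5}]{Pro}) gives
\[
    V(\gamma(t)) - V(\gamma(0)) = \int_0^t \braket{\nabla V(\gamma(s)), \circ \d\gamma(s)}
\]
for $\cP_t^{\lambda,\epsilon}$-almost every~$\gamma$. Substituting this identity into the previous display and combining the two Stratonovich integrals yields exactly the claimed formula. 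The only conceivable obstacle is ensuring that the Stratonovich chain rule applies to~$V$ despite its possible lack of compact support, but this is guaranteed by the $C^3$ regularity assumption together with the fact that the formula is a purely local statement on the semimartingale~$\gamma$.
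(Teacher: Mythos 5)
Your proof is correct and follows exactly the route the paper indicates for this corollary (which it leaves as an immediate consequence of Proposition~\ref{prop:RN}, the formula~\eqref{eq:eq-meas} for~$\mu_0^\epsilon$, and the Stratonovich chain rule from~\cite[\S{V.5}]{Pro}): substitute $\log\frac{\d\lambda}{\d\mu_0^\epsilon} = \log\frac{\d\lambda}{\d\vol} + \epsilon^{-1}V + \text{const}$ into~\eqref{eq:RN} and convert $V(\gamma(0))-V(\gamma(t))$ into $-\int\braket{\nabla V(\gamma),\circ\d\gamma}$. The signs work out and the $C^3$ regularity of~$V$ indeed suffices for the chain rule, so nothing is missing.
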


\section{Generalised entropy production functionals}
\label{sec:EPF}

Motivated by the structure revealed in the previous section and by~\cite{Ku98,LS99,CVZ03,MNV03,JPS17}, we introduce a family of entropy production functionals parametrised by the choice of boundary terms. Throughout the remainder of the paper, the function~$g : \rr^N \to (0,\infty)$ is continuous, and the initial condition~$\lambda$ is absolutely continuous with respect to the Lebesgue measure and has finite second moment.

\subsection{Definition and the weak law of large numbers}

\begin{definition}
	The \emph{entropy production functional} associated to the function~$g$ is the function~$\ss_t^\epsilon$ defined by
	\begin{equation}
	\label{eq:def-EPF}
		\ss_t^\epsilon(\gamma) := \log g(\gamma(0)) - \log g(\gamma(t)) + \frac{1}{\epsilon} \int \braket{b(\gamma),\circ\d\gamma},
	\end{equation}
	considered as a random variable on~$\cCt$ with respect to the probability measure~$\cP_t^{\lambda,\epsilon}$. For~$\alpha \in \rr$, we use
	\[
		\chi_t^\epsilon(\alpha) :=
		\int_{\cCt} \Exp{-\alpha \ss_t^\epsilon }\d\cP^{\lambda,\epsilon}_t
	\]
	for the \textsc{mgf} of~$\ss_t^\epsilon$ in~$\alpha$. We speak of a \emph{steady-state} functional if the initial condition~$\lambda$ entering the definition of~$\cP^{\lambda,\epsilon}_t$ equals the invariant measure~$\lambda_{\textnormal{inv}}^\epsilon$.
\end{definition}

\begin{remark}
  The choice of~$-\alpha$ in the exponent is common in the physics literature and is made here to facilitate the identification of certain symmetries. Indeed, the symmetry noted in Proposition~\ref{prop:RN} can be used to deduce $\chi_t^\epsilon(1-\alpha) = \chi_t^\epsilon(\alpha)$ in the case~$g = \d\lambda/\d\mu_0^\epsilon$.
  However, this symmetry at finite~$t$ is not expected to hold for a generic choice of~$g$. This choice also has an incidence on our choice of sign for the Legendre transform in Sections~\ref{sec:large-t} to~\ref{sec:small-e}.
\end{remark}

We are mainly interested in the large deviations of~$\ss_t^\epsilon$ as~$t \to \infty$ and then~$\epsilon \to 0$. To tackle this problem, we will need additional assumptions on the behaviour of the boundary term~$g$ at infinity. Before we do so, let us state and prove a weak law of large numbers which holds under minimal assumptions on the decay of the boundary terms.

\begin{proposition}\label{prop:lln-t}
	Suppose that Assumptions~\textnormal{(L0)} and~\textnormal{(L1)} are satisfied and let
	\begin{equation}\label{eq:moy}
		\mathfrak{m}^\epsilon := \int_{\rr^N} \big(\epsilon^{-1}|b|^2 - \epsilon^{-1}\braket{b,\nabla V}  \big) \d\lambda_\textnormal{inv}^\epsilon + \int_{\rr^N}\operatorname{div} b \, \d\lambda_\textnormal{inv}^\epsilon.
	\end{equation}
	Then, for all~$\delta > 0$,
	\begin{equation}\label{eq:lln}
		\lim_{t \to \infty} \cP_t^{\lambda,\epsilon} \Big\{ \Big| \frac 1t \ss^\epsilon_t - \mathfrak{m}^\epsilon \Big| > \delta \Big\} = 0.
	\end{equation}
\end{proposition}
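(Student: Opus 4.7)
The plan is to split $\mathcal{S}_t^\epsilon$ into the two boundary evaluations of $\log g$ and the Stratonovich bulk integral, and to treat each separately. Writing
\[
  \frac{1}{t} \mathcal{S}_t^\epsilon(\gamma)
  = \frac{\log g(\gamma(0))}{t} - \frac{\log g(\gamma(t))}{t}
  + \frac{1}{\epsilon t}\int \braket{b(\gamma),\circ\d\gamma},
\]
the first term vanishes in $\cP_t^{\lambda,\epsilon}$-probability because $\pi_0 \sim \lambda$ has a $t$-independent law, so $\log g(\pi_0)$ is a finite random variable independent of $t$. For the second term, I would use Lemma~\ref{lem:dissipation-without-sup} together with the finite-variance hypothesis on $\lambda$ to conclude that $\sup_{t\geq 0}\ee |X_t^{\lambda,\epsilon}|^2 < \infty$, so the family $\{\pi_t\}_{t\geq 0}$ is tight; combined with continuity and positivity of $g$ this yields $t^{-1}\log g(\pi_t) \to 0$ in probability.

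Next I would convert the Stratonovich integral into an It\^o integral plus the correction term. Since $\d\langle X^i, X^j\rangle_s = 2\epsilon\,\delta_{ij}\,\d s$, the standard correction gives
\[
  \frac{1}{\epsilon}\int_0^t \braket{b(X_s),\circ\d X_s}
  = \frac{1}{t}\cdot t \cdot \frac{1}{\epsilon}\bigl(\textnormal{It\^o integral}\bigr) + \int_0^t \operatorname{div} b(X_s)\,\d s,
\]
and plugging in the drift $-\nabla V + b$ yields
\[
  \frac{1}{\epsilon t}\int \braket{b,\circ\d X}
  = \frac{1}{t}\int_0^t \left(\frac{|b|^2 - \braket{b,\nabla V}}{\epsilon} + \operatorname{div} b\right)(X_s)\,\d s
  + \frac{\sqrt{2/\epsilon}}{t}\int_0^t \braket{b(X_s),\d W_s}.
\]
The martingale contribution vanishes in $L^2(\cP_t^{\lambda,\epsilon})$ (and hence in probability) by It\^o isometry: its second moment is bounded by $\tfrac{2}{\epsilon t^2}\int_0^t \ee|b(X_s)|^2\,\d s$, and since $b$ is globally Lipschitz with $b(x_j)=0$ at the finitely many critical points, $|b(x)|^2 \leq C(1+|x|^2)$, so the uniform moment bound from Lemma~\ref{lem:dissipation-without-sup} gives an $O(1/t)$ estimate.

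The main work lies in showing the deterministic time average converges in probability to the spatial average defining $\mathfrak{m}^\epsilon$, i.e.\ $\frac{1}{t}\int_0^t f(X^{\lambda,\epsilon}_s)\,\d s \to \int f \d\lambda_\textnormal{inv}^\epsilon$ for $f := \epsilon^{-1}(|b|^2 - \braket{b,\nabla V}) + \operatorname{div} b$. This is where I expect the main obstacle: $f$ is unbounded (Assumption~(RB) controls $|b|^2$ and $\braket{b,\nabla V}$ only relative to $|\nabla V|^2$), so one cannot directly appeal to the version of the ergodic theorem for bounded continuous observables. My approach would be to combine (i) ergodicity of $X^{\lambda,\epsilon}$ for its unique invariant measure $\lambda^\epsilon_\textnormal{inv}$, which follows from non-degeneracy of the diffusion, strong Feller property, and the Lyapunov structure of Assumption~(L1) (as set up in Appendix~\ref{app:semig}); (ii) integrability $f\in L^1(\lambda^\epsilon_\textnormal{inv})$, obtained from Assumption~(RB), the regularity and decay of the density of $\lambda^\epsilon_\textnormal{inv}$ established in Appendix~\ref{app:semig}, and the $|\nabla V|^2$ control implicit in Assumption~(L0); and (iii) a truncation of $f$ at level $R$ together with uniform integrability coming from the uniform moment bound of Lemma~\ref{lem:dissipation-without-sup}, to transfer the $L^1$-mean ergodic theorem (available from (i)–(ii), e.g.\ via Kha\-smin\-skii's ergodic theorem) to the unbounded observable $f$. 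Finally, to pass from the initial condition $\lambda^\epsilon_\textnormal{inv}$ to the absolutely continuous $\lambda$ with finite variance, I would use convergence of the transition probabilities $\cP_s^{\lambda,\epsilon}(\pi_0 \in \cdot)$ towards $\lambda^\epsilon_\textnormal{inv}$ (again from Appendix~\ref{app:semig}) and the Markov property applied at an intermediate time $s = o(t)$. Summing the four contributions then yields exactly the expression~\eqref{eq:moy} for $\mathfrak{m}^\epsilon$.
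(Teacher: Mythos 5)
Your plan is correct and follows the same decomposition as the paper's proof: split off the two boundary evaluations of $\log g$, convert the Stratonovich integral into a time average of $\epsilon^{-1}(|b|^2-\braket{b,\nabla V})+\operatorname{div} b$ plus an It\^o martingale, kill the boundary terms via tightness of $\pi_0$ and $\pi_t$ (Lemma~\ref{lem:dissipation-with-sup}/\ref{lem:dissipation-without-sup}), kill the martingale via the It\^o isometry and the uniform second-moment bound, and identify the limit of the time average with the spatial average against $\lambda^\epsilon_{\textnormal{inv}}$ by an ergodic theorem.

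Two points of divergence are worth noting. First, for the martingale term the paper splits $[0,t]$ into $[0,\lfloor t\rfloor]$ and $[\lfloor t\rfloor,t]$ and invokes the law of large numbers for discrete-time martingales from Feller on the integer part; your direct Chebyshev--It\^o-isometry bound $\tfrac{2}{\epsilon t^2}\int_0^t \ee|b(X_s)|^2\,\d s = O(1/t)$ handles the whole term at once and is if anything cleaner. Second, for the ergodic step the paper simply cites Khasminskii's law of large numbers (Theorem~4.2 in~\cite[Ch.\,4]{Has}), which already covers observables that are merely integrable with respect to the invariant measure, so no truncation or uniform-integrability argument is needed there; your truncation route would also work but is more labour than necessary. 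One small inaccuracy: you invoke Assumption~\textnormal{(RB)} to control $|b|^2$ and $\braket{b,\nabla V}$, but the proposition assumes only~\textnormal{(L0)} and~\textnormal{(L1)}; the required integrability of the observable against $\lambda^\epsilon_{\textnormal{inv}}$ should instead be drawn from the global Lipschitz property of $b$ and the decay of the invariant density established in Appendix~\ref{app:semig}.
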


\newcommand{\noepsilon}{\phantom{.}}
\newcommand{\cnoepsilon}{\phantom{.}}

\begin{proof}
  We prove the case~$\epsilon = 1$ and leave it to the reader to check the presence of the appropriate factors of~$\epsilon$ in~\eqref{eq:moy}. The Stratonovich integral in the definition~\eqref{eq:def-EPF} of~$t^{-1}\ss^{\noepsilon}_t$ can be decomposed as
  \begin{equation*}
	\begin{split}
     \frac{1}{t}\int_0^t \braket{b(\gamma),\circ\d\gamma}  = \frac{1}{t}\int_0^t \big(|b(\pi_s\gamma)|^2 - \braket{b(\pi_s \gamma), \nabla V(\pi_s\gamma)}+ \noepsilon \operatorname{div}  b(\pi_s\gamma)\big) \d s \qquad
		 \\
		 	+ \frac{\sqrt{2\noepsilon}}{t} \int_0^t \braket{b(\gamma),\d W(\gamma)}.
	\end{split}
  \end{equation*}
  The integral on the first line of the right-hand is admissible for an application of the law of large numbers for continuous functions of~$X_s^{\noepsilon}$\,---\,see e.g.\ Theorem 4.2 in~\cite[Ch.\,4]{Has}\,---, which yields
  \begin{equation*}
    \lim_{t\to\infty} \cP_t^{\lambda} \Big\{ \Big| \frac{1}{t}\int_0^t \big(|b(\pi_s\gamma)|^2 - \braket{b(\pi_s \gamma), \nabla V(\pi_s\gamma)} + \operatorname{div} b(\pi_s\gamma)\big) \d s
  - {\mathfrak{m}} \Big| > \tfrac 15 \delta \Big\} = 0.
	\end{equation*}
  The integral on the second line of the right-hand side is a martingale. For integer times, the hypotheses of the law of large numbers for discrete-time martingales in~\cite[\S{VII.8}]{Fel2} are satisfied thanks to It\^o's isometry and Lemma~\ref{lem:dissipation-without-sup}.
	Hence,
  \[
    \lim_{t\to\infty} \cP_t^{\lambda\cnoepsilon} \Big\{ \Big|\frac{\sqrt{2\noepsilon}}{\lfloor t \rfloor} \int_0^{\lfloor t \rfloor} \braket{b(\gamma),\d W(\gamma)}\Big| > \tfrac 15 \delta \Big\} = 0.
  \]
  By Chebysh\"ev's inequality and It\^o's isometry, we have
  \begin{align*}
    \cP_t^{\lambda\cnoepsilon} \Big\{ \Big| \frac{\sqrt{2\noepsilon}}{\lfloor t \rfloor} \int_{\lfloor t \rfloor}^t \braket{b(\gamma),\d W(\gamma)}\Big| > \tfrac 1 5 \delta \Big\}
    &\leq \frac{50 \noepsilon }{\delta^2 \lfloor t \rfloor^2 } \int_{\cCt} \int_{\lfloor t \rfloor}^t |b(\pi_s\gamma)|^2 \d s \,\cP_t^{\lambda\cnoepsilon}(\d\gamma),
  \end{align*}
  with the double integral on the right-hand side bounded uniformly in~$t$ by Tonelli's theorem, Lemma~\ref{lem:dissipation-without-sup} and the fact that~$b$ is globally Lipschitz.

  As for the boundary terms in the definition of~$t^{-1}\ss^{\noepsilon}_t$, we note that it is no loss of generality to assume that $g(0) = 1$. Then, by positivity and continuity of~$g$, there exists a monotone family~$(R_{M})_{M > 0}$ of radii properly diverging to~$+ \infty$ with~$M$ such that
	\begin{equation*}
		g^{-1}\big([\Exp{-M/5},\Exp{M/5}]\big) \supseteq \{x \in \rr^N : |x| \leq R_M\}.
	\end{equation*}
	Using this inclusion with $M=t\delta$,
  \begin{align*}
    \cP_t^{\lambda\cnoepsilon}\{|t^{-1} \log g (\pi_0\gamma)| > \tfrac 1 5 \delta \}
		&\leq \cP_t^{\lambda\cnoepsilon}\{|\pi_0\gamma| \geq R_{t\delta} \}
    = \lambda \{x \in \rr^N : |x| \geq  R_{t\delta} \}
  \end{align*}
  converges to~$0$ as~$t \to \infty$ because~$R_{t\delta} \to \infty$ and the initial condition~$\lambda$ is a probability measure. Using the same inclusion, Chebysh\"ev's inequality and Lemma~\ref{lem:dissipation-without-sup},
  \begin{align*}
    \cP_t^{\lambda\cnoepsilon}\{|t^{-1} \log g (\pi_t\gamma)| > \tfrac 1 5 \delta \}
    &\leq \cP_t^{\lambda\cnoepsilon}\{ |\pi_t\gamma| \geq R_{\delta t} \}
		\leq \frac{\int \braket{y,H_b y} \, \lambda(\d y) + C }{R_{\delta t}^2 \inf \sp H_b}
  \end{align*}
  also converges to~$0$ as~$t \to \infty$. The proof is then concluded using the triangle inequality and a union bound.
\end{proof}

{At this stage, one can already use 2-dimensional examples with linear~$b$, quadratic~$V$ and explicit Gaussian~$\lambda_\textnormal{inv}^\epsilon$ to exhibit cases where $\mathfrak{m}^\epsilon > 0$, independently of~$\epsilon$, a strict inequality we consider as a key feature of nonequilibrium phenomena.
This observation\,---\,which we push in Proposition~\ref{prop:locally-a-grad} below\,---\,complements L.~Bertini and G.~Di Ges\`u's discussion of the typical behaviour and of the order of the limits $\epsilon \to 0$ and $t \to \infty$ for their \emph{rescaled} functional in~\cite[\S{2}]{BD15}.
On the point of rescaling, let us also mention that the \textsc{ldp} in~\cite[\S{4}]{BD15} makes rigorous the existence of a rate function~$I$ describing, in our notation, the asymptotics
\[
	\pp\{t^{-1} S_t^{\textnormal{W}, \epsilon} \approx \epsilon^{-1}\varsigma\} \asymp \exp\left(-t\epsilon^{-1} I(\varsigma)\right)
\]
for $1 \ll t \ll \epsilon^{-1}$ and all $\varsigma \in \rr$. Note the difference in scaling when compared with~\eqref{eq:heur-ldp-van}. This rate function~$I$ always vanishes at~$\varsigma = 0$.
To see this, consider, in Section~2 there and with their notation, the path~$\varphi$ which is constantly at a common stationary point~$x_*$ of~$b$ and~$-\nabla V + b$ so that $\varphi \in \mathcal{A}_T^{x_*}(0)$ and $I^{x_*}_T(\varphi) = 0$; this implies~$S^{x_*x_*}_T(0) = 0$ and the limiting rate function in Eq.~(4.1) of Theorem~4.1 must vanish at~$0$. On the other hand, the above discussion of the law of large numbers suggests\,---\,and this will also be confirmed in Proposition~\ref{prop:locally-a-grad}\,---\,that the rate function we are about to obtain may very well remain strictly positive at~$s = 0$ as~$\epsilon \to 0$.
}

We also see from the formula~\eqref{eq:moy} that the behaviour as~$\epsilon \to 0$ of the mean entropy production per unit time~$\mathfrak{m}^\epsilon$ will depend on that of~$\lambda_\textnormal{inv}^\epsilon$ and hence on the Freidlin--Wentzell quasipotential~\cite[\S{6--8}]{FW70} associated to the ordinary differential equation, $\dot X = -\nabla V(X) + b(X)$.
In situations where the quasipotential is proportional to~$V$, more detailed information can be obtained and points where~$V$ attains its global minimum play a particular role. We will come back to this in Section~\ref{sec:small-e}.

\subsection{Assumptions on the boundary terms and the initial condition}
\label{ssec:b-term-assumptions}

As mentioned in the Introduction, the \textsc{ldp} at the heart of this article is local. At the technical level, this is due to the fact that we are able to prove convergence of the rescaled logarithm of the~\textsc{mgf},~$t^{-1}\log\chi_t^\epsilon(\alpha)$, as $t \to \infty$ and then as $\epsilon \to 0$ only for certain values of~$\alpha$.

In the special case where~$b$ is bounded and orthogonal to~$\nabla V$, and~$g \equiv 1$, L.~Bertini and G.~Di Ges\`u have shown convergence as~$t \to \infty$ for all~$\alpha \in \rr$, without the type of assumption we are about to introduce~\cite[App.\,A]{BD15}. However, the analysis of the linear case in~\cite{JPS17} shows the intricacies of taking the limit~$t \to \infty$ for~$\alpha \notin [0,1]$ in the case where~$b$ is unbounded, as well as the sensitivity of the limit to the choice of boundary terms.
Subsequently taking the limit~$\epsilon \to 0$ for~$\alpha$ outside~$[0,1]$ also comes with its own complications{; see Remark~\ref{rem:is-it-global} below.}
We restrict our attention to~$\alpha$ in the interval
\begin{equation}
\label{eq:cA-def}
	\cA := \bigcup_{\ell \in (0,1)} \interior\{\alpha : \ell \tfrac 14 |\nabla V(x)|^2  - \tfrac 12\braket{b,\nabla V(x)} + \alpha(1-\alpha) |b(x)|^2 \geq 0  \text{ for all } x\in\rr^N\}.
\end{equation}
One can use~\textnormal{(RB)} to show that if~$\ell$ is close enough to~$1$, then the quantity of interest is nonnegative for~$\alpha$ in an open interval containing~$[0,1]$; see Lemma~\ref{lem:link-I-Abb}. The interval~$\cA$ is symmetric about~$\alpha = \tfrac 12$. Another possible obstruction is the behaviour of the boundary term~$g$ used in the construction of~$\ss_t^\epsilon$. We introduce the following technical assumption and immediately give more tractable sufficient conditions.

\begin{description}
	\item[Assumption (IP).] There exists an open interval~$I^\epsilon$ with $[0,1] \subset I^\epsilon \subseteq \cA$ and such that the following property holds for all~$\alpha \in I^\epsilon$: there exists $p_\alpha^\epsilon \in (1,\infty)$ and $\ell_{\alpha}^{\epsilon} \in (0,1)$ such that
		\begin{equation}\label{eq:admiss-I}
	 		\ell_{\alpha}^{\epsilon} \tfrac {1}{{p_\alpha^\epsilon}} \big(1 - \tfrac 1{p_\alpha^\epsilon} \big) |\nabla V(x)|^2  - \tfrac{1 -2\alpha +\alpha {p_\alpha^\epsilon} }{{p_\alpha^\epsilon}}\braket{b(x),\nabla V(x)} + \alpha(1-\alpha) |b(x)|^2 \geq 0,
	  \end{equation}
		for all~$x \in \rr^N$, and both
			\begin{equation}\label{eq:I}
				g^\alpha \in \mathrm{L}^{p_\alpha^\epsilon}(\rr^N,\d\mu_0^\epsilon) \quad \text{and} \quad \tfrac{\d\lambda}{\d\mu_0^\epsilon} \g^{-\alpha} \in \mathrm{L}^{q_\alpha^\epsilon}(\rr^N,\d\mu_0^\epsilon),
			\end{equation}
		 with $\tfrac{1}{p_\alpha} + \tfrac{1}{q_\alpha} = 1$.
\end{description}

In the case $g \equiv 1$, we can give a simple condition on the initial condition~$\lambda$ which is sufficient for Assumption~\textnormal{(IP)} to hold.
The proof elucidates why we leave~$p_\alpha^\epsilon$ as a parameter instead of fixing $p_\alpha^\epsilon = q_\alpha^\epsilon = 2$: it is this parameter which allows us to accommodate measures~$\lambda$ for which $\tfrac{\d\lambda}{\d\mu_0^\epsilon} \in \mathrm{L}^{2-\delta}(\rr^N,\d\mu_0^\epsilon)$ with~$\delta > 0$ small, but not with~$\delta = 0$; {cf.}\ Lemma~\ref{lem:ub-ic-RN}.

\begin{lemma}
	Suppose that Assumptions~\textnormal{(L0)},~\textnormal{(L1)} and~\textnormal{(RB)} are satisfied and that~$g \equiv 1$. With~$k_b \in [0,\tfrac 12)$ as in~\textnormal{(RB)}, if there exists~$\delta \in (0,\tfrac 12)$ such that the initial condition~$\lambda$ satisfies
	\begin{equation}\label{eq:I-suff-1}
		\frac{\d\lambda}{\d\mu_0^\epsilon} \in \mathrm{L}^{\frac{1}{1 - k_b - \delta}}(\rr^N,\d\mu_0^\epsilon),
	\end{equation}
	then Assumption~\textnormal{(IP)} is satisfied.
\end{lemma}

\begin{proof}
	For any given~$p \in (\tfrac{1}{1-k_b},\tfrac{1}{k_b})$, there exists~$\ell \in (0,1)$ such that condition~\eqref{eq:admiss-I} holds for all~$\alpha$ in an open interval containing~$[0,1]$; see Lemma~\ref{lem:link-I-Abb}.
	Without loss of generality, $\delta > 0$ in~\eqref{eq:I-suff-1} is small enough that $p = \tfrac{1}{k_b+\delta}$ is such a value, but then
	we have $q = \tfrac{1}{1-k_b-\delta}$ and~\eqref{eq:I} with~$g \equiv 1$ reduces to~\eqref{eq:I-suff-1}.
\end{proof}

In the steady-state canonical case, $g = \d\lambda/\d\mu_0^\epsilon$ and $\lambda = \lambda_\textnormal{inv}^\epsilon$,~\eqref{eq:I} is guaranteed to hold for all~$\alpha \in [0,1]$. Indeed, one can apply Lemma~\ref{lem:ub-ic-RN} with some~$p_\alpha^\epsilon$ close enough to 2 that~\eqref{eq:admiss-I} holds.
However, obtaining~\eqref{eq:I} outside the interval~$[0,1]$ is in general a delicate task which, to our knowledge, requires extra technical assumptions{\,---\,unless $\braket{b,\nabla V} = \epsilon \operatorname{div} b$, in which case $ g = \d\lambda_\textnormal{inv}^\epsilon/\d\mu_0^\epsilon \equiv 1$ and~\eqref{eq:I} trivially holds}.

\begin{lemma}
	Suppose that Assumptions~\textnormal{(L1)} and~\textnormal{(RB)} are satisfied, that~$g = \d\lambda/\d\mu_0^\epsilon$ and that the initial condition is~$\lambda = \lambda_\textnormal{inv}^\epsilon$. If there exists $c_-, c_+, \gamma_-, \gamma_+ > 0$ and $a \geq 2$ such that
  \[
     V(x) \geq \gamma_- |x|^{a} - c_-
  \]
  and
  \[
    |\nabla V(x)|^2 \leq \gamma_{+}^2 a^2|x|^{2(a-1)} + c_+
  \]
  for all~$x \in \rr^N$, then Assumption~\textnormal{(IP)} is satisfied, uniformly in~$\epsilon$.
\end{lemma}

\begin{proof}
	We will show that there exists a nonempty interval of the form~$(\alpha_-,0]$ which does not depend on~$\epsilon$ and such that~\eqref{eq:admiss-I} and~\eqref{eq:I} hold for all~$\alpha$ in this interval, with common~$p$ and~$\ell$. A similar argument can be given to find an interval of the form~$[1,\alpha_+)$.

	Fix $p = 2 + \delta$ for some~$\delta > 0$ small enough that there exists~$\ell \in (0,1)$ such that condition~\eqref{eq:admiss-I} holds for all~$\alpha$ in a nonempty interval of the form~$(\hat{\alpha}_-,0]$.
	Then, $q \in (1,2)$ and the second inclusion in~\eqref{eq:I} for all~$\alpha$ in a nonempty interval of the form~$(\tilde{\alpha}_-,0]$ is guaranteed by Lemma~\ref{lem:ub-ic-RN}. Finally, we claim that the fact that
	\begin{equation}\label{eq:claim-conseq-ABG}
		\Big(\frac{\d\lambda_\textnormal{inv}^\epsilon}{\d\mu_0^\epsilon}\Big)^{\alpha} \in \mathrm{L}^p(\rr^N,\d\mu_0^\epsilon)
	\end{equation}
	for all~$\alpha$ in a nonempty interval of the form~$(\bar\alpha_-,0]$ follows from the work \cite[\S{4}]{ABG19}. We then take $\alpha_- := \max\{\hat{\alpha}_-,\tilde{\alpha}_-, \bar\alpha_-\}$ to complete the proof.

	To establish~\eqref{eq:claim-conseq-ABG}, pick $M > 0$ such that $|\braket{b(x), x}| \leq M |x|^2$ for all~$x \in \rr^N$.\footnote{We assume without loss of generality that our space coordinates are centered in such a way that $x=0$ is one of the critical points of~$V$ and thus a stationary point of~$b$ by (RB), and then use the fact that $b$ is globally Lipschitz.}
  Combining the upper bound with~$\gamma_+$ and Assumption~\textnormal{(RB)}, we obtain a constant~$\tilde{c}_+$ such that
  \[
    - \tfrac 1{4\epsilon}|\nabla V(x)|^2 + \tfrac{1}{2\epsilon}\braket{b(x),\nabla V(x)} - \tfrac 12 \upDelta V(x) - \operatorname{div} b(x) \geq -  \tfrac 1{8\epsilon} \gamma_+ (1-2k_b) |x|^{2(a-1)}
  \]
  if $|x|$ is sufficiently large.
  Set
  \[
    K := \frac 1{2a} \big( M + \sqrt{{M^2} +  \gamma_+ (1-2k_b) }\big).
  \]
  By Theorem~4.1 in~\cite{ABG19} applied to the conjugated Fokker--Planck operator
	\[
		\epsilon\Delta -\braket{b,\nabla} - \tfrac 1{4\epsilon}|\nabla V|^2 + \tfrac{1}{2\epsilon}\braket{b,\nabla V} - \tfrac 12 \upDelta V - \operatorname{div} b,
	\]
	there exist constants~$C_\epsilon > 0$ and~$r_\epsilon > 0$ such that the unique function~$\varphi^\epsilon$ such that
	\[
    \lambda_\textnormal{inv}^\epsilon (\d x) = \Exp{-(2\epsilon)^{-1} V(x)} \varphi^{\epsilon}(x) \d x
  \]
	satisfies
  \[
    \varphi^\epsilon(x) \geq C_\epsilon \Exp{-\epsilon^{-1} K |x|^a}
  \]
  whenever~$|x| > r_\epsilon$; also see Lemma~\ref{lem:pre-ub-ic-RN}. 
	This last inequality can be rewritten as
  \begin{align*}
    \frac{\d\lambda_\textnormal{inv}^\epsilon}{\d\mu_0^\epsilon}(x) &\geq \tilde{C}_\epsilon \Exp{-(2\epsilon)^{-1}V(x)} \Exp{-\epsilon^{-1} K |x|^a}
  \end{align*}
  with some~$\tilde{C}_\epsilon > 0$.
  Using the lower bound in~$\gamma_-$, there exists~$\tilde{r}_\epsilon > 0$ such that
  \[
    \Big(\frac{\d\lambda_\textnormal{inv}^\epsilon}{\d\mu_0^\epsilon}(x)\Big)^{-|\beta|} \leq \tilde{C}_\epsilon^{-1} \exp \bigg( \frac{|\beta|}{\epsilon}\bigg(\frac 12 + \frac{K}{2\gamma_-}\bigg) V(x)\bigg)
  \]
  whenever~$|x| > \tilde{r}_\epsilon$. We conclude that the claim~\eqref{eq:claim-conseq-ABG} indeed holds for all~$\alpha \in (\bar\alpha_-,0]$ with
	$
		\bar\alpha_-^{-1} := -p(\tfrac 12 + \tfrac{K}{2\gamma_-})
	$.
\end{proof}

\begin{remark}
\label{rem:is-it-global}
	{The care in choosing the constraints on~$\alpha$ and~$p_\alpha^\epsilon$ here is taken for two reasons: the unboundedness of~$b$
	and the desire to obtain detailed information on the rate function as~$\epsilon \to 0$.}
	{If one is interested in the limit~$t\to\infty$ only, then one may replace the nonnegativity conditions in~\eqref{eq:cA-def} and~\eqref{eq:admiss-I} with the existence of a finite (negative) lower bound, as considered in Appendix~\ref{app:semig}. In particular, if~$b$ is bounded and globally Lipschitz, then the derived \textsc{ldp} as $t\to\infty$ is global as long as the condition~\eqref{eq:I} on the boundary term holds for all~$\alpha$, as in~\textnormal{\cite[App.\,A]{BD15}}.}
	{However, even for a smooth and compactly supported~$b$, nonnegativity in~\eqref{eq:cA-def} and~\eqref{eq:admiss-I} is used crucially for the limit~$\epsilon \to 0$ through the properties of the auxiliary potential~$W_0$ introduced in~\eqref{eq:aux-pot} of Section~\ref{app:semiclass}, and the techniques used there cannot possibly yield a global result as is.
	Indeed, values of~$\alpha$ that are far from the interval~$[0,1]$ lead to changes to the structure of the minima of~$W_0$ that render maladapted the harmonic-like approximations used in Section~\ref{app:semiclass}.}
\end{remark}

\subsection{A representation for the moment-generating function}
\label{ssec:rep-mgf}

Under Assumption~\textnormal{(IP)}, we prove the validity of a commonly used representation of the \textsc{mgf}~$\chi_t^\epsilon(\alpha)$ in terms of a semigroup of operators acting on the space~$\mathrm{L}^{p_\alpha^\epsilon}(\rr^N,\d\mu_0^\epsilon)$ obtained by deformation of the infinitesimal generator
$
	\Lambda^{\epsilon,0} = \epsilon \upDelta  +  \braket{-\nabla V + b,\nabla }
$
of the semigroup associated to the \textsc{sde}~\eqref{eq:SDE}. The proof relies on preliminary results on elliptic operators collected in Appendix~\ref{app:semig}, based on~\cite{AGG+,Lan,MPRS05}.

\begin{proposition}\label{prop:mgf}
	Suppose that Assumptions~\textnormal{(L0)},~\textnormal{(L1)},~\textnormal{(RB)} and~\textnormal{(IP)} are satisfied. Then, for all~$\alpha \in I^\epsilon$, the~\textsc{mgf} $\chi_t^\epsilon(\alpha)$ is finite  and can be represented as
	\begin{equation}\label{eq:mgf}
		\chi_t^{\epsilon}(\alpha) = \int_{\rr^N} g^{-\alpha} \big(\Exp{t\Lambda^{\alpha,\epsilon}} g^{\alpha}\big)\d\lambda,
	\end{equation}
	where~$\Lambda^{\epsilon,\alpha}$ is the infinitesimal generator of a semigroup on $\mathrm{L}^{p_\alpha^\epsilon}(\rr^N,\d\mu_0^\epsilon)$ given by
	\begin{equation}\label{eq:def-Lambda}
		\Lambda^{\epsilon,\alpha}f = \epsilon \upDelta f +  \braket{-\nabla V + (1 - 2\alpha)b,\nabla f} - \tfrac{\alpha(1-\alpha)}{\epsilon} |b|^2 f + \tfrac {\alpha}{\epsilon} \braket{b,\nabla V} f - \alpha f \operatorname{div} b,
	\end{equation}
	for all $f \in C^2_\textnormal{c}(\rr^N)$.
\end{proposition}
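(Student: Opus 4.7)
The plan is to derive a Feynman--Kac representation for the \textsc{mgf} and identify it with the semigroup constructed in Appendix~\ref{app:semig}. First I would convert the Stratonovich integral in~\eqref{eq:def-EPF} to an It\^o integral using the quadratic covariation $[X^i,X^j]_s = 2\epsilon\delta_{ij}s$, getting
\[
\int \braket{b(\gamma),\circ\d\gamma} = \int_0^t \braket{b(X_s),-\nabla V(X_s)+b(X_s)}\d s + \sqrt{2\epsilon}\int_0^t \braket{b(X_s),\d W_s} + \epsilon\int_0^t \operatorname{div} b(X_s)\d s.
\]
Substituting into~\eqref{eq:def-EPF} and completing the square (so the stochastic integral becomes the exponent of a Dol\'eans--Dade exponential) yields
\[
\Exp{-\alpha\ss_t^\epsilon} = g(X_0)^{-\alpha}\,g(X_t)^{\alpha}\,M_t^{(\alpha)}\,\exp\!\Big(\int_0^t V_\alpha^\epsilon(X_s)\,\d s\Big),
\]
where $V_\alpha^\epsilon:=-\tfrac{\alpha(1-\alpha)}{\epsilon}|b|^2+\tfrac{\alpha}{\epsilon}\braket{b,\nabla V}-\alpha\operatorname{div}b$ is the zeroth-order part of $\Lambda^{\alpha,\epsilon}$ from~\eqref{eq:def-Lambda}, and $M_t^{(\alpha)}=\exp(-\alpha\sqrt{2/\epsilon}\int_0^t\braket{b,\d W}-\tfrac{\alpha^2}{\epsilon}\int_0^t|b|^2\,\d s)$ is a candidate Girsanov density.

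Second, I would use $M^{(\alpha)}$ to change measure: under $\d\tilde{\cP}_t=M_t^{(\alpha)}\d\cP_t^{\lambda,\epsilon}$, the canonical process solves the \textsc{sde} with drift $-\nabla V+(1-2\alpha)b$, which is precisely the first-order part of~\eqref{eq:def-Lambda}. Since $b$ is only globally Lipschitz and possibly unbounded, Novikov's condition is not automatic. I would bypass this by first truncating $b$ to a compactly supported $b_R$ as in Step~1 of the proof of Proposition~\ref{prop:RN}, carrying out the Girsanov/Feynman--Kac manipulation in the bounded setting, and then passing to the limit $R\to\infty$ via dominated convergence. The integrability budget needed to close the limit is exactly what Assumption~\textnormal{(IP)} provides: H\"older's inequality with exponents $p_\alpha^\epsilon,q_\alpha^\epsilon$ factors the resulting integrand into $g^\alpha\in\mathrm{L}^{p_\alpha^\epsilon}(\d\mu_0^\epsilon)$ on one side and $\tfrac{\d\lambda}{\d\mu_0^\epsilon}g^{-\alpha}\in\mathrm{L}^{q_\alpha^\epsilon}(\d\mu_0^\epsilon)$ on the other, and the coercivity~\eqref{eq:admiss-I} is exactly what is used in Appendix~\ref{app:semig} to make $\Lambda^{\alpha,\epsilon}$ the generator of a bounded semigroup on $\mathrm{L}^{p_\alpha^\epsilon}(\d\mu_0^\epsilon)$.

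Third, I would tie things together by showing that $u(t,x):=\tilde\ee_x[g^\alpha(X_t)\exp(\int_0^t V_\alpha^\epsilon\,\d s)]$ coincides with $(\Exp{t\Lambda^{\alpha,\epsilon}}g^\alpha)(x)$. For $f\in C_\textnormal{c}^2(\rr^N)$, It\^o's formula (applied to $f(X_s)\exp(\int_0^s V_\alpha^\epsilon\,\d r)$ under $\tilde\cP$) gives $\partial_t u=\Lambda^{\alpha,\epsilon}u$ with $u(0,\cdot)=f$; by the uniqueness of the $\mathrm{L}^{p_\alpha^\epsilon}$ Cauchy problem from Appendix~\ref{app:semig}, $u(t,\cdot)=\Exp{t\Lambda^{\alpha,\epsilon}}f$. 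Extending to $f=g^\alpha$ by approximation using the semigroup bounds and then integrating against $g^{-\alpha}\d\lambda$ via the $(p_\alpha^\epsilon,q_\alpha^\epsilon)$ duality yields~\eqref{eq:mgf}; finiteness of $\chi_t^\epsilon(\alpha)$ comes as a byproduct of the same H\"older estimate.

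The main obstacle is the interplay between the probabilistic construction (Girsanov and Feynman--Kac on path space) and the analytic one (the $\mathrm{L}^{p_\alpha^\epsilon}$ semigroup generated by $\Lambda^{\alpha,\epsilon}$) when $b$ is unbounded. This is where~\textnormal{(IP)} and~\textnormal{(RB)} genuinely enter: the former through the H\"older pair and the coercivity condition~\eqref{eq:admiss-I} that ensures $V_\alpha^\epsilon$ remains controllable by the kinetic term, the latter through the uniform comparison of $|b|^2$ with $|\nabla V|^2$ that underlies the semigroup estimates of Appendix~\ref{app:semig} and the truncation argument inherited from Proposition~\ref{prop:RN}.
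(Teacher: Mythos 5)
Your proposal follows essentially the same route as the paper: truncate $b$ to compact support, perform the Girsanov change of measure to the drift $-\nabla V + (1-2\alpha)b$ and read off the Feynman--Kac potential $V_\alpha^\epsilon$, identify the resulting path-space expectation with $\Exp{t\Lambda^{\alpha,\epsilon}}g^\alpha$ through uniqueness of the Cauchy problem for the analytic semigroup of Appendix~\ref{app:semig}, and close the spatial integral with the H\"older pairing supplied by~(IP). The Stratonovich-to-It\^o conversion and the completion of the square that you write out explicitly are exactly the ``standard It\^o calculus computation'' with which the paper ends its Step~2, just run from the path integral towards the semigroup instead of the other way around; your identification of $M_t^{(\alpha)}$ as the Girsanov density for the added drift $-2\alpha b$ is correct.

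The one step that is thinner than it needs to be is the removal of the truncation. ``Dominated convergence'' will not close the limit $R\to\infty$ as stated: the functionals $\Exp{-\alpha\ss_t}$ built from $b_R$ admit no obvious common integrable majorant, because on paths leaving the ball of radius $R-1$ the truncated and untruncated integrands are unrelated. The paper instead splits path space into the centered ball $B_R$ of radius $R$ in $\cCt$\,---\,on which the laws $\cP_t^{x}[R]$ and $\cP_t^{x}$ coincide, so that monotone convergence applies\,---\,and its complement, which it controls by a uniform-integrability argument: since $I^\epsilon$ is open, the moment-generating function is finite at a slightly larger parameter $\alpha+\delta\alpha$, and H\"older's inequality with exponents $1+\delta$ and its conjugate bounds the contribution of $B_R^{\mathsf{C}}$ by a power of $1-\cP_t^{x}[R](B_R)$, which is controlled by Lemma~\ref{lem:dissipation-with-sup}. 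You also need the convergence $\Exp{t\Lambda^{\alpha}_R}\to\Exp{t\Lambda^{\alpha}}$ on the analytic side; the paper obtains it from strong resolvent convergence of the generators and Kato's theorem rather than from any path-space estimate. With these two replacements your argument coincides with the paper's.
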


\begin{proof}
  We use an approximation strategy similar to that in the proof of Proposition~\ref{prop:RN} and again omit keeping explicit track of~$\epsilon$.
  \begin{description}
      \item[Step 1: Reduction to the case where~$b$ has compact support.]
      Suppose that the proposition has been proved in the case where~$b$ has compact support. For a general $b$, let $(b_R)_{R \in \nn}$ be a sequence of compactly supported approximations of~$b$ as in the proof of Proposition~\ref{prop:RN}. Fix~$\alpha$  and set
    	\[
    		\Lambda^{\alpha}_R := \epsilon \upDelta +  \braket{-\nabla V + (1 - 2\alpha)b_R,\nabla } - \tfrac{\alpha(1-\alpha)}{\epsilon} |b_R|^2  + \tfrac {\alpha}{\epsilon} \braket{b_R,\nabla V} - \alpha\operatorname{div} b_R
    	\]
    	and
    	\[
    	\Lambda^{\alpha} := \epsilon \upDelta +  \braket{-\nabla V + (1 - 2\alpha)b,\nabla } - \tfrac{\alpha(1-\alpha)}{\epsilon} |b|^2  + \tfrac {\alpha}{\epsilon} \braket{b,\nabla V} - \alpha\operatorname{div} b.
    	\]
    	It is shown in Appendix~\ref{app:semig} that these operators, considered with $\mathrm{W}^{2,{p_\alpha}}(\rr^N;\d\mu_0)$ as their domain, generate semigroups on~$\mathrm{L}^{p_\alpha}(\rr^N;\d\mu_0)$ if $\alpha \in \cA$.
			Hence, in view of~\eqref{eq:I} and the definition of~$I$, the right-hand side of~\eqref{eq:mgf} is well defined and finite for all~$\alpha \in I$. Therefore,  by~\eqref{eq:disint}, it suffices to show that
    	\begin{align}\label{eq:suff-for-prop:mgf}
    		 \big(\Exp{t \Lambda^\alpha } g^{\alpha} \big)(x)
    			&=  g^{\alpha}(x) \int_{\cCt} \Exp{-\alpha \ss_t }\d\cP^{x}_t
    	\end{align}
    	for almost all~$x \in \rr^N$.

    	One can show using the isometry between $\mathrm{L}^{p_\alpha}(\rr^N,\d\mu_0)$ and $\mathrm{L}^{p_\alpha}(\rr^N,\d\vol)$ and the second resolvent identity that
    	\[
    	 \operatorname*{s.r.-lim}_{R \to \infty} \Lambda^\alpha_{R} = \Lambda^\alpha.
    	\]
    	Hence, by Theorem~2.16 in~\cite[Ch.\,{IX}]{Kat},
    	\[
    		\operatorname*{s.-lim}_{R \to \infty} \Exp{t \Lambda^\alpha_R} = \Exp{t \Lambda^\alpha}
    	\]
    	on~$\mathrm{L}^{p_\alpha}(\rr^N,\d\mu_0)$, which contains~$g^\alpha$ by~\eqref{eq:I} of Assumption~\textnormal{(IP)}. In particular, there exists a subsequence~$(R_k)_{k \in \nn}$ properly diverging to~$+\infty$ such that
    	\[
    		\big(\Exp{t \Lambda^\alpha} g^{\alpha} \big)(x) = \lim_{k \to \infty} \big(\Exp{t\Lambda^\alpha_{R_k}} g^{\alpha} \big)(x)
    	\]
    	for almost all~$x \in \rr^N$. Hence, by hypothesis,
    	\begin{align*}
    		\big(\Exp{t \Lambda^\alpha } g^{\alpha} \big)(x)
    			&= \lim_{k\to\infty} \big(\Exp{t \Lambda^\alpha_{R_k} } g^{\alpha}  \big)(x) \\
    			&= \lim_{k\to\infty}  g^{\alpha}(x)\int_{\cCt} \Exp{-\alpha \ss_t }\d\cP^{x}_t[R_k] \\
    			&=   g^{\alpha}(x) \lim_{k\to\infty} \Big(\int_{B_{R_k}} \Exp{-\alpha \ss_t }\d\cP^{x}_t[R_k] +  \int_{B_{R_k}^\mathsf{C}} \Exp{-\alpha \ss_t }\d\cP^{x}_t[R_k] \Big),
    	\end{align*}
    	where $\cP^{x}_t[R_k]$ is the measure on the paths associated to the~\textsc{sde} with initial condition~$x$ and drift~$-\nabla V + b_{R_k}$, and where~$B_{R_k}$ denotes the ball of radius~$R_k$ in~$\cCt$. Using uniqueness,
    	\begin{align}
    		 \big(\Exp{t \Lambda^\alpha} g^{\alpha}  \big)(x)
    			&=  g^{\alpha}(x) \lim_{k\to\infty} \Big(\int_{B_{R_k}} \Exp{-\alpha \ss_t }\d\cP^{x}_t +  \int_{B_{R_k}^\mathsf{C}} \Exp{-\alpha \ss_t }\d\cP^{x}_t[R_k] \Big).
    	\label{eq:lim-k-ball-compl}
    	\end{align}
    	By Lebesgue monotone convergence,
    	\begin{align}\label{eq:LMC-in-R}
    		\lim_{k\to\infty} \int_{B_{R_k}} \Exp{-\alpha \ss_t }\d\cP^{x}_t
    			&= \int_{\cCt} \Exp{-\alpha \ss_t }\d\cP^{x}_t.
    	\end{align}
    	Note that this limit must be finite because the left-hand side of~\eqref{eq:lim-k-ball-compl} is finite, $g^{\alpha}(x)$ is strictly positive and the integral over the complement of the ball on the right-hand side of~\eqref{eq:lim-k-ball-compl} is nonnegative. Because~$I$ is open,
    	\[
    		\int_{\cCt} \Exp{-(\alpha+\delta\alpha) \ss_t }\d\cP^{x}_t < \infty
    	\]
    	as well if~$\delta > 0$ is small enough. Hence, we may apply H\"older's inequality with exponents~$1 + \delta$ and $(1-(1+\delta)^{-1})^{-1}$ to derive
    	\begin{align*}
    		\lim_{k\to\infty} \Big|\int_{B_{R_k}^{\mathsf{C}}} \Exp{-\alpha \ss_t }\d\cP^{x}_t\Big|
         \leq \lim_{k\to\infty} \Big|\int_{\cCt} \Exp{-(\alpha + \delta \alpha) \ss_t }\d\cP^{x}_t\Big|^{\frac 1{1+\delta}} \big(1 -\cP^{x}_t[{R_k}](B_{R_k})\big)^{1-(1+\delta)^{-1}},
    	\end{align*}
			which is controlled by Lemma~\ref{lem:dissipation-with-sup}. Using this bound and~\eqref{eq:LMC-in-R} in~\eqref{eq:lim-k-ball-compl} yields~\eqref{eq:suff-for-prop:mgf} and the proof is concluded.

      \item[Step 2: Proof in the case where~$b$ has compact support.]
    	In view of~\eqref{eq:disint}, it suffices to show that
    	\begin{equation}
    		g^{\alpha}(x)\int_{\cCt} \Exp{-\alpha \ss_t }\d\cP^{x}_t
    			=  \big(\Exp{t \Lambda^{\alpha} } g^{\alpha} \big)(x)
    	\end{equation}
    	for almost all~$x \in \rr^N$,
      where~$(\Exp{t \Lambda^{\alpha}})_{t\geq 0}$ is the positivity-preserving semigroup generated by~$\Lambda^{\alpha}$ on~$\mathrm{L}^{p_\alpha}(\rr^N,\d\mu_0)$.
			By definition of~$\ss_t$, this is equivalent to
			\begin{equation}
    		g^{\alpha}(x)\int_{\cCt} g^{-\alpha}(\gamma(0)) g^{\alpha}(\gamma(t))\Exp{-\alpha \epsilon^{-1} \int_0^t \braket{b(\gamma),\circ\d\gamma}}\,\cP^{x}_t(\d\gamma)
    			=  \big(\Exp{t \Lambda^{\alpha} } g^{\alpha} \big)(x).
    	\label{eq:delta-ic-mgf}
    	\end{equation}
			Note that the terms $g^\alpha(x)$ and~$g^{-\alpha}(\gamma(0))$ cancel each other out.

    	By Lebesgue monotone convergence and continuity of~$\Exp{t\Lambda^\alpha}$, it is enough to show that
    	\begin{equation}
    		\int_{\cCt} (\eta g^{\alpha})(\gamma(t))\Exp{-\alpha \epsilon^{-1} \int_0^t \braket{b(\gamma),\circ\d\gamma}}\,\cP^{x}_t(\d\gamma)
    			=  \big(\Exp{t \Lambda^{\alpha} } \eta g^{\alpha} \big)(x)
    	\end{equation}
    	for all smooth functions~$0 \leq \eta \leq 1$ with compact support. We will not keep this cutoff function~$\eta$ explicitly in the formulas, but we will use theorems that would generally apply to a continuous compactly supported function~$g$ with the understanding that we can obtain the final result by taking a sequence~$(\eta_R)_{R \in \nn}$ converging pointwise to the constant function~$1$ from below.

    	Set~$m(s,x)$ to be the left-hand side of~\eqref{eq:delta-ic-mgf} with~$t$ replaced by $s \in [0,t]$.
			Because~$g^\alpha \in \mathrm{L}^p(\rr^N,\d\mu_0^\epsilon)$
			and because the semigroup generated by~$\Lambda^\alpha$ with domain $\mathrm{W}^{2,p}(\rr^N,\d\mu_0^\epsilon)$ on $\mathrm{L}^p(\rr^N,\d\mu_0^\epsilon)$ is analytic,
			$\Exp{s\Lambda^\alpha} g^\alpha \in \mathrm{W}^{2,p}(\rr^N,\d\mu_0^\epsilon)$ and \[
				\partial_s (\Exp{s\Lambda^\alpha}g^\alpha) = \Lambda^\alpha \Exp{s\Lambda^\alpha} g^\alpha
			\]
			for all~$s > 0$; see e.g.~Proposition 1.6.ii in~\cite[Ch.\,{A-I}]{AGG+}.
			Hence,~\eqref{eq:delta-ic-mgf} becomes\ $m(t,x) = \Exp{t\Lambda^\alpha}g^\alpha$ and, by uniqueness, we need only show that~$m$ also satisfies the partial differential equation
    	 \begin{equation}\label{eq:FK-PDE}
    	 \begin{cases}
    		 \partial_s m(s,x) = (\Lambda^\alpha m(s,\cdot\,))(x), 	& x \in \rr^N, s > 0, \\
    		 m(0,x) = g^\alpha(x), 													& x \in \rr^N.
    	 \end{cases}
    	 \end{equation}
		 	 A straightforward computation shows that
     	\begin{equation}
     		\Lambda^\alpha f = \tilde \Lambda f - \tfrac{\alpha(1-\alpha)}{\epsilon} |b|^2 f + \tfrac {\alpha}{\epsilon} \braket{b,\nabla V} f - \alpha(\operatorname{div} b) f
     	\end{equation}
     	where $\tilde \Lambda$ is the infinitesimal generator associated to the deformed~\textsc{sde}
     	\begin{equation*}
     	  \d \tilde Y_t = -\nabla V(\tilde Y_t) \d t + (b(\tilde Y_t) - 2\alpha b(\tilde Y_t)) \d t + \sqrt{2\epsilon} \d W_t.
     	\end{equation*}
    	Hence, in view of the Feynman--Kac formula\,---\,see e.g.\ Lemma~3.7 in \cite[Ch.\,3]{Has} keeping in mind that~$b$ is temporarily assumed to be compactly supported\,---,~\eqref{eq:FK-PDE} will hold if
    	\begin{equation*}
        m(t,x)  = \int_{\cCt} g^\alpha (\gamma(t)) \Exp{\int_0^t - \frac{\alpha(1-\alpha)}{\epsilon} |b(\gamma(s))|^2  + \frac {\alpha}{\epsilon} \braket{b(\gamma(s)),(\nabla V)(\gamma(s))} - \alpha(\operatorname{div} b)(\gamma(s)) \d s} \tilde{\mathcal{Q}}_t^{x}(\d\gamma).
      \end{equation*}
    	But it follows from a Girsanov argument similar to that used in the proof of Proposition~\ref{lem:change-ic}\,---\,recall again that~$b$ is temporarily assumed to be compactly supported\,---\,that
    	\begin{align*}
    		&\int_{\cCt} g^\alpha (\gamma(t))  \exp\Big(\int_0^t - \tfrac{\alpha(1-\alpha)}{\epsilon} |b(\gamma(s))|^2  + \tfrac {\alpha}{\epsilon} \braket{b(\gamma(s)),(\nabla V)(\gamma(s))}
        - \alpha(\operatorname{div} b) (\gamma(s)) \d s\Big) \tilde{\mathcal{Q}}_t^{x}(\d\gamma) \\
    			&\quad=
    			\int_{\cCt} g^\alpha (\gamma(t))  \exp\Big(\int_0^t - \tfrac{\alpha(1-\alpha)}{\epsilon} |b (\gamma(s))|^2  + \tfrac {\alpha}{\epsilon} \braket{b (\gamma(s)),(\nabla V) (\gamma(s))}
					 \\ &\quad \quad \qquad
          - \alpha(\operatorname{div} b) (\gamma(s))  \d s \Big) \tilde{Z}_t(\gamma) \cP_t^{x}(\d\gamma),
    	\end{align*}
    	where
    	$
    		\tilde{Z}_s(\gamma) := \exp(\frac{\alpha}{\sqrt{2\epsilon}} \int_0^s \braket{b(\gamma),\d W(\gamma)} - \frac{\alpha^2}{4\epsilon}\int_0^s |b(\gamma(r))|^2\d r ).
    	$
    	The proof is concluded with a standard It\^o-calculus computation.	\qedhere
  \end{description}
\end{proof}

\section{Large deviations in the large-time limit}
\label{sec:large-t}

{With the validity of the formula at the heart of Section~\ref{ssec:rep-mgf} at hand under the assumptions of Section~\ref{ssec:b-term-assumptions} introduced to deal with the unboundedness of~$b$, our proof of the local large deviation principle follows closely a local analogue of the strategy outlined by J.~Lebowitz and H.~Spohn in~\cite[\S{5}]{LS99} and also carried out in~\cite[App.\,A]{BD15}.} The quantity
\[
	e^\epsilon(\alpha) := \sup \{ \Re z : z \in \sp (\Lambda^{\alpha,\epsilon},\mathrm{W}^{2,2}(\rr^N,\d\mu_0^\epsilon))\}
\]
for~$\alpha \in \cA$ will play a crucial role in this strategy for analyzing the large deviations of~$\ss_t^\epsilon$. We will interchangeably refer to this quantity as \emph{the leading eigenvalue} of~$\Lambda^{\alpha,\epsilon}$ or as $\spb(\Lambda^{\alpha,\epsilon})$.
{Before we state and prove a lemma concerning its regularity in~$\alpha$ at fixed~$\epsilon > 0$, let us briefly comment on the choice of strategy. In simple enough systems\,---\,{e.g.}\ finite-state, mixing Markov chains\,---, we are aware of two other routes. First, one can sometimes prove a higher-level \textsc{ldp} for currents/jumps and use a suitable contraction principle; see~\cite[\S{5}]{BCFG18} and~\cite[\S{2.1}]{EPbrief}. Second, one can sometimes prove the \textsc{ldp} for entropy production via the method of Ruelle--Lanford functions~\cite{CJPS19}. However,\,---\,to our knowledge\,---\,key technical ingredients that are essential to rigorously using those methods have not been adapted to stochastic integrals with respect to paths of diffusions in noncompact spaces. These difficulties are for example addressed in Remark~1 in~\cite{KKT09}, Remark~2.16 in~\cite{CJPS19}, and more precisely throughout the discussions at the end of Sections~1.6 and~2.4.3 of~\cite{EPbrief}.}

\begin{lemma}
\label{lem:reg-eigenval}
	Under Assumptions~\textnormal{(L0)}, \textnormal{(L1)} and~\textnormal{(RB)}, the function~$e^\epsilon$ is real-analytic on~$\cA$.
\end{lemma}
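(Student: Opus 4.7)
The strategy is to apply Kato's analytic perturbation theory to the family $\{\Lambda^{\alpha,\epsilon}\}_\alpha$, using the Perron--Frobenius-type result from Appendix~\ref{app:semig} to establish that $\spb(\Lambda^{\alpha,\epsilon})$ is an isolated simple eigenvalue, and hence depends analytically on $\alpha$.

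First, I would observe that $\alpha \mapsto \Lambda^{\alpha,\epsilon}$ depends polynomially on $\alpha$: expanding $-\alpha(1-\alpha) = -\alpha + \alpha^2$ in~\eqref{eq:def-Lambda}, one obtains a decomposition
\[
	\Lambda^{\alpha,\epsilon} = \Lambda^{0,\epsilon} + \alpha B_1 + \alpha^2 B_2
\]
with $B_1 f = -2\braket{b,\nabla f} - \tfrac{1}{\epsilon}|b|^2 f + \tfrac{1}{\epsilon}\braket{b,\nabla V} f - f\operatorname{div}b$ and $B_2 f = \tfrac{1}{\epsilon}|b|^2 f$. This extends to a polynomial family of operators defined for $\alpha \in \cc$, which I will argue forms a holomorphic family of type~(A) on a complex neighbourhood of $\cA$ with common domain $\mathrm{W}^{2,2}(\rr^N,\d\mu_0^\epsilon)$. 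The results of Appendix~\ref{app:semig} guarantee that $(\Lambda^{\alpha,\epsilon},\mathrm{W}^{2,2}(\rr^N,\d\mu_0^\epsilon))$ is a closed operator for real $\alpha \in \cA$; the same relative-boundedness estimates extending this closedness carry over to a complex neighbourhood by inspection, since $B_1, B_2$ are relatively bounded perturbations of $\Lambda^{0,\epsilon}$ and $\alpha, \alpha^2$ are entire functions.

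Second, by the Perron--Frobenius-type result stated in Appendix~\ref{app:semig}, for every real $\alpha \in \cA$ the spectral bound $e^\epsilon(\alpha) = \spb(\Lambda^{\alpha,\epsilon})$ is an \emph{isolated, algebraically simple} eigenvalue of $\Lambda^{\alpha,\epsilon}$, separated from the remainder of the spectrum by a strictly positive gap, and is the \emph{unique} element of the spectrum with real part equal to $e^\epsilon(\alpha)$. Kato's perturbation theorem for isolated simple eigenvalues of holomorphic families of type~(A) (see e.g.\ \cite[Ch.\,{VII},\S{1.3}]{Kat}, or the formulation via the Riesz spectral projector and the second resolvent identity) then yields a holomorphic eigenvalue branch $\alpha \mapsto \lambda(\alpha)$, defined on a complex neighbourhood of each $\alpha_0 \in \cA$, such that $\lambda(\alpha_0) = e^\epsilon(\alpha_0)$.

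Third, I would argue that this holomorphic branch coincides with $e^\epsilon$ on a real neighbourhood of $\alpha_0$: for real $\alpha$ close to $\alpha_0$, the spectral gap persists by upper semicontinuity of the spectrum, and the Perron--Frobenius result forces $\lambda(\alpha)$ (which must remain the isolated simple eigenvalue near $e^\epsilon(\alpha_0)$) to be the spectral bound. Being the restriction to the reals of a holomorphic function, $e^\epsilon$ is therefore real-analytic on $\cA$.

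The main obstacle is the first step: verifying that the relative-boundedness estimates used in Appendix~\ref{app:semig} to close $\Lambda^{\alpha,\epsilon}$ for real $\alpha$ extend uniformly to a complex neighbourhood, so that the type-(A) condition actually holds. Since the $\alpha$-dependence enters only through the bounded multipliers $\alpha, \alpha^2$ against fixed operators $B_1, B_2$, this extension is essentially automatic, but requires tracking the constants carefully.
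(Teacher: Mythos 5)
Your proposal follows essentially the same route as the paper: decompose $\Lambda^{\alpha,\epsilon}$ as a perturbation of a fixed closed operator, verify the holomorphic type~(A) property, invoke the Perron--Frobenius simplicity and isolation of the leading eigenvalue from Proposition~\ref{prop:evec}, and apply Kato's analytic perturbation theory to the isolated simple eigenvalue. The one cosmetic difference is that you expand globally as $\Lambda^{0,\epsilon} + \alpha B_1 + \alpha^2 B_2$, whereas the paper expands locally around an arbitrary $\alpha_0 \in \cA$, writing $\Lambda^{\alpha,\epsilon} = \Lambda^{\alpha_0,\epsilon} + B^{\alpha_0,\epsilon}(\alpha - \alpha_0)$; this local choice is precisely what resolves the ``main obstacle'' you flag at the end, since the relative bound of $B^{\alpha_0,\epsilon}(\varkappa)$ can be made arbitrarily small by shrinking $|\varkappa|$, and then Theorem~1.1 in Chapter~IV of Kato delivers closedness on a complex neighbourhood of $\alpha_0$ without any further tracking of constants.
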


\begin{proof}
	Fix $\alpha_0 \in \cA$. The differential operator~$\Lambda^{\alpha_0,\epsilon}$ defined by~\eqref{eq:def-Lambda} on the domain $\mathrm{W}^{2,2}(\rr^N,\d\mu_0^\epsilon)$ is closed as an unbounded operator on~$\mathrm{L}^2(\rr^N,\d\mu_0^\epsilon)$; see Appendix~\ref{app:semig}.
	For~$\varkappa \in \cc$,
	\begin{align*}
		B^{\alpha_0,\epsilon}(\varkappa) := -2\varkappa\braket{b,\nabla} - \tfrac{\varkappa(1-\varkappa-2\alpha_0)}{\epsilon}|b|^2 + \tfrac{\varkappa}{\epsilon} \braket{b,\nabla V} - \varkappa\operatorname{div} b
	\end{align*}
	is a relatively bounded perturbation of~$\Lambda^{\alpha_0,\epsilon}$.
	The relative bound can be made arbitrarily small by taking~$|\varkappa|$ small enough.

	Hence, by Theorem~1.1 in~\cite[Ch.\,{IV}]{Kat},
	there exists a complex neighbourhood~$\Omega$ of~$\alpha_0$ such that the differential operator~$\Lambda^{\alpha,\epsilon} = \Lambda^{\alpha_0,\epsilon} + B^{\alpha_0,\epsilon}(\alpha-\alpha_0)$ on~$\mathrm{L}^2(\rr^N,\d\mu_0^\epsilon)$ with domain $\mathrm{W}^{2,2}(\rr^N,\d\mu_0^\epsilon)$ is closed for all~$\alpha \in \Omega$.
	Moreover, a straightforward estimate shows that $\varkappa \mapsto B^{\alpha,\epsilon}(\varkappa) f \in \mathrm{L}^2(\rr^N,\d\mu_0^\epsilon)$ is holomorphic whenever $f \in \mathrm{W}^{2,2}(\rr^N,\d\mu_0^\epsilon)$.
	Hence, for fixed~$\epsilon > 0$,~$\{\Lambda^{\alpha,\epsilon}\}_{\alpha \in \Omega}$ is a holomorphic family of type~(A) in the sense of~\cite[\S{VII.2.1}]{Kat}.
	By Proposition~\ref{prop:evec}, $e^\epsilon(\alpha_0)$ is a simple eigenvalue of~$\Lambda^{\alpha_0,\epsilon}$ and can be separated from the rest of~$\sp \Lambda^{\alpha_0,\epsilon}$ by a simple closed curve.
	Following~\cite[\S{VII.2.3}]{Kat}, the spectrum of~$(\Lambda^{\alpha,\epsilon},\mathrm{W}^{2,2}(\rr^N,\d\mu_0^\epsilon))$ is likewise separated into two parts for~$\alpha \in \Omega$ close enough to~$\alpha_0$, and $\alpha \mapsto e^\epsilon(\alpha)$ admits an analytic extension to a small complex neighbourhood of~$\alpha_0$.
\end{proof}

\begin{lemma}\label{lem:HF}
	Under Assumptions~\textnormal{(L0)}, \textnormal{(L1)} and~\textnormal{(RB)},
	\[
		\mathfrak{m}^\epsilon = -De^\epsilon(0).
	\]
\end{lemma}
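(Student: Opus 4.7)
The plan is to compute $De^\epsilon(0)$ via first-order analytic perturbation theory for the simple leading eigenvalue, exploiting the analytic family structure already established in Lemma~\ref{lem:reg-eigenval}, and to match the resulting expression with $\mathfrak{m}^\epsilon$.

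First I would expand $\Lambda^{\epsilon,\alpha}$ around $\alpha = 0$. Writing $\Lambda^{\epsilon,\alpha} = \Lambda^{\epsilon,0} + \alpha B^\epsilon + O(\alpha^2)$ on $\mathrm{W}^{2,2}(\rr^N,\d\mu_0^\epsilon)$, a direct differentiation of the coefficients in~\eqref{eq:def-Lambda} gives
\[
  B^\epsilon f = -2\braket{b,\nabla f} - \tfrac{1}{\epsilon}|b|^2 f + \tfrac{1}{\epsilon}\braket{b,\nabla V} f - f\operatorname{div} b.
\]
At $\alpha = 0$, $\Lambda^{\epsilon,0}$ is the Markov generator of~\eqref{eq:SDE}, so the constant function $\psi_0 \equiv 1$ lies in its kernel, and by the Perron--Frobenius-type result cited in the previous lemma this eigenvalue is simple, which fixes $e^\epsilon(0) = 0$. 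For the left eigenvector, I would identify the invariant density $\tilde\psi_0 := \d\lambda_\textnormal{inv}^\epsilon / \d\mu_0^\epsilon$ as spanning the kernel of the adjoint of $\Lambda^{\epsilon,0}$ in $\mathrm{L}^2(\rr^N,\d\mu_0^\epsilon)$; the normalisation $\int \tilde\psi_0 \psi_0 \, \d\mu_0^\epsilon = \lambda^\epsilon_\textnormal{inv}(\rr^N) = 1$ is automatic.

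Next, since $\{\Lambda^{\epsilon,\alpha}\}_{\alpha}$ is a type-(A) holomorphic family and the leading eigenvalue at $\alpha = 0$ is isolated and simple, the Kato--Rellich formula (see~\cite[\S{II.2.2}]{Kat}) yields
\[
  D e^\epsilon(0) = \int_{\rr^N} \tilde\psi_0 \, (B^\epsilon \psi_0) \, \d\mu_0^\epsilon = \int_{\rr^N} (B^\epsilon \cdot 1)\, \d\lambda_\textnormal{inv}^\epsilon.
\]
Since $\nabla 1 = 0$, one has $B^\epsilon \cdot 1 = -\tfrac{1}{\epsilon}|b|^2 + \tfrac{1}{\epsilon}\braket{b,\nabla V} - \operatorname{div} b$, and comparing with~\eqref{eq:moy} gives $De^\epsilon(0) = -\mathfrak{m}^\epsilon$ at once.

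The main obstacle is a bookkeeping one rather than a conceptual one: confirming that the Perron--Frobenius data at $\alpha = 0$ are exactly $(\psi_0,\tilde\psi_0) = (1, \d\lambda_\textnormal{inv}^\epsilon/\d\mu_0^\epsilon)$ in the correct $\mathrm{L}^2(\d\mu_0^\epsilon)$-duality, and verifying that $B^\epsilon$ is relatively $\Lambda^{\epsilon,0}$-bounded with arbitrarily small bound (so the perturbation series for the leading eigenvalue converges). The first point is immediate from $\Lambda^{\epsilon,0} 1 = 0$ and invariance of $\lambda^\epsilon_\textnormal{inv}$; the second was already established inside the proof of Lemma~\ref{lem:reg-eigenval} and in Appendix~\ref{app:semig}, so no new estimate is needed.
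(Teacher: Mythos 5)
Your argument is correct and is essentially the paper's own proof: both identify the right eigenvector at $\alpha=0$ as the constant function $1$ and the left eigenvector (in the $\mathrm{L}^2(\d\mu_0^\epsilon)$ duality) as the invariant density, then apply the first-order Kato/Hellmann--Feynman formula for the simple isolated leading eigenvalue of the type-(A) holomorphic family to obtain $De^\epsilon(0)=\int\bigl(-2\braket{b,\nabla}-\epsilon^{-1}|b|^2+\epsilon^{-1}\braket{b,\nabla V}-\operatorname{div}b\bigr)1\,\d\lambda^\epsilon_{\textnormal{inv}}=-\mathfrak{m}^\epsilon$. Your computation of the perturbation $B^\epsilon$ and the comparison with~\eqref{eq:moy} match the paper exactly.
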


\begin{proof}
	With the appropriate normalisation, the eigenvector corresponding to the eigenvalue~$e^\epsilon(0) = 0$ is the constant 1 and the corresponding eigenvector of the adjoint (the Fokker--Planck operator) is obtained from~$\lambda_\textnormal{inv}^\epsilon$; see the proof of Lemma~\ref{lem:pre-ub-ic-RN}. Because~$e^\epsilon$ is analytic in~$0$ and is a simple eigenvalue for all~$\alpha$ close enough to~$0$, the derivative can be computed using a formula colloquially known as the Hellmann--Feynman formula:
	\[
		De^\epsilon(0) = \int (-2\braket{b,\nabla} - \epsilon^{-1}|b|^2 + \epsilon^{-1}\braket{b,\nabla V} - \operatorname{div} b) 1 \d\lambda_\textnormal{inv}^\epsilon;
	\]
	see~(2.33) in~\cite[\S{II.2.2}]{Kat} and the argument in~\cite[\S{VII.1.3}]{Kat}.
\end{proof}

\begin{proposition}\label{prop:lim-t}
	Suppose that Assumptions~\textnormal{(L0)},~\textnormal{(L1)},~\textnormal{(RB)} and~\textnormal{(IP)} are satisfied. Then,
	\begin{equation}\label{eq:lim-t}
		\lim_{t\to\infty} \frac 1t \log \chi_t^{\epsilon}(\alpha) = e^\epsilon(\alpha)
	\end{equation}
	for all~$\alpha \in I^\epsilon$.
\end{proposition}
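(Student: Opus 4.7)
The plan is to combine the semigroup representation of $\chi_t^\epsilon(\alpha)$ from Proposition~\ref{prop:mgf} with the Perron--Frobenius-type convergence of $(\Exp{t\Lambda^{\alpha,\epsilon}})_{t\geq 0}$ stated in Appendix~\ref{app:semig}. Rewriting the representation~\eqref{eq:mgf} with respect to the reference measure~$\mu_0^\epsilon$ gives
\[
  \chi_t^\epsilon(\alpha) = \int_{\rr^N} g^{-\alpha} \,\tfrac{\d\lambda}{\d\mu_0^\epsilon} \,\big(\Exp{t\Lambda^{\alpha,\epsilon}} g^\alpha\big)\,\d\mu_0^\epsilon;
\]
under Assumption~\textnormal{(IP)}, $g^\alpha \in \mathrm{L}^{p_\alpha^\epsilon}(\rr^N,\d\mu_0^\epsilon)$ and $g^{-\alpha}\tfrac{\d\lambda}{\d\mu_0^\epsilon} \in \mathrm{L}^{q_\alpha^\epsilon}(\rr^N,\d\mu_0^\epsilon)$, which is precisely what is needed to bound this pairing by H\"older's inequality.

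For the upper bound, I would invoke the standard semigroup estimate $\|\Exp{t\Lambda^{\alpha,\epsilon}}\|_{\mathrm{L}^{p_\alpha^\epsilon} \to \mathrm{L}^{p_\alpha^\epsilon}} \leq M_\delta \Exp{(e^\epsilon(\alpha) + \delta)t}$ valid for every $\delta > 0$ (a consequence of $e^\epsilon(\alpha) = \spb(\Lambda^{\alpha,\epsilon})$ together with the spectral mapping theorem for analytic semigroups), combine with H\"older to get $\limsup_{t\to\infty} t^{-1}\log \chi_t^\epsilon(\alpha) \leq e^\epsilon(\alpha) + \delta$, and then let $\delta \to 0$.

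For the matching lower bound, the key tool is the large-time convergence promised in Appendix~\ref{app:semig}: there exists a strictly positive eigenvector $\varphi^{\alpha,\epsilon}$ of $\Lambda^{\alpha,\epsilon}$ with eigenvalue $e^\epsilon(\alpha)$ and a strictly positive eigenvector $\psi^{\alpha,\epsilon}$ of its adjoint such that
\[
  \Exp{-t e^\epsilon(\alpha)}\Exp{t\Lambda^{\alpha,\epsilon}} f \ \longrightarrow \ \varphi^{\alpha,\epsilon} \int_{\rr^N} f \,\psi^{\alpha,\epsilon}\, \d\mu_0^\epsilon
\]
in $\mathrm{L}^{p_\alpha^\epsilon}(\rr^N,\d\mu_0^\epsilon)$ as $t \to \infty$, for every $f \in \mathrm{L}^{p_\alpha^\epsilon}(\rr^N,\d\mu_0^\epsilon)$. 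Applying this with $f = g^\alpha$ and pairing with $g^{-\alpha}\tfrac{\d\lambda}{\d\mu_0^\epsilon}$ by H\"older yields
\[
  \lim_{t\to\infty} \Exp{-t e^\epsilon(\alpha)}\, \chi_t^\epsilon(\alpha) \ = \ \Big(\int_{\rr^N} g^{-\alpha}\varphi^{\alpha,\epsilon}\,\d\lambda\Big) \Big(\int_{\rr^N} g^\alpha \psi^{\alpha,\epsilon}\,\d\mu_0^\epsilon\Big),
\]
and taking logarithms and dividing by $t$ gives~\eqref{eq:lim-t}.

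The point requiring care, and what I expect to be the main obstacle, is verifying that the limit on the right-hand side is both \emph{finite and strictly positive} for every $\alpha \in I^\epsilon$. Finiteness is exactly what the H\"older pairing in Assumption~\textnormal{(IP)} delivers, while strict positivity follows from $g > 0$ everywhere together with the strict positivity of $\varphi^{\alpha,\epsilon}$ and $\psi^{\alpha,\epsilon}$ provided by the Perron--Frobenius statement in Appendix~\ref{app:semig}. This is also why~\eqref{eq:I} must be imposed as an $\mathrm{L}^p$--$\mathrm{L}^q$ duality condition rather than a single symmetric $\mathrm{L}^2$ condition: the admissible range of $\alpha$ genuinely depends on how quickly $g$ and $\tfrac{\d\lambda}{\d\mu_0^\epsilon}$ grow or decay at infinity.
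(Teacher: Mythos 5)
Your proposal is correct and follows essentially the same route as the paper: both rely on the representation from Proposition~\ref{prop:mgf}, the Perron--Frobenius convergence $\Exp{-te^\epsilon(\alpha)}\Exp{t\Lambda^{\alpha,\epsilon}}g^\alpha \to \psi^{\alpha,\epsilon}(u^{\alpha,\epsilon},g^\alpha)_{\mu_0^\epsilon}$ from Proposition~\ref{prop:evec}, H\"older duality supplied by~\textnormal{(IP)}, and strict positivity of $g$ and the eigenvectors to ensure the limiting constant is nonzero. The only cosmetic difference is that your separate ``upper bound'' step via the growth bound of the analytic semigroup is redundant once you have the convergence of $\Exp{-te^\epsilon(\alpha)}\chi_t^\epsilon(\alpha)$ to a finite, strictly positive limit, which the paper exploits directly.
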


\begin{proof}
	Fix~$\alpha \in I^\epsilon$ and pick $p = p_\alpha^\epsilon$ as in~\textnormal{(IP)}. Let~$\psi^{\alpha,\epsilon}$ [resp.~$u^{\alpha,\epsilon}$] be a~strictly positive right [resp.~left] eigenvector of~$\Lambda^{\alpha,\epsilon}$ for the eigenvalue~$e^\epsilon(\alpha)$ with the properties of Proposition~\ref{prop:evec}.
	By Proposition~\ref{prop:mgf}, we have
	\begin{align*}
		\int_{\cCt} \Exp{-\alpha \ss^{\lambda,\epsilon}_t }\d\cP^{\xi,\epsilon}_t
			& = \int_{\rr^N} \frac{\d\lambda}{\d\mu_0^\epsilon}(x) g^{-\alpha}(x) \big(\Exp{t \Lambda^{\alpha} } g^{\alpha} \big)(x) \,\mu_0^\epsilon(\d x) \\
			& = \Exp{t e^{\epsilon}(\alpha)}
				\int_{\rr^N} \frac{\d\lambda}{\d\mu_0^\epsilon}(x) g^{-\alpha}(x)  \big(\Exp{-t e^{\epsilon(\alpha)}}\Exp{t \Lambda^{\alpha} } g^{\alpha}  - \psi^{\epsilon,\alpha} (u^{\alpha,\epsilon},g^{\alpha})_{\mu_0^\epsilon} \big)(x) \,\mu_0^\epsilon(\d x)
			 \\&\qquad\qquad{}
      + \Exp{t e^{\epsilon}(\alpha)}J^{\alpha,\epsilon},
	\end{align*}
	where~$J^{\alpha,\epsilon}$ is finite, strictly positive and independent of~$t$. Recall that our choice of~$\alpha \in I^\epsilon$ satisfying condition~\eqref{eq:I} guarantees
	\[
		g^{\alpha}(x) \in \mathrm{L}^p(\rr^N,\d\mu_0^\epsilon)
			\quad \text{and} \quad
		\tfrac{\d\lambda}{\d\mu_0^\epsilon} g^{-\alpha}(x) \in \mathrm{L}^p(\rr^N,\d\mu_0^\epsilon)^*.
	\]
	Hence, using H\"older's inequality and Proposition~\ref{prop:evec} to control the difference in the integrand,
	\[
		\lim_{t\to\infty} \frac 1t \log \int_{\cCt} \Exp{-\alpha \ss^{\lambda,\epsilon}_t }\d\cP^{\xi,\epsilon}_t = e^{\epsilon}(\alpha).
	\]
	This is exactly the property of~$\chi^\epsilon_t$ that was to be proved.
\end{proof}

\begin{remark}
	In particular, in this regime, the mean canonical entropy production and the Chernoff and Hoeffding error exponents for the hypothesis testing of the arrow of time do not depend on the specific choice of initial distribution~$\lambda$, as long as it is mutually absolutely continuous with respect to~$\mu_0^\epsilon$. Actually, if one is solely interested in this fact, one only needs the proposition for~$\alpha \in [0,1]$ and can therefore relax Assumption~\textnormal{(IP)}. We refer the reader to~\textnormal{\cite[\S{6}]{JOPS12} }and~\textnormal{\cite[\S{1.7}]{EPbrief}}.
\end{remark}

\begin{corollary}\label{cor:symm}
	Under the same assumptions, the function $e^\epsilon : \cA \to \rr$  is convex and
	\[
		e^\epsilon(1-\alpha) = e^\epsilon(\alpha)
	\]
	for all~$\alpha \in \cA$.
\end{corollary}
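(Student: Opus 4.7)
The plan is to establish both claims first on the open interval~$I^\epsilon$, using the finite-$t$ machinery from Section~\ref{sec:EPF}, and then to extend them to all of~$\cA$ using the real-analyticity from Lemma~\ref{lem:reg-eigenval} together with a local argument.

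For the Gallavotti--Cohen symmetry, I work in the canonical case $g = \d\lambda/\d\mu_0^\epsilon$, so that by Proposition~\ref{prop:RN} one has $\ss_t^\epsilon = \log\rho$ with $\rho := \d\cP_t^{\lambda,\epsilon}/\d(\cP_t^{\lambda,\epsilon} \circ \Theta_t^{-1})$. The sign change of the Stratonovich integral under~$\Theta_t$ forces the antisymmetry $\rho\circ\Theta_t = \rho^{-1}$, and together with the pushforward identity $\int f\,\d(\cP_t^{\lambda,\epsilon}\circ\Theta_t^{-1}) = \int f\circ\Theta_t\,\d\cP_t^{\lambda,\epsilon}$ this gives
\[
  \chi_t^\epsilon(\alpha) = \int \rho^{-\alpha}\,\d\cP_t^{\lambda,\epsilon} = \int \rho^{1-\alpha}\,\d(\cP_t^{\lambda,\epsilon}\circ\Theta_t^{-1}) = \int (\rho\circ\Theta_t)^{1-\alpha}\,\d\cP_t^{\lambda,\epsilon} = \chi_t^\epsilon(1-\alpha)
\]
for every $\alpha$ for which both sides are finite, in particular for $\alpha$ in a symmetric subinterval of~$I^\epsilon$ about~$\tfrac 12$. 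Dividing by~$t$ and invoking Proposition~\ref{prop:lim-t} gives $e^\epsilon(\alpha) = e^\epsilon(1-\alpha)$ on that subinterval. Since~$\cA$ is a connected interval symmetric about~$\alpha = \tfrac 12$ and the two maps $\alpha \mapsto e^\epsilon(\alpha)$ and $\alpha \mapsto e^\epsilon(1-\alpha)$ are real-analytic on~$\cA$ by Lemma~\ref{lem:reg-eigenval}, the identity theorem extends the equality to all of~$\cA$.

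For convexity, the finite-time function $\alpha \mapsto \log\chi_t^\epsilon(\alpha)$ is convex by the standard H\"older estimate for moment-generating functions, so convexity passes to the pointwise $t\to\infty$ limit on~$I^\epsilon$ thanks to Proposition~\ref{prop:lim-t}. The extension to all of~$\cA$ I carry out locally: given $\alpha_0 \in \cA$, the definition of~$\cA$ provides $\ell \in (0,1)$ and some $p > 1$ for which~\eqref{eq:admiss-I} holds on a neighborhood of~$\alpha_0$; one can then choose an auxiliary pair~$(\tilde g, \tilde\lambda)$---for instance with $\tilde g$ proportional to $\Exp{-\sigma V/\epsilon}$ for a suitable $\sigma \in \rr$---so that the integrability condition~\eqref{eq:I} also holds on that neighborhood. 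The proofs of Propositions~\ref{prop:mgf} and~\ref{prop:lim-t} use~(IP) only through~\eqref{eq:admiss-I} and~\eqref{eq:I} on a single interval, not through the specific requirement that~$[0,1] \subset I^\epsilon$, so they apply verbatim to~$(\tilde g, \tilde\lambda)$ and give convexity of~$e^\epsilon$ in a neighborhood of~$\alpha_0$. Varying~$\alpha_0$ over the interval~$\cA$ turns this local convexity into global convexity. I expect the main subtlety to lie here: producing the auxiliary pair~$(\tilde g, \tilde\lambda)$ and verifying the integrability condition~\eqref{eq:I} for arbitrary~$\alpha_0 \in \cA$, especially when~$\alpha_0$ is far from~$[0,1]$.
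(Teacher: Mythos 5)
Your core idea is the paper's: the finite-$t$ symmetry and log-convexity of~$\chi_t^\epsilon$ coming from Proposition~\ref{prop:RN} (equation~\eqref{eq:ES} and H\"older), passed to the limit via Proposition~\ref{prop:lim-t}. The difference is in how you cover all of~$\cA$. The paper simply takes $g \equiv 1$ and $\lambda = \mu_0^\epsilon$; for that pair both conditions in~\eqref{eq:I} are trivial (constants lie in every $\mathrm{L}^p(\rr^N,\d\mu_0^\epsilon)$), and~\eqref{eq:admiss-I} with $p_\alpha^\epsilon = 2$ is exactly the defining inequality of~$\cA$, so Assumption~\textnormal{(IP)} holds with $I^\epsilon = \cA$ and Propositions~\ref{prop:mgf} and~\ref{prop:lim-t} deliver both the symmetry and convexity on all of~$\cA$ in one step. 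Your analytic-continuation argument for the symmetry is valid but redundant, and the ``main subtlety'' you flag for convexity---constructing an auxiliary pair~$(\tilde g,\tilde\lambda)$ satisfying~\eqref{eq:I} near an arbitrary $\alpha_0 \in \cA$---dissolves entirely once you notice that $\tilde g \equiv 1$, $\tilde\lambda = \mu_0^\epsilon$ works uniformly in~$\alpha_0$. One further caveat: you phrase the symmetry step for a generic canonical pair $g = \d\lambda/\d\mu_0^\epsilon$ without checking that~\textnormal{(IP)} holds for it (for general~$\lambda$ it need not, especially outside~$[0,1]$); since $e^\epsilon$ is independent of~$(g,\lambda)$, you should fix the specific pair $\lambda=\mu_0^\epsilon$, $g\equiv 1$ at the outset rather than leave~$\lambda$ unspecified.
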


\begin{proof}
	Consider the particular case $g \equiv 1$ and $\lambda = \mu_0^\epsilon$ and take the appropriate limit in the second part of Proposition~\ref{prop:RN} using Proposition~\ref{prop:lim-t}.
\end{proof}

For $\varsigma \in \{ -D e^\epsilon(\alpha) : \alpha \in \cA\}$, set
\begin{equation}
	\label{eq:Legendre}
	e^\epsilon_*(\varsigma) := \sup_{\alpha \in \cA} \big(-\alpha \varsigma - e^\epsilon(\alpha)\big).
\end{equation}
It is immediate from Corollary~\ref{cor:symm}, the symmetry~$\cA = 1-\cA$ and the definition of~$e_*^\epsilon$ that
\begin{equation}
	\label{eq:CG-e}
	e^\epsilon_*(\varsigma) - e^\epsilon_*(-\varsigma) = -\varsigma
\end{equation}
for all~$\varsigma \in \{ -D e^\epsilon(\alpha) : \alpha \in \cA\}$.
Combining Lemma~\ref{lem:reg-eigenval}, Proposition~\ref{prop:lim-t} and a local version of the G\"artner--Ellis theorem (see e.g.~\cite[\S{A.2}]{JOPP}), we get the following large deviation result. The symmetry~\eqref{eq:CG-e} of the rate function~$e_*^\epsilon$ in this \textsc{ldp} is referred to as the \emph{Gallavotti--Cohen symmetry}.

\begin{proposition}\label{prop:ldp-t}
	Under assumptions~\textnormal{(L0)}, \textnormal{(L1)}, \textnormal{(RB)} and \textnormal{(IP)}, if $E$ is a Borel set with $\cl(E) \subset
	\{-De^\epsilon(\alpha) : \alpha \in I^\epsilon \}$, then
	\begin{align*}
		-\inf_{\varsigma \in \interior(E)} e^\epsilon_*(\varsigma)
			&\leq \liminf_{t \to \infty} t^{-1} \log \cP_t^\epsilon \Big\{ t^{-1} \ss^\epsilon_t \in E \Big\} \\
			&\leq \limsup_{t \to \infty} t^{-1} \log \cP_t^\epsilon \Big\{ t^{-1} \ss^\epsilon_t \in E \Big\} \leq -\inf_{\varsigma \in \cl(E)} e^\epsilon_*(\varsigma).
	\end{align*}
\end{proposition}

{By a standard argument, the above results imply that, under assumptions~\textnormal{(L0)}, \textnormal{(L1)}, \textnormal{(RB)}, \textnormal{(IP)} and $\mathfrak{m}^\epsilon > 0$, the weak law of large numbers of Proposition~\ref{prop:lln-t} occurs with exponentially fast convergence and can thus be strengthened to a strong law of large numbers.}

\section{The linear case}\label{sec:lin}

We have shown in Section~\ref{sec:large-t} that the large deviations of~$\ss_t^\epsilon$ can be understood in terms of the leading eigenvalue~$e^\epsilon(\alpha)$ of~$\Lambda^{\alpha,\epsilon}$ and its Legendre transform~\eqref{eq:Legendre}. We devote the present section to the study of these quantities in the case where we make the additional assumptions that~$V$ is quadratic and~$b$ is linear\,---\,equivalently~$V(x) = \tfrac 12 \braket{x,D^2Vx}$ and $b(x) = Db\, x$ up to a shift in the space coordinates. Note that~\textnormal{(ND)} is then a consequence of~\textnormal{(L0)}, which becomes
\begin{equation}\label{eq:Cun-lin-case}
	D^2 V > 0.
\end{equation}
Assumption~\textnormal{(RB)} becomes
\begin{equation}\label{eq:Ctrois-lin-case}
	\braket{Db\, x,  D^2V x} \leq k_b|D^2V x|^2
\end{equation}
for all~$x\in\rr^N$.

The linear case is particularly important for several reasons. First and foremost, we will see in Sections~\ref{sec:small-e} and~\ref{app:semiclass} that the general case can be reduced to this one in the limit~$\epsilon \to 0$.
Second, linearity makes computations more tractable and allows to give a characterisation of the vanishing of the mean entropy production per unit time~$\mathfrak{m}^\epsilon$.

Note that the operator~$\Lambda^{\alpha,\epsilon}$ introduced in~\eqref{eq:def-Lambda} is in this case isospectral to the $\epsilon$-independent operator
\begin{equation}\label{def:quad-op}
	Q^{\alpha} = \upDelta + \braket{\ell_{B^{(\alpha)}} , \nabla} - q_{K^{(\alpha)}} + \tfrac 12 \tr D^2V - \alpha \tr Db
\end{equation}
where $\ell_{B^{(\alpha)}}$ is the auxiliary linear vector field~$x \mapsto B^{(\alpha)}x$ and~$q_{K^{(\alpha)}}$ is the auxiliary quadratic potential~$x \mapsto \braket{x, K^{(\alpha)}x}$, with
\[
	B^{(\alpha)} := (1-2\alpha) Db
\]
and
\[
	K^{(\alpha)} := \tfrac 14 (D^2V)^2 - \tfrac 14 (Db^\mathsf{T}D^2V + D^2VDb) + \alpha(1-\alpha) Db^\mathsf{T}Db.
\]
To see this, conjugate with the Gaussian weight $\Exp{-(2\epsilon)^{-1}V}$ and its inverse and then make a change of variable~$x \mapsto \epsilon^{1/2}x$.

Such elliptic operators with quadratic symbols have been fairly well understood since the seminal work of~\cite{Sj74}. Here, inspired by~\cite{FS97,JPS17}, we emphasise a slightly different point of view, which relies on the study of the corresponding \emph{algebraic Riccati equation} (\textsc{are})
\begin{equation}
\label{eq:ARE-abs-first}
	X^2 - \tfrac 12 (B^{(\alpha)})^\mathsf{T} X - \tfrac 12 X B^{(\alpha)} - K^{(\alpha)} = 0
\end{equation}
for a symmetric matrix~$X$. The general theory of such equations is discussed in~\cite{LaRo}. See~\cite{BCX20} for yet another approach in a special case.

\begin{proposition}\label{prop:eval-ARE}
	For all~$\alpha \in \cA$, the \textsc{are}~\eqref{eq:ARE-abs-first} admits a maximal solution~$X^{(\alpha)}$ and
	\[
		\spb Q^{\alpha} = -\tr X^{(\alpha)} + \tfrac 12 \tr D^2V - \alpha \tr Db.
	\]
	Moreover, $\alpha \mapsto \tr X(\alpha)$ defines a real-analytic function on~$\cA$ and we have the identity
	\begin{equation}
		\tr X^{(\alpha)}  = -\frac 12 \bigg( \tr B^{(\alpha)} -  \sum_{\lambda^{(\alpha)} \in \sp \mathcal{K}_\textnormal{Ham}^{(\alpha)}} |\Re \lambda^{(\alpha)}| \bigg)
	\end{equation}
	where
	\begin{equation}\label{eq:Ham-matrix}
		\mathcal{K}_\textnormal{Ham}^{(\alpha)}
		:=
		\left[
		\begin{matrix}
			-\tfrac 12 B^{(\alpha)}		& \one \\
			K^{(\alpha)}								& \tfrac 12 (B^{(\alpha)})^\mathsf{T}
		\end{matrix}
		\right].
	\end{equation}
\end{proposition}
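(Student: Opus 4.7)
The plan is to reduce the identification of $\spb Q^{\alpha}$ to a Gaussian ansatz for the elliptic operator $Q^{\alpha}$ and then to exploit the classical correspondence between symmetric solutions of the \textsc{are} and Lagrangian invariant subspaces of $\mathcal{K}_\textnormal{Ham}^{(\alpha)}$. Specifically, for any symmetric $X > 0$ the Gaussian $\psi_X(x) := \Exp{-\tfrac 12 \braket{x,Xx}}$ lies in $\mathrm{L}^2(\rr^N,\d x)$, and a direct differentiation yields
\[
	Q^{\alpha}\psi_X = \bigl(-\tr X + \tfrac 12 \tr D^2 V - \alpha \tr Db\bigr)\psi_X
\]
precisely when $X$ solves the \textsc{are}~\eqref{eq:ARE-abs-first}. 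I would then argue that for $\alpha \in \cA$ the matrix $\mathcal{K}_\textnormal{Ham}^{(\alpha)}$ has no purely imaginary eigenvalues; under this condition, the general theory of Lagrangian invariant subspaces~\cite{LaRo} provides a distinguished symmetric solution $X^{(\alpha)}$ obtained as the graph of the invariant subspace associated with the $N$ eigenvalues of $\mathcal{K}_\textnormal{Ham}^{(\alpha)}$ of one chosen sign of real part, and this solution is maximal and positive definite. Showing the absence of purely imaginary spectrum\,---\,equivalently, a nondegeneracy of the Hamiltonian bilinear form encoded in the geometric inequality defining~$\cA$\,---\,is the main technical input and the step I expect to be the hardest.

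With $X^{(\alpha)}$ in hand, the formula for $\spb Q^{\alpha}$ follows by combining the Gaussian eigenfunction constructed above with the Perron--Frobenius-type statement of Proposition~\ref{prop:evec}: since $\psi_{X^{(\alpha)}}$ is strictly positive and, via conjugation and rescaling, $Q^{\alpha}$ is isospectral to $\Lambda^{\alpha,\epsilon}$ whose semigroup is compact and irreducible, a strictly positive eigenfunction can only correspond to the spectral bound. Real-analyticity of $\alpha \mapsto \tr X^{(\alpha)}$ on $\cA$ then follows from Lemma~\ref{lem:reg-eigenval} through this identity; alternatively, it follows from the analytic dependence of the maximal Lagrangian invariant subspace so long as the stable and antistable spectra of $\mathcal{K}_\textnormal{Ham}^{(\alpha)}$ remain separated, which is the case throughout~$\cA$.

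The trace identity is the most explicit step. I would perform the block similarity with $P := \bigl(\begin{smallmatrix} \one & 0 \\ X^{(\alpha)} & \one \end{smallmatrix}\bigr)$ and compute $P^{-1}\mathcal{K}_\textnormal{Ham}^{(\alpha)} P$; the \textsc{are} forces the lower-left block to vanish and leaves diagonal blocks $-\tfrac 12 B^{(\alpha)} + X^{(\alpha)}$ and $\tfrac 12 (B^{(\alpha)})^{\mathsf T} - X^{(\alpha)}$. These two blocks are related by $Y \mapsto -Y^{\mathsf T}$, which reflects the Hamiltonian symmetry $\lambda \mapsto -\lambda$ of $\sp \mathcal{K}_\textnormal{Ham}^{(\alpha)}$. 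Identifying $X^{(\alpha)}$ with the graph of the invariant subspace carrying one half of the Hamiltonian spectrum and taking traces then gives the stated identity after rearrangement.
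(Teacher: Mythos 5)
Your proposal follows the same Gaussian-ansatz/\textsc{are} template as the paper, but it diverges on the two structural steps and is worth comparing on both.

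\emph{Existence of the maximal solution.} You route this through the Hamiltonian matrix: show $\mathcal{K}_\textnormal{Ham}^{(\alpha)}$ has no purely imaginary eigenvalues, then take the graph of the stable Lagrangian invariant subspace. You correctly flag the dichotomy of the spectrum as the hard step. The paper sidesteps it entirely: since $K^{(\alpha)} > 0$ on $\cA$ (this follows from the definition of $\cA$ and nondegeneracy of $D^2V$), the zero matrix is a strict subsolution, $R(\alpha,0) = -K^{(\alpha)} < 0$, and existence of a maximal positive-definite solution with $-X^{(\alpha)}+\tfrac 12 B^{(\alpha)}$ stable follows at once from~\cite[\S{9.1}]{LaRo}. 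This is both shorter and more robust than arguing directly about $\sp\mathcal{K}_\textnormal{Ham}^{(\alpha)}$; in fact the dichotomy you want is a \emph{consequence} of the subsolution argument rather than a prerequisite for it. If you insist on the Hamiltonian-matrix route, you owe an actual proof of the spectral gap across the imaginary axis, and I do not see one in your sketch.

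\emph{Identifying the eigenvalue with the spectral bound.} Here your route is genuinely different and, I think, slightly cleaner. You invoke the Perron--Frobenius machinery (Proposition~\ref{prop:evec}, via isospectrality of $Q^{\alpha}$ with $\Lambda^{\alpha,\epsilon}$): a strictly positive eigenvector in the domain forces its eigenvalue to be $\spb Q^{\alpha}$, by the standard pairing argument with the strictly positive left eigenfunctional. The paper instead specialises to $\alpha = \tfrac 12$, where $B^{(1/2)}=0$ and $Q^{1/2}$ reduces to a (shifted) self-adjoint harmonic oscillator whose ground-state energy is $\tr\sqrt{K^{(1/2)}} = \tr X^{(1/2)}$, and then propagates the identification to all of $\cA$ by simplicity and continuity of $\spb Q^\alpha$. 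Both arguments are sound; yours avoids the continuity bootstrap, while the paper's is self-contained at the matrix-analysis level.

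\emph{Analyticity and the trace identity.} Your block-similarity computation with $P = \bigl(\begin{smallmatrix}\one & 0 \\ X^{(\alpha)} & \one\end{smallmatrix}\bigr)$ is exactly the content of~\cite[\S{8.3}]{LaRo}, which the paper cites without spelling out; the resulting block-triangular form with diagonal blocks $X^{(\alpha)}-\tfrac 12 B^{(\alpha)}$ and $-(X^{(\alpha)}-\tfrac 12 B^{(\alpha)})^{\mathsf T}$, together with stability of the second block, yields the stated trace formula after taking real parts. For analyticity, the paper appeals to~\cite[\S{11.3}]{LaRo} directly (analytic dependence of the maximal solution on the coefficients, valid as long as the solution remains stabilising), which is preferable to deducing it circularly from Lemma~\ref{lem:reg-eigenval}; your alternative via analytic dependence of the stable invariant subspace is equivalent.

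In short: correct overall, same ansatz, but you should replace the ``no imaginary eigenvalues'' step with the $K^{(\alpha)}>0 \Rightarrow$ subsolution argument, which is what actually makes the existence claim quick. Your Perron--Frobenius identification of $\spb Q^\alpha$ is a legitimate and arguably tidier alternative to the paper's $\alpha=\tfrac 12$ continuity argument.
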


\begin{proof}
  Consider~$\phi_X(x) := \exp (-\tfrac 12 \braket{x,Xx})$ for some positive-definite matrix~$X$ and compute
	\begin{align*}
		(Q^\alpha \phi_X)(x) &=  - \tr X \ \phi_X(x) + \braket{Xx,Xx} \phi_X(x) - \braket{B^{(\alpha)}x,Xx}\phi_X(x)
		\\ & \qquad {}
		- \braket{x,K^{(\alpha)}x}\phi_X(x) + (\tfrac 12 \tr D^2V - \alpha \tr Db)\phi_X(x).
	\end{align*}
	Note that~$\phi_X$ is an eigenvector with eigenvalue $-\tr X + \tfrac 12 \tr D^2V - \alpha \tr Db$ if
	\begin{equation}
	\label{eq:ARE-abs}
		R(\alpha,X) := X^2 - \tfrac 12 (B^{(\alpha)})^{\mathsf{T}} X - \tfrac 12 X B^{(\alpha)} - K^{(\alpha)} = 0.
	\end{equation}
	Because~$K^{(\alpha)}$ is positive definite for all~$\alpha \in \cA$, $R(\alpha, 0) < 0$. Therefore, there exists a maximal positive-definite matrix~$X^{(\alpha)}$ such that~$R(\alpha,X^{(\alpha)})=0$, and~$-X^{(\alpha)} + \tfrac 12 B^{(\alpha)}$ is stable~\cite[\S{9.1}]{LaRo}.
	This argument is valid for all~$\alpha \in \cA$ and~$X^{(\alpha)}$ is a real-analytic function of~$\alpha \in \cA$~\cite[\S{11.3}]{LaRo}.

	In $\alpha = \tfrac 12 $, we have $R(\tfrac 12,X) = X^2 - K^{(1/2)}$ and the square root of~$K^{(1/2)}$
	clearly is the maximal solution to the \textsc{are}~$R(\tfrac 12,X) = 0$. But the trace of this maximal solution coincides with the smallest eigenvalue of the quantum harmonic oscillator~$-\upDelta + q_{K^{(1/2)}}$. Thus, first part of the lemma follows by simplicity and continuity of~$\spb Q^\alpha$.
	Relations between the eigenvalues of~$-X^{(\alpha)} + \tfrac 12 B^{(\alpha)}$ and those of the matrix~\eqref{eq:Ham-matrix} are discussed in~\cite[\S{8.3}]{LaRo}.
\end{proof}

\begin{remark}\label{rem:formula-is-ok-for-sp}
	Note that once a Gaussian weight is introduced to define~$Q^\alpha$, the method for obtaining the formula for its leading eigenvalue does not appeal to the fact~$D^2V > 0$, but only to the fact that~$(D^2 V)^2 > 0$.
\end{remark}

\begin{proposition}\label{prop:locally-a-grad}
	Under the assumptions of Proposition~\ref{prop:lim-t} and the additional assumption that $V$ is quadratic and~$b$ is linear,
	\begin{equation}\label{eq:lim-t-lin}
		\lim_{t \to \infty} \frac 1t \log \chi_t^\epsilon(\alpha) = -\tr X^{(\alpha)} + \tfrac 12 \tr D^2V - \alpha \tr Db,
	\end{equation}
	for all~$\alpha \in \cA$. Moreover,
	\begin{enumerate}
		\item[i.] if the matrix~$Db$ is not symmetric, then the mean entropy production per unit time~$\mathfrak{m}^\epsilon$ is strictly positive and independent of~$\epsilon$ and the rate function~$e^\epsilon_*$ in Proposition~\ref{prop:ldp-t} is strictly convex and independent of~$\epsilon$;
		\item[ii.] if the matrix~$Db$ is symmetric, then~$\mathfrak{m}^\epsilon = 0$.
	\end{enumerate}
\end{proposition}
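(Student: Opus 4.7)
To begin, Proposition~\ref{prop:lim-t} already supplies $\lim_{t\to\infty} t^{-1} \log \chi_t^\epsilon(\alpha) = \spb(\Lambda^{\alpha,\epsilon})$ for $\alpha \in I^\epsilon$. In the linear setting I would carry out the conjugation of $\Lambda^{\alpha,\epsilon}$ by the Gaussian weight $\Exp{-(2\epsilon)^{-1}V}$ followed by the rescaling $x \mapsto \sqrt{\epsilon}\, x$ indicated right after~\eqref{def:quad-op}. This turns $\Lambda^{\alpha,\epsilon}$ into the $\epsilon$-independent operator $Q^{\alpha}$, so that $\spb(\Lambda^{\alpha,\epsilon}) = \spb(Q^{\alpha})$, and Proposition~\ref{prop:eval-ARE} yields~\eqref{eq:lim-t-lin}. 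The same argument shows $e^\epsilon$ to be $\epsilon$-independent on the (already $\epsilon$-independent) set~$\cA$, hence the same is true of the Legendre transform $e_*^\epsilon$; in particular the $\epsilon$-independence claimed in~(i) will drop out once the remaining qualitative claims are established.

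Writing $A := D^2V$ and $B := Db$, I would next compute $\mathfrak{m}^\epsilon$ directly from~\eqref{eq:moy}. Assumption~\textnormal{(L1)} implies that~\eqref{eq:SDE} admits a unique centered Gaussian invariant measure with covariance matrix $\epsilon C$, where $C$ is the unique symmetric positive-definite solution of the Lyapunov equation $(A-B)C + C(A-B)^{\mathsf{T}} = 2I$. Plugging Gaussian second moments and $\operatorname{div} b = \tr B$ into~\eqref{eq:moy} then gives
\[
	\mathfrak{m}^\epsilon = \tr\!\big[B^{\mathsf{T}}(B-A)C\big] + \tr B,
\]
which is manifestly $\epsilon$-independent. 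To unlock the dichotomy~(i)--(ii), I would recast this as the sum-of-squares
\[
	\mathfrak{m}^\epsilon = \tr\!\big[M^{\mathsf{T}} M\, C\big], \qquad M := C^{-1} - A + B,
\]
in which $Mx$ is recognisable as the irreversible drift of the Gaussian steady state. Non-negativity is then immediate, and $\mathfrak{m}^\epsilon = 0$ is equivalent to $M = 0$, i.e.\ $C^{-1} = A - B$; since $C^{-1}$ is symmetric, this forces $B$ to be symmetric. Conversely, when $B$ is symmetric the matrix $(A-B)^{-1}$ solves the Lyapunov equation by inspection and gives $M = 0$. The step I expect to be the main obstacle is the verification of the sum-of-squares identity itself, which calls for careful trace manipulations using the Lyapunov equation together with $A = A^{\mathsf{T}}$ and $C = C^{\mathsf{T}}$ (alternatively, one can reach it by the general formula $\mathfrak{m}^\epsilon = \epsilon^{-1}\int |v_{\textnormal{irr}}|^2 \,\d\lambda_\textnormal{inv}^\epsilon$ for the stationary entropy production of a nondegenerate Markov diffusion).

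For the strict convexity of $e_*^\epsilon$ in case~(i), I would exploit the rigidity coming from three facts: real-analyticity (Lemma~\ref{lem:reg-eigenval}), convexity with the Gallavotti--Cohen symmetry $e^\epsilon(1-\alpha) = e^\epsilon(\alpha)$ (Corollary~\ref{cor:symm}), and the normalisation $e^\epsilon(0) = 0$. If $e^\epsilon$ failed to be strictly convex, convexity would force it to be affine on some subinterval of the connected open set~$\cA$; analyticity would promote this to affinity throughout~$\cA$; the reflection symmetry about $\alpha = \tfrac 12$ would reduce this further to constancy; and the normalisation would pin that constant to $0$. By the Hellmann--Feynman identity of Lemma~\ref{lem:HF}, this would force $\mathfrak{m}^\epsilon = -De^\epsilon(0) = 0$, contradicting the preceding paragraph whenever $B$ is not symmetric. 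Strict convexity of $e^\epsilon$ therefore holds in case~(i), and standard Legendre duality for smooth strictly convex functions transfers strict convexity to $e_*^\epsilon$ on the image of $-De^\epsilon$, completing the proof.
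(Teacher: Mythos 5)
Your proof is correct, but it reaches the dichotomy (i)--(ii) by a genuinely different route. The paper never computes $\mathfrak{m}^\epsilon$ explicitly: it observes that, by convexity, analyticity and $e^\epsilon(0)=e^\epsilon(1)=0$, failure of strict convexity is equivalent to $e^\epsilon(\tfrac 12)=0$, evaluates $e^\epsilon(\tfrac 12)$ through the \textsc{are} at $\alpha=\tfrac 12$ (where the maximal solution is $\sqrt{K^{(1/2)}}$), and so reduces everything to the matrix identity $\tr\sqrt{(D^2V-Db)^{\mathsf{T}}(D^2V-Db)}=\tr(D^2V-Db)$, which is unravelled by a singular-value argument and then decided using \eqref{eq:Cun-lin-case}--\eqref{eq:Ctrois-lin-case}; the sign of $\mathfrak{m}^\epsilon$ then falls out of Lemma~\ref{lem:HF}. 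You instead compute $\mathfrak{m}^\epsilon$ directly from \eqref{eq:moy} via the Gaussian invariant covariance and the Lyapunov equation, and exhibit it as $\tr[M^{\mathsf{T}}MC]$ with $M=C^{-1}-A+B$ the irreversible-drift matrix; I checked the trace manipulations (they rest on $\tr C^{-1}=\tr(A-B)$ and $\tr[A(A-B)C]=\tr A$, both consequences of the Lyapunov equation together with the symmetry of $A$ and $C$) and the identity closes, as does the equivalence $M=0 \Leftrightarrow Db$ symmetric. Your rigidity argument for strict convexity (affine on a subinterval $\Rightarrow$ identically zero by analyticity, symmetry and normalisation, whence $\mathfrak{m}^\epsilon=0$ by Lemma~\ref{lem:HF}) is essentially the paper's, just run against your independently established sign of $\mathfrak{m}^\epsilon$ rather than against the $\alpha=\tfrac 12$ evaluation. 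What each buys: the paper stays entirely inside the \textsc{are} formalism and never needs the invariant measure; yours yields an explicit, physically transparent formula for $\mathfrak{m}^\epsilon$ with manifest non-negativity. Two points worth tightening: state that \textnormal{(L1)} makes $A-B$ stable, so that the Lyapunov equation has a \emph{unique} positive-definite solution (needed both to define $C$ and, in the converse direction of (ii), to identify $C=(A-B)^{-1}$); and note that Proposition~\ref{prop:lim-t} supplies the limit only for $\alpha\in I^\epsilon$, so the extension of \eqref{eq:lim-t-lin} to all of $\cA$ deserves a remark --- though the paper elides the same point.
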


\begin{proof}
	Combining Proposition~\ref{prop:lim-t} and Proposition~\ref{prop:eval-ARE} with the fact that
	\[
		e^\epsilon(\alpha) = \spb Q^\alpha
	\]
	immediately gives~\eqref{eq:lim-t-lin}. It follows from Corollary~\ref{cor:symm} that~$e^\epsilon$ is convex on~$\cA$ and that~$e^\epsilon(0) = e^\epsilon(1) =0$.
	Hence, by analyticity, it will fail to be strictly convex if and only if it vanishes identically, which is in turn equivalent to~$e^\epsilon(\tfrac 12) = 0$.
	This last condition takes the explicit form
	\[
		\tr\sqrt{(D^2V - Db)^\mathsf{T}(D^2V-Db)} = \tr(D^2V-Db).
	\]
	Let $A := D^2V-Db$ and $|A| := \sqrt{A^\mathsf{T}A}$. We can find orthonormal bases~$\{v_i\}_{i=1}^N$ and~$\{w_i\}_{i=1}^N$ of~$\cc^N$ such that $A = \sum_{i=1}^N \mu_i v_i \braket{w_i,\cdot\,}$ and $|A|  = \sum_{i=1}^N \mu_i w_i \braket{w_i,\cdot\,}$,
	where $\{\mu_i\}_{i=1}^N$ are the singular values of~$A$ listed with multiplicity; see e.g.\ \cite[\S{3.5}]{Sim4}.
	Computing traces in the basis~$\{w_i\}_{i=1}^N$ and using $\mu_i \geq 0$, we find that $\tr A = \tr |A|$ implies
	$\braket{v_i,w_i} = 1$ for each~$i$ such that $\mu_i \neq 0$. Because~$|w_i| = |v_i| = 1$, $\braket{v_i,w_i} = 1$ implies $w_i = v_i$ and we conclude that $A = |A|$. Of course, $A = |A|$ implies $\tr A = \tr |A|$.

	Since $D^2V$ is already symmetric, $A = |A|$ if and only if $Db^\mathsf{T} = Db$ and all the eigenvalues of $D^2V - Db$ are nonnegative.
	For the second condition only to fail, we would need a nonzero vector~$u$ and a strictly positive number~$\lambda$ such that $(D^2V-Db)u = -\lambda u$. Taking an inner product with $D^2V u$ in this eigenvalue equation gives
	\begin{align*}
		\braket{Db \, u, D^2V u} &= \big|D^2Vu\big|^2 + \lambda \braket{u, D^2V u},
	\end{align*}
	would then contradict~\eqref{eq:Cun-lin-case}--\eqref{eq:Ctrois-lin-case}.
\end{proof}

Note that Case~i in Proposition~\ref{prop:locally-a-grad} occurs if and only if the linear vector field~$b$ is nonconservative; Case~ii, if~$b$ is conservative. To see this, recall that the Hessian of a sufficiently regular function is always symmetric and that the gradient of a function of the form~$x \mapsto \tfrac 12 \braket{x,Bx}$ is the linear vector field $x \mapsto \tfrac 12 (B + B^{\mathsf{T}})x$. In view of this, we will say that a nonlinear vector field~$b$ ``behaves like a gradient'' near a point~$x$ if~$Db|_x$ is is symmetric.

\section{The rate function in the vanishing-noise limit}
\label{sec:small-e}

We consider the limit~$\epsilon \to 0$. The main result of this section is the local \textsc{ldp} of Theorem~\ref{prop:ldp-t-eps}, but we also discuss the behaviour of the mean entropy production per unit time. It is reasonable to allow the initial condition~$\lambda$ and the function~$g$ to change with~$\epsilon$\,---\,it is in fact necessary if one wants to study the steady-state canonical entropy production. We require Assumption~\textnormal{(IP)} to hold with a certain uniformity in~$\epsilon$.
\begin{description}
	\item[Assumption (IPu).] There exists an open interval~$I^0$ containing~$0$ and~$1$, and whose closure is contained in $\liminf_{\epsilon\to 0} I^\epsilon$, where~$I^\epsilon$
	is as in Assumption~\textnormal{(IP)} with $g$ replaced with $g^\epsilon$ and $\lambda$ replaced with~$\lambda^\epsilon$.
\end{description}
Before we proceed to the general statements and proofs, let us illustrate the main points with an example.

\begin{example}\label{ex:orth-0-div}
	Let $V$ be a potential satisfying our general assumptions and suppose that its global minimum is achieved in a single point~$x_{j^\star}$.
	Suppose that $b$ satisfies our general assumptions as well as $\operatorname{div} b \equiv 0$ and $\braket{b,\nabla V} \equiv 0$, and consider the steady-state functional with~$g \equiv 1$. This is a situation in which one can easily show that~$\lambda = \lambda_\textnormal{inv}^\epsilon = \mu_0^\epsilon$.

	At the level of the mean entropy production per unit time, one can
	show the convergence $\mathfrak{m}^\epsilon \to \mathfrak{m}_{j^\star}$,  where~$\mathfrak{m}_{j^\star}$ is as in Section~\ref{sec:lin} for the linear problem near~$x_{j^\star}$. In particular, we have strict positivity of the limit if and only if~$b$ does not behave like a gradient near~$x_{j^\star}$. This strict positivity is a key signature of nonequilibrium.

	At the level of the fluctuations, the situation is the following. If~$|\alpha|$ is small enough, $e^\epsilon(\alpha) \to \max_j e_j(\alpha)$, where the maximum is taken over indices~$j$ corresponding to \emph{all local minima} of~$V$ and~$e_{j}$ is as in Section~\ref{sec:lin} for the linear problem near~$x_{j}$.
	Therefore, with~$e_*$ the Legendre transform of~$\alpha \mapsto \max_j e_j(\alpha)$, the rate functions~$e^\epsilon_*(\varsigma)$ converge to~$e_*(\varsigma)$ for all~$\varsigma$ in an interval~$\Sigma$.
	In cases where there is at least one index~$j'$ corresponding to local minimum such that $D e_{j'}(0) \neq 0$, the interval~$\Sigma$ has nonempty interior.
	Hence, as far as the rate of exponential suppression of fluctuations is concerned, there is no discrimination between the global and local minima of~$V$.

	In cases where there are indices~$j'$ and~$j''$ corresponding to local minima such that $D e_{j'}(0) \neq D e_{j''}(0)$, then~$e_{j'}$ and~$e_{j''}$ cross in $\alpha = 0$. Such a crossing necessarily yields a nondegenerate closed interval strictly contained in~$\Sigma$ on which the rate function~$e_*$ vanishes. Hence, by tuning the behaviour of~$b$ near the critical points of a potential~$V$ with a single global minimum and other local minima, one can construct examples where~$\lim_\epsilon \mathfrak{m}^\epsilon$ lies at either end of this vanishing piece as well as examples where it lies in the interior.
\end{example}

Back to the general case,
recall that we have successfully reduced the study of the rate function to that of the leading eigenvalue~$e^\epsilon(\alpha)$ of the deformed generator~$\Lambda^{\epsilon,\alpha}$
and its Legendre transform in the variable~$\alpha$.
Because
\begin{multline}\label{eq:conj-def-gen}
	\Exp{-(2\epsilon)^{-1}V}\Lambda^{\epsilon,\alpha}(\Exp{(2\epsilon)^{-1}V}f) \\
	= \epsilon \upDelta f +  \braket{(1 - 2\alpha)b,\nabla f} - \tfrac 1{4\epsilon}|\nabla V|^2 f
	+ \tfrac 1{2\epsilon} \braket{b,\nabla V}f  {}  - \tfrac{\alpha(1-\alpha)}{\epsilon} |b|^2f + \tfrac 12 f \upDelta V - \alpha  f \operatorname{div} b
\end{multline}
for sufficiently regular~$f$, the semiclassical folklore suggests that the quadratic approximations near the zeroes of~$\tfrac 1{4}|\nabla V|^2  - \tfrac 1{2} \braket{b,\nabla V} + \alpha(1-\alpha) |b|^2$\,---\,which coincide with the critical points of~$V$ for~$\alpha \in \cA$\,---\,should
play an important role as $\epsilon \to 0$.
While it is possible that Proposition~\ref{prop:lim-t-eps} below is known to workers in the field of semiclassical analysis, we were not able to track a convenient reference and hence provide a complete proof in Section~\ref{app:semiclass}.

Such a quadratic approximation of the deformed conjugated generator near a critical point~$x_j$ is of the form treated in Section~\ref{sec:lin}.
In view of this analysis, we define
\begin{equation}\label{eq:ej-as-trace}
  e_j(\alpha) := -\tr X_j^{(\alpha)} + \tr \tfrac 12 D^2 V|_{x_j} -\alpha \tr Db|_{x_j},
\end{equation}
for~$\alpha \in \cA$, where~$X_j^{(\alpha)}$ is the maximal solution to the~\textsc{are}
\begin{equation}\label{eq:ARE-main-matt}
	\big(X_j^{(\alpha)}\big)^2 - \tfrac 12 \big(B_j^{(\alpha)}\big)^\mathsf{T} X_j^{(\alpha)} - \tfrac 12 X_j^{(\alpha)} B_j^{(\alpha)} - K_j^{(\alpha)} = 0
\end{equation}
with $B_j^{(\alpha)} := (1-2\alpha) Db|_{x_j}$
and
\[
  K_j^{(\alpha)} := \tfrac 14 D^2 V|_{x_j}^2 - \tfrac 14 (Db|_{x_j}^\mathsf{T} D^2V|_{x_j} + D^2V|_{x_j}^{\mathsf{T}}Db|_{x_j}) + \alpha(1-\alpha) Db|_{x_j}^{\mathsf{T}}Db|_{x_j}.
\]
We give an example in Figure~\ref{fig:cgf-sketch}.

\begin{figure}
	\begin{center}
  		\includegraphics{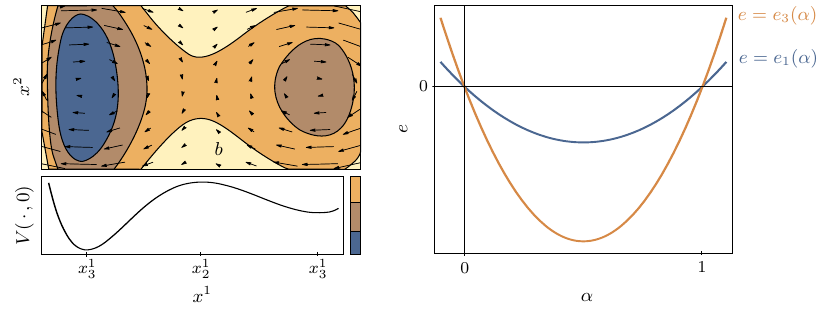}
	\end{center}
  \caption{We consider a polynomial potential~$V : \rr^2 \to \rr$ with a global maximum in~$x_1 = (x_1^1,0)$, a saddle point in~$x_2 = (x_2^1,0)$ and a local minimum in~$x_3 = (x_3^1,0)$. On the left: the profile of~$V$ for~$x^2 \equiv 0$ as well as a nonconservative vector field~$b$ which is stationary in all those critical points superimposed on a contour plot of~$V$.
	On the right:~$e_1$ and~$e_3$ from~\eqref{eq:ej-as-trace} are plotted as functions of~$\alpha$;~$e_2$ lies below the visible region.}
  \label{fig:cgf-sketch}
\end{figure}

\begin{lemma}\label{lem:single-out-stable}
  Suppose that Assumptions~\textnormal{(RB)} and~\textnormal{(ND)} are satisfied. Then, $e_j(0) \leq 0$ with equality if and only if~$x_j$ is a local minimum of~$V$.
\end{lemma}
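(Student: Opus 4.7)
The plan is to exhibit a closed-form symmetric solution of the ARE~\eqref{eq:ARE-main-matt} at $\alpha = 0$ and then exploit the Loewner maximality of~$X_j^{(0)}$. Write $H_j := D^2V|_{x_j}$ and $B_j := Db|_{x_j}$. A direct computation shows that $X = \tfrac 12 H_j$ solves that ARE: with $B_j^{(0)} = B_j$ and
\[
  K_j^{(0)} = \tfrac 14 H_j^2 - \tfrac 14 \bigl(B_j^{\mathsf{T}} H_j + H_j B_j\bigr),
\]
every term cancels after substitution, using that $H_j$ is symmetric. Since $X_j^{(0)}$ dominates every symmetric solution in the Loewner order (as invoked in the proof of Proposition~\ref{prop:eval-ARE}), $X_j^{(0)} - \tfrac 12 H_j$ is positive semidefinite and taking traces yields
\[
  e_j(0) = -\tr X_j^{(0)} + \tfrac 12 \tr H_j \leq 0.
\]

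For the \emph{only-if} statement, suppose $e_j(0) = 0$. Then the positive semidefinite matrix $X_j^{(0)} - \tfrac 12 H_j$ has vanishing trace and therefore vanishes. The linearised form of Assumption~\textnormal{(RB)} at $x_j$ reads $\braket{B_j y, H_j y} \leq k_b |H_j y|^2$ for every $y \in \rr^N$, whence
\[
  \braket{y, K_j^{(0)} y} = \tfrac 14 |H_j y|^2 - \tfrac 12 \braket{B_j y, H_j y} \geq \tfrac 14 (1 - 2k_b)\, |H_j y|^2.
\]
Combined with $k_b < \tfrac 12$ and the invertibility of $H_j$ from~\textnormal{(ND)}, this forces $K_j^{(0)} > 0$; then, by the positivity argument in the proof of Proposition~\ref{prop:eval-ARE}, the maximal solution $X_j^{(0)}$ is itself positive definite, so $\tfrac 12 H_j > 0$ and $x_j$ is a local minimum of $V$.

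For the \emph{if} direction, suppose $x_j$ is a local minimum. Then $H_j > 0$ and the auxiliary linear problem of Section~\ref{sec:lin} built from the quadratic potential $y \mapsto \tfrac 12 \braket{y, H_j y}$ and the linear drift $y \mapsto B_j y$ satisfies~\eqref{eq:Cun-lin-case}--\eqref{eq:Ctrois-lin-case}. The leading eigenvalue of its undeformed generator equals $0$ (as recorded in Lemma~\ref{lem:HF}, or directly from the fact that its moment-generating function at $\alpha = 0$ is the constant~$1$). By Proposition~\ref{prop:eval-ARE} and~\eqref{eq:ej-as-trace}, that leading eigenvalue is precisely $e_j(0)$, so $e_j(0) = 0$.

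The main subtlety I anticipate is bootstrapping the strict positivity $K_j^{(0)} > 0$ inside the "only-if" step, since the construction of the positive-definite maximal solution in Proposition~\ref{prop:eval-ARE} is stated under $K^{(\alpha)} > 0$; the pointwise linearisation of~\textnormal{(RB)} together with $k_b < \tfrac 12$ and~\textnormal{(ND)} does the needed work.
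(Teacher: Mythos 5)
Your proof is correct and, for the two steps the paper actually writes out, takes the same route: exhibit $\tfrac 12 D^2V|_{x_j}$ as a symmetric solution of the $\alpha=0$ ARE and use Loewner maximality of $X_j^{(0)}$ to get $e_j(0)\leq 0$, then use (RB) and (ND) to push $X_j^{(0)}$ into the positive cone. The only cosmetic difference in that part is that the paper's argument only needs $K_j^{(0)}\geq 0$ (so that $0$ is a subsolution and $X_j^{(0)}\geq 0$, after which $e_j(0)=0 \Rightarrow X_j^{(0)}=\tfrac 12 D^2V|_{x_j}\geq 0$ contradicts $D^2V|_{x_j}$ not being PSD), whereas you sharpen to $K_j^{(0)}>0$ to invoke the strict positivity of the maximal solution from Proposition~\ref{prop:eval-ARE}. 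Both work and rely on the same ingredients.

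Where you genuinely go beyond the paper's written proof is the \emph{if} direction. The paper's proof as stated only establishes $e_j(0)\leq 0$ together with $e_j(0)<0$ when $x_j$ is not a local minimum, leaving the converse implicit (it is essentially contained in the earlier observation that the Markov generator of the linearised diffusion has leading eigenvalue $0$, made in the proof of Lemma~\ref{lem:HF} and used for Proposition~\ref{prop:locally-a-grad}). You make this explicit by observing that when $D^2V|_{x_j}>0$ the quadratic approximation near $x_j$ satisfies the hypotheses of Section~\ref{sec:lin}, so $e_j(0)=\spb Q_j^0=0$ because the undeformed generator of a well-posed diffusion has spectral bound zero. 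This is a legitimate and useful completion; the identification $e_j(\alpha)=\spb Q_j^\alpha$ via \eqref{eq:ej-as-trace} and Proposition~\ref{prop:eval-ARE} is exactly what closes the loop.
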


\begin{proof}
  One can check directly that $\tfrac 12 D^2V|_{x_j}$ is~\emph{a} symmetric solution to~\eqref{eq:ARE-main-matt} with~$\alpha = 0$, so that $X_j^{(0)} \geq \tfrac 12 D^2V|_{x_j}$ and
  \begin{equation}\label{eq:ARE-or}
    e_j(0) = -\tr X_j^{(0)} + \tfrac 12 \tr D^2V|_{x_j} \leq 0.
  \end{equation}
  On the other hand, Assumption~\textnormal{(RB)} yields that the matrix~$0$ is a subsolution to~\eqref{eq:ARE-main-matt} with~$\alpha = 0$, which implies that
	\[
		X_j^{(0)} \geq 0.
	\]
	If~$x_j$ is not local minimum, then~$D^2V|_{x_j}$ is not positive semidefinite by~\textnormal{(ND)} and the inequality~\eqref{eq:ARE-or} must be strict.
\end{proof}

\begin{proposition}\label{prop:lim-t-eps}
	Suppose that Assumptions~\textnormal{(L0)},~\textnormal{(L1)},~\textnormal{(RB)} and~\textnormal{(ND)} are satisfied. Then, for all~$\alpha \in \cA$,
	\begin{equation}\label{eq:lim-eps-as-max}
		\lim_{\epsilon\to 0} e^\epsilon(\alpha) = \max_{j=1,\dotsc,m} e_j(\alpha).
	\end{equation}
	 The convergence is uniform on compact subsets of~$\cA$. The limit defines a convex and piecewise real-analytic function~$e : \cA \to \rr$ satisfying the symmetry $e(1-\alpha) = e(\alpha)$, and~$D e^\epsilon(\alpha)$ converges to~$D e(\alpha)$ for all~$\alpha$ in a dense subset of~$\cA$
\end{proposition}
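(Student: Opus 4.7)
The plan is to treat the pointwise semiclassical convergence as the main input, deferred to Section~\ref{app:semiclass}, and to derive all the remaining structural properties (uniform convergence, convexity, piecewise analyticity, symmetry, and convergence of derivatives) from elementary convex-analysis and complex-analytic considerations applied to the already-established theory.

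First, I would address pointwise convergence, i.e.~\eqref{eq:lim-eps-as-max}. After conjugation by the Gaussian weight $\Exp{(2\epsilon)^{-1}V}$ as in~\eqref{eq:conj-def-gen}, the deformed generator takes the semiclassical form $\epsilon\upDelta + \braket{(1-2\alpha)b,\nabla} - \epsilon^{-1} W_\alpha + O(1)$, where
\[
	W_\alpha := \tfrac{1}{4}|\nabla V|^2 - \tfrac{1}{2}\braket{b,\nabla V} + \alpha(1-\alpha)|b|^2
\]
is pointwise nonnegative on~$\cA$ and, by Assumptions~\textnormal{(RB)} and~\textnormal{(ND)}, vanishes exactly at the critical points of~$V$. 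Rescaling~$x \mapsto x_j + \sqrt{\epsilon}\,y$ near each~$x_j$ and using Taylor expansion identifies the leading quadratic operator with the~$Q^\alpha$ of~\eqref{def:quad-op} constructed from $D^2V|_{x_j}$ and $Db|_{x_j}$; by Proposition~\ref{prop:eval-ARE} and Remark~\ref{rem:formula-is-ok-for-sp}, its leading eigenvalue is $e_j(\alpha)$ given in~\eqref{eq:ej-as-trace}. A semiclassical localisation/IMS-type argument, exploiting the positivity preservation and irreducibility of the semigroup (Proposition~\ref{prop:evec}), then identifies~$\spb(\Lambda^{\epsilon,\alpha})$ in the limit with $\max_j e_j(\alpha)$. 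I will carry this argument out in Section~\ref{app:semiclass}.

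Next, the structural properties of $e := \max_j e_j$. Proposition~\ref{prop:eval-ARE} yields that each $e_j$ is real-analytic on~$\cA$. Applying Corollary~\ref{cor:symm} to the linear problem associated with $(D^2V|_{x_j}, Db|_{x_j})$ (for which Proposition~\ref{prop:locally-a-grad} applies by~\textnormal{(L0)}, \textnormal{(RB)} and \textnormal{(ND)}) gives both convexity of each~$e_j$ on~$\cA$ and the symmetry~$e_j(1-\alpha) = e_j(\alpha)$. Taking maxima over a finite set preserves convexity and symmetry and produces a piecewise real-analytic function; the collection of points at which $e$ fails to be real-analytic is the discrete set where two distinct $e_j$'s attain the maximum, which is locally finite in~$\cA$ by the identity theorem for real-analytic functions.

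Finally, I would upgrade pointwise convergence to the stated modes of convergence. Lemma~\ref{lem:reg-eigenval} and Corollary~\ref{cor:symm} give that each~$e^\epsilon$ is real-analytic and convex on~$\cA$; pointwise convergence of convex functions on an open convex subset of~$\rr$ to a finite convex function is automatically locally uniform (standard result, e.g.~Rockafellar, \emph{Convex Analysis}, Thm.~10.8), yielding uniform convergence on compact subsets of~$\cA$. The symmetry $e(1-\alpha)=e(\alpha)$ is inherited from that of each~$e^\epsilon$ in the limit. At every point of differentiability of~$e$, any accumulation point of $\{De^\epsilon(\alpha)\}$ lies in the subdifferential $\partial e(\alpha)$, which is a singleton; this gives $De^\epsilon(\alpha)\to De(\alpha)$ at every such~$\alpha$, and the set of such $\alpha$ is dense in~$\cA$ by the piecewise analyticity established above.

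The main obstacle is the pointwise asymptotic itself: the operator is non-self-adjoint and non-compact, and one must justify that the quadratic approximation indeed captures the leading eigenvalue in the limit and that the contributions from non-minimising critical points do not spoil the bound. All of this is handled in Section~\ref{app:semiclass}; once granted, the current proposition is a concise consequence of convex-analytic facts applied to the analytic families~$\{e^\epsilon\}$ and~$\{e_j\}$.
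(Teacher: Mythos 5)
Your proposal follows the same overall strategy as the paper's proof: defer the pointwise semiclassical limit~\eqref{eq:semiclass-lim-to-prove} to Section~\ref{app:semiclass}, identify the quadratic models via~\eqref{eq:conj-def-gen}, Proposition~\ref{prop:eval-ARE} and Remark~\ref{rem:formula-is-ok-for-sp}, and then extract uniform convergence, derivative convergence, and piecewise analyticity from convexity and the real analyticity of the~$e_j$. This is essentially the paper's argument.

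There is one wrinkle worth flagging. You propose to show that each~$e_j$ is convex and satisfies $e_j(1-\alpha)=e_j(\alpha)$ by ``applying Corollary~\ref{cor:symm} to the linear problem associated with $(D^2V|_{x_j}, Db|_{x_j})$ (for which Proposition~\ref{prop:locally-a-grad} applies by~\textnormal{(L0)}, \textnormal{(RB)} and \textnormal{(ND)})''. This does not work for critical points that are not local minima: in the linear case, Assumption~\textnormal{(L0)} reads $D^2V>0$ (cf.~\eqref{eq:Cun-lin-case}), which fails at saddles and maxima, so the probabilistic machinery behind Proposition~\ref{prop:RN}, Corollary~\ref{cor:symm} and Proposition~\ref{prop:locally-a-grad} is not available for those linearised models; only the purely algebraic Proposition~\ref{prop:eval-ARE} with Remark~\ref{rem:formula-is-ok-for-sp} applies there (it only needs $(D^2V)^2>0$). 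The paper accordingly does not claim each~$e_j$ is convex or symmetric; it only uses that each~$e_j$ is real analytic (Proposition~\ref{prop:eval-ARE}) for the piecewise-analyticity argument, and obtains convexity and symmetry of~$e$, as well as local uniform convergence and convergence of derivatives, from the corresponding properties of~$e^\epsilon$ (Lemma~\ref{lem:reg-eigenval} and Corollary~\ref{cor:symm}) and the pointwise limit. Since you also include this second, valid route (``inherited from that of each~$e^\epsilon$ in the limit'', and the Rockafellar/subdifferential argument for uniform and derivative convergence), the proof goes through once the first, redundant and unsupported route through the~$e_j$ is removed.
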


\begin{proof}
	By~\eqref{eq:conj-def-gen}, Proposition~\ref{prop:eval-ARE} and Remark~\ref{rem:formula-is-ok-for-sp},
	\begin{equation}
		\label{eq:max-e-Q}
		\max_{j=1,\dotsc,m} e_j(\alpha) = \max_{j=1,\dotsc,m} \spb Q_j^\alpha
	\end{equation}
	for all~$\alpha\in\cA$, where $Q_j^\alpha$ has the form
	\begin{equation}\label{eq:def-Qj}
		Q_j^\alpha := \upDelta + \braket{\ell_{B_j^{(\alpha)}}, \nabla} - q_{K_j^{(\alpha)}} + \tfrac 12 \tr D^2V|_{x_j} - \alpha \tr Db|_{x_j}.
	\end{equation}
	We postpone the proof of the fact that
	\begin{equation}\label{eq:semiclass-lim-to-prove}
		\lim_{\epsilon\to 0} e^{\epsilon}(\alpha) = \max_{j=1,\dotsc,m} \spb Q_j^\alpha
	\end{equation}
	to Section~\ref{app:semiclass}.

	Let~$\bar{J}$ be a compact subset of~$\cA$. The fact that the convergence is uniform on~$\bar{J}$ and that the derivatives converge on a dense subset are well-known consequences of convexity. Each~$e_j$ is real analytic on~$\cA$ by Proposition~\ref{prop:eval-ARE}. Hence, the difference between any two~$e_j$ and~$e_{j'}$ is real analytic and therefore has finitely many zeroes on~$\bar{J}$, or~$e_j \equiv e_{j'}$ on~$\cA$. It is no loss of generality to exclude the second case. There must be at most finitely many points in~$\bar{J}$ where the maximum in~\eqref{eq:max-e-Q} changes index. We conclude that~$e$ is piecewise real analytic.
\end{proof}

Proposition~\ref{prop:lim-t-eps} has the following important consequences.
Note that Lemma~\ref{lem:single-out-stable} implies that the maximum in Proposition~\ref{prop:lim-t-eps} must be achieved for an index~$j$ corresponding to a local minimum if~$\alpha$ is close enough to~$0$.
Thus, using Lemma~\ref{lem:HF},
\begin{equation}\label{eq:rough-bound-m}
	\min_{j\textnormal{ loc.\,min.}} \mathfrak{m}_j
		\leq \liminf_{\epsilon \to 0} \mathfrak{m}^\epsilon
		\leq \limsup_{\epsilon \to 0} \mathfrak{m}^\epsilon
		\leq \max_{j\textnormal{ loc.\,min.}} \mathfrak{m}_j,
\end{equation}
where
\begin{equation}\label{eq:def-moy-j}
	\mathfrak{m}_j := - De_j(\alpha)\big|_{\alpha=0}
\end{equation}
for indices~$j$ that correspond to local minima of~$V$. The fact that we are not able to generally strengthen~\eqref{eq:rough-bound-m} by taking the minimum and maximum only over indices corresponding to~\emph{global} minimisation of~$V$ as in Example~\ref{ex:orth-0-div}
is a drawback of the freedom of the decomposition mentioned in Remark~\ref{rem:orth-decomp}.
		To see this, consider a potential~$V$ with its global minimum achieved in two points~$x_{j^\star}$ and~$x_{j^{\star\star}}$.
		The changes $V \mapsto V + \delta \, \eta_{\star}$ and $b \mapsto b + \delta \nabla \eta_{\star} $ for a small positive number~$\delta$ and a suitable bump function~$\eta_\star$ centered at~$x_{j^\star}$ do not change the dynamics nor the validity of the assumptions, but the new potential does not achieve its global minimum in~$x_{j^\star}$.
		Such a freedom is gone if we restrict are attention to decompositions where the Freidlin--Wentzell quasipotential~\cite{FW70} is proportional to~$V$\,---\,as is the case in Example~\ref{ex:orth-0-div}.

Recall that Proposition~\ref{prop:locally-a-grad} gives that~$\mathfrak{m}_j$ in~\eqref{eq:def-moy-j} is nonnegative and equals zero if and only if~$Db|_{x_j}$ is symmetric. Therefore, the mean entropy production per unit time~$\mathfrak{m}^\epsilon$ vanishes as~$\epsilon \to 0$ if~$b$ behaves like a gradient near each local minimum of~$V$.
On the other hand,~$\mathfrak{m}^\epsilon$ is bounded away from 0 as $\epsilon \to 0$ if there is no local minimum of~$V$ near which~$b$ behaves like a gradient.
From a thermodynamical point of view, strict positivity of the mean entropy production per unit time~$\mathfrak{m}^\epsilon$ is a key signature of nonequilibrium.

The nonvanishing of~$\mathfrak{m}^\epsilon$ also ensures that the content of our \textsc{ldp} is nontrivial. Indeed, the intervals
\[
	\Sigma := \liminf_{\epsilon \to 0}
	\{-De^\epsilon(\alpha) : \alpha \in \cA\}
\]
and
\[
	\Sigma^0 := \liminf_{\epsilon \to 0}\{-De^\epsilon(\alpha) : \alpha \in I^0\},
\]
are always nonempty, but could {a priori} be singletons; strict positivity of~$\mathfrak{m}^\epsilon$ in the limit $\epsilon \to 0$ rules out this possibility. More generally, degeneracy of these intervals is ruled out whenever there exist a local minimum of~$V$ near which~$b$ does not behvae like a gradient.

\begin{proposition}\label{prop:lim-rate-e}
	Suppose that Assumptions~\textnormal{(L0)},~\textnormal{(L1)},~\textnormal{(RB)},~\textnormal{(ND)} and~\textnormal{(IPu)} are satisfied. If~$E$ is a Borel set with $\cl(E) \subset \interior(\Sigma)$, then
	\begin{equation}\label{eq:lim-e-Leg-with-inf}
		\lim_{\epsilon \to 0} \inf_{s \in E} e^\epsilon_*(\varsigma) = \inf_{s \in E} e_*(\varsigma)
	\end{equation}
	where
	\[
	 e_*(\varsigma) := \sup_{\alpha \in \cA} \big(-\alpha \varsigma - e(\alpha) \big)
	\]
	defines a convex and nonnegative function of~$\varsigma \in \Sigma$.
\end{proposition}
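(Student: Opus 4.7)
The plan is to deduce this from the locally uniform convergence $e^\epsilon \to e$ on $\cA$ established in Proposition~\ref{prop:lim-t-eps}, by invoking the general principle that Legendre transforms of pointwise convergent sequences of convex functions converge at interior points of their effective domains.

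The properties of $e_*$ are essentially free: convexity is automatic since $e_*$ is a supremum of affine functions of $\varsigma$, and nonnegativity follows because the Gallavotti--Cohen symmetry of Corollary~\ref{cor:symm} forces $e^\epsilon(0) = 0$ for every $\epsilon$, hence $e(0) = 0$, so the choice $\alpha = 0$ in the defining supremum already yields $e_*(\varsigma) \geq 0$.

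For the convergence, I would first establish the pointwise limit $e^\epsilon_*(\varsigma) \to e_*(\varsigma)$ for each $\varsigma \in \interior(\Sigma)$. The easy inequality $\liminf_\epsilon e^\epsilon_*(\varsigma) \geq e_*(\varsigma)$ is obtained by fixing $\alpha \in \cA$ in the supremum defining $e^\epsilon_*$, passing to the limit $\epsilon \to 0$ via Proposition~\ref{prop:lim-t-eps}, and then taking a supremum over $\alpha$. For the reverse direction, the definition of $\Sigma$ provides, for all sufficiently small $\epsilon$, a point $\alpha_\epsilon \in \cA$ satisfying $-De^\epsilon(\alpha_\epsilon) = \varsigma$, which by convexity is a maximizer in the supremum defining $e^\epsilon_*(\varsigma)$; thus $e^\epsilon_*(\varsigma) = -\alpha_\epsilon \varsigma - e^\epsilon(\alpha_\epsilon)$. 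Using that $\cA$ is bounded, one extracts a convergent subsequence $\alpha_\epsilon \to \alpha^\star$ and appeals to the subdifferential-convergence theorem for convex functions (Attouch) to show that $\alpha^\star$ lies in the interior of $\cA$ with $-\varsigma \in \partial e(\alpha^\star)$, so that $\limsup_\epsilon e^\epsilon_*(\varsigma) = -\alpha^\star \varsigma - e(\alpha^\star) \leq e_*(\varsigma)$. The pointwise convergence on the open set $\interior(\Sigma)$ upgrades automatically to locally uniform convergence because the $e^\epsilon_*$ are convex, and $e_*$ is therefore continuous on $\interior(\Sigma)$.

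The proof concludes by noting that $\cl(E)$ is a compact subset of $\interior(\Sigma)$, so uniform convergence on $\cl(E)$ gives $\inf_{\varsigma \in \cl(E)} e^\epsilon_*(\varsigma) \to \inf_{\varsigma \in \cl(E)} e_*(\varsigma)$; by continuity on $\interior(\Sigma)$, these infima coincide with the infima over $E$. The main obstacle will be the upper-bound step, namely keeping $\alpha_\epsilon$ away from $\partial \cA$ (where the conjugated generator changes nature and the whole construction of $e^\epsilon$ breaks down) and identifying the subsequential limit as a maximizer for the limiting variational problem; this is precisely where the hypothesis $\cl(E) \subset \interior(\Sigma)$ (rather than $\Sigma$) is used.
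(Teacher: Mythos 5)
Your overall strategy is the same as the paper's: reduce convergence of the Legendre transforms to the locally uniform convergence $e^\epsilon \to e$ from Proposition~\ref{prop:lim-t-eps}, with the crux being that the relevant values of~$\alpha$ stay in a fixed compact subinterval of~$\cA$. Your treatments of convexity and nonnegativity of~$e_*$ are fine (nonnegativity via $e^\epsilon(0)=0$, hence $e(0)=0$, is exactly right, and is not even spelled out in the paper). The problem is that the one step you defer\,---\,``keeping $\alpha_\epsilon$ away from $\partial\cA$''\,---\,is the entire content of the proof, and invoking Attouch's subdifferential-convergence theorem does not by itself deliver it: nothing in that theorem prevents a subsequence of maximizers from drifting to $\partial\cA$, where the uniform convergence of $e^\epsilon$ is unavailable. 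You correctly name the hypothesis $\cl(E)\subset\interior(\Sigma)$ as the ingredient that must be used, but you never say how, and that is a genuine gap.

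The resolution is short and makes most of your machinery unnecessary. Since $\cl(E)\subset\interior(\Sigma)$ and $De^\epsilon\to De$ on a dense subset of~$\cA$ (Proposition~\ref{prop:lim-t-eps}), one can fix $\alpha_1,\alpha_2\in\cA$ with
\[
	\inf\Sigma \;<\; -De(\alpha_1) \;<\; \inf E \;\leq\; \sup E \;<\; -De(\alpha_2) \;<\; \sup\Sigma
\]
and $-De^\epsilon(\alpha_i)\to -De(\alpha_i)$. For $\epsilon$ small and every $\varsigma\in E$ one then has $-De^\epsilon(\alpha_1)<\varsigma<-De^\epsilon(\alpha_2)$, so by convexity the map $\alpha\mapsto-\alpha\varsigma-e^\epsilon(\alpha)$ is increasing to the left of~$\alpha_2$ and decreasing to the right of~$\alpha_1$; hence the supremum over~$\cA$ defining $e^\epsilon_*(\varsigma)$ (and likewise $e_*(\varsigma)$) may be taken over the fixed compact interval $[\alpha_2,\alpha_1]$. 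This gives $|e^\epsilon_*(\varsigma)-e_*(\varsigma)|\leq\sup_{\alpha\in[\alpha_2,\alpha_1]}|e^\epsilon(\alpha)-e(\alpha)|\to 0$ \emph{uniformly} in $\varsigma\in E$, and the convergence of the infima follows at once\,---\,no extraction of subsequences, no subdifferential convergence, and no separate upgrade from pointwise to locally uniform convergence of the transforms is needed. (Also, be careful with the throwaway claim that $\cA$ is bounded: it fails when $b\equiv 0$, though the statement is then trivial.)
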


\begin{proof}
	The proposition is vacuously true if $\Sigma$ has empty interior. Let us now consider that $\interior(\Sigma)$ is nonempty. Convexity of~$e_*$ follows from that of~$e$.
	Since $\cl(E) \subset \interior(\Sigma)$, Proposition~\ref{prop:lim-t-eps} ensures that we may pick~$\alpha_1$ and $\alpha_2$ in~$\cA$ such that
	\[
		\inf \Sigma < -De(\alpha_1) < \inf E \leq  \sup E < -De(\alpha_2) < \sup \Sigma
	\]
	while
	\[
		\lim_{\epsilon \to 0} -D e^\epsilon(\alpha_1) = -De(\alpha_1)
		\qquad \textnormal{ and } \qquad
		\lim_{\epsilon \to 0} -D e^\epsilon(\alpha_2) =  -De(\alpha_2).
	\]
	Then, for any $\varsigma \in E$ and $\epsilon > 0$ small enough, we have
	\[
		 -De^\epsilon(\alpha_1) < \varsigma < -De^\epsilon(\alpha_2).
	\]
	Therefore,
	\[
		 e_*(\varsigma) = \sup_{\alpha \in \cA} \big(-\alpha \varsigma - e(\alpha) \big) = \sup_{\alpha \in [\alpha_2,\alpha_1]} \big(-\alpha \varsigma - e(\alpha) \big)
	\]
	and
	\[
		 e^\epsilon_*(\varsigma) = \sup_{\alpha \in \cA} \big(-\alpha \varsigma - e^\epsilon(\alpha) \big) = \sup_{\alpha \in [\alpha_2,\alpha_1]} \big(-\alpha \varsigma - e^\epsilon(\alpha) \big)
	\]
	for $\epsilon > 0$ sufficiently small. The result thus follows from the uniform convergence of~$e^\epsilon$ to~$e$ on the compact interval~$[\alpha_2,\alpha_1]$ in Proposition~\ref{prop:lim-t-eps}.
\end{proof}

The interest of Proposition~\ref{prop:lim-rate-e} of course is that it can be used in conjunction with the local \textsc{ldp} of Proposition~\ref{prop:ldp-t} for fixed~$\epsilon > 0$. The last part of the following theorem is illustrated by an example sketched in Figure~\ref{fig:rate-sketch}.

\begin{figure}
	\begin{center}
		\includegraphics{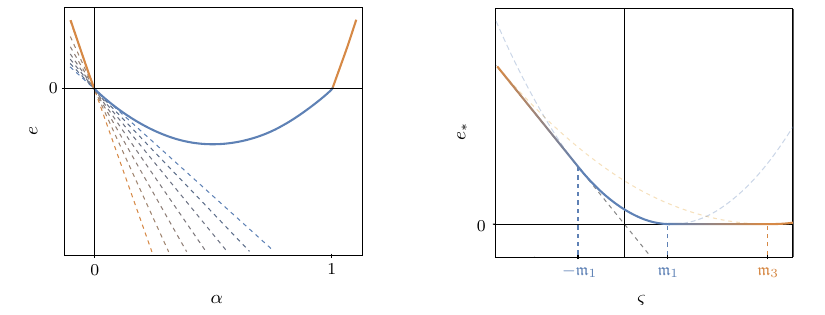}
	\end{center}
	\caption{In the case of the function~$e$ on the left, obtained by taking the maximum of~$e_1$,~$e_2$ and~$e_3$ in Figure~\ref{fig:cgf-sketch}, the jump in the derivative from $-\mathfrak{m}_3$ to~$-\mathfrak{m}_1$ at the origin causes~$e_*$ to vanish on the interval~$[\mathfrak{m}_1,\mathfrak{m}_3]$. The Legendre transform~$e_*$ is sketched on the right.}
  \label{fig:rate-sketch}
\end{figure}

\begin{theorem}\label{prop:ldp-t-eps}
	If Assumptions~\textnormal{(L0)},~\textnormal{(L1)},~\textnormal{(RB)},~\textnormal{(ND)} and~\textnormal{(IPu)} are satisfied and~$E$ is a Borel set with $\cl(E) \subset  \interior (\Sigma^0)$, then
	\begin{align*}
		-\inf_{\varsigma \in \interior(E)} e_*(\varsigma)
			&\leq \lim_{\epsilon\to 0}\liminf_{t \to \infty} t^{-1} \log \cP_t^\epsilon \Big\{ t^{-1} \ss^\epsilon_t \in E \Big\} \\
			&\leq \lim_{\epsilon\to 0}\limsup_{t \to \infty} t^{-1} \log \cP_t^\epsilon \Big\{ t^{-1} \ss^\epsilon_t \in E \Big\} \leq -\inf_{\varsigma \in \cl(E)} e_*(\varsigma).
	\end{align*}
	and the function~$e_* : \Sigma^0 \to [0,\infty)$ is continuous and satisfies the Gallavotti--Cohen symmetry
	\[
		e_*(\varsigma) - e_*(-\varsigma) = -\varsigma.
	\]
	If $\min_{j\textnormal{ loc.\,min.}}\mathfrak{m}_j \neq \max_{j\textnormal{ loc.\,min.}}\mathfrak{m}_j$, then these two values define a nondegenerate interval in~$\Sigma^0$ on which~$e_*$ vanishes.
\end{theorem}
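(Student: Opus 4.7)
The overall strategy is to combine the fixed-$\epsilon$ local \textsc{ldp} of Proposition~\ref{prop:ldp-t} with the convergence of infima of the rate functions established in Proposition~\ref{prop:lim-rate-e}, and then to extract the remaining structural properties from the max-of-quadratics description~\eqref{eq:lim-eps-as-max} of~$e$. From Assumption~\textnormal{(IPu)} one has $\Sigma^0 \subseteq \Sigma$, so that any Borel set $E$ with $\cl(E) \subset \interior(\Sigma^0)$ satisfies the hypotheses of both Proposition~\ref{prop:ldp-t} (for all small enough~$\epsilon$) and of Proposition~\ref{prop:lim-rate-e}. Applying the former to $\interior(E)$ and $\cl(E)$ separately and then taking $\epsilon \to 0$ using the latter yields the desired bounds, since
\[
  \lim_{\epsilon\to 0}\inf_{\varsigma \in \interior(E)} e^\epsilon_*(\varsigma) = \inf_{\varsigma \in \interior(E)} e_*(\varsigma),
  \qquad
  \lim_{\epsilon\to 0}\inf_{\varsigma \in \cl(E)} e^\epsilon_*(\varsigma) = \inf_{\varsigma \in \cl(E)} e_*(\varsigma).
\]

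The Gallavotti--Cohen symmetry for~$e_*$ follows from the symmetry $e(1-\alpha) = e(\alpha)$ given by Proposition~\ref{prop:lim-t-eps}: the substitution $\alpha \mapsto 1 - \alpha$ in the definition of~$e_*$ gives
\[
  e_*(\varsigma) = \sup_{\alpha \in \cA}\bigl(-(1-\alpha)\varsigma - e(1-\alpha)\bigr) = -\varsigma + \sup_{\alpha \in \cA}\bigl(\alpha\varsigma - e(\alpha)\bigr) = -\varsigma + e_*(-\varsigma).
\]
Continuity of~$e_*$ on~$\Sigma^0$ is automatic from its convexity (as a Legendre transform of the convex function~$e$) together with its finiteness on~$\Sigma^0$; nonnegativity is immediate from $e_*(\varsigma) \geq - 0\cdot\varsigma - e(0) = 0$ upon noting $e^\epsilon(0) = \spb \Lambda^{0,\epsilon} = 0$ for all~$\epsilon > 0$ and passing to the limit.

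For the vanishing claim, the key input is that, by Lemma~\ref{lem:single-out-stable} and the definition~\eqref{eq:def-moy-j}, every local minimum~$x_j$ of~$V$ satisfies $e_j(0) = 0$ and $De_j(0) = -\mathfrak{m}_j$. Convexity of each~$e_j$ on~$\cA$ (inherited from~$e^\epsilon$ via uniform convergence in Proposition~\ref{prop:lim-t-eps}) yields the tangent bound $e_j(\alpha) \geq -\mathfrak{m}_j \alpha$ on~$\cA$. Given $\varsigma \in [\min_j \mathfrak{m}_j, \max_j \mathfrak{m}_j]$ (over local minima), pick local minima $j',j''$ with $\mathfrak{m}_{j'} \leq \varsigma \leq \mathfrak{m}_{j''}$; splitting according to the sign of~$\alpha$ and using~$e \geq e_{j'}$ on $\alpha \geq 0$ and $e \geq e_{j''}$ on $\alpha \leq 0$ gives $-\alpha\varsigma - e(\alpha) \leq 0$ throughout~$\cA$, which combined with the value~$0$ attained at $\alpha = 0$ forces $e_*(\varsigma) = 0$.

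The main technical obstacle is to verify that the interval $[\min_j \mathfrak{m}_j, \max_j \mathfrak{m}_j]$ lies inside~$\Sigma^0$, since the convergence $De^\epsilon \to De$ asserted in Proposition~\ref{prop:lim-t-eps} only holds on a dense subset of~$\cA$ and the limit~$e$ is only piecewise analytic, with a genuine corner at~$\alpha = 0$ precisely when the~$\mathfrak{m}_j$ differ. The way around this is to exploit continuity and monotonicity of~$-De^\epsilon$ on compact subintervals of~$I^0$ straddling~$\alpha = 0$: picking points~$\alpha_1 < 0 < \alpha_2$ in~$I^0$ at which~$e$ is differentiable and where $-De(\alpha_1)$ and $-De(\alpha_2)$ approximate~$\max_j \mathfrak{m}_j$ and~$\min_j \mathfrak{m}_j$ as closely as desired, and then invoking the intermediate value theorem on the continuous function~$-De^\epsilon$ restricted to~$[\alpha_1,\alpha_2]$, shows that any target~$\varsigma$ in the open interval $(\min_j \mathfrak{m}_j, \max_j \mathfrak{m}_j)$ is hit by~$-De^\epsilon$ at some point of~$I^0$ for every sufficiently small~$\epsilon$, hence belongs to~$\Sigma^0$.
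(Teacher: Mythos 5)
Your proposal is correct and follows the route the paper itself intends: the theorem is stated there as an immediate combination of Proposition~\ref{prop:ldp-t} and Proposition~\ref{prop:lim-rate-e}, with the symmetry coming from $e(1-\alpha)=e(\alpha)$ and the vanishing interval coming from the corner of $e=\max_j e_j$ at $\alpha=0$, exactly as you argue; your intermediate-value argument showing $(\min_j\mathfrak{m}_j,\max_j\mathfrak{m}_j)\subseteq\Sigma^0$ supplies a detail the paper leaves implicit. One small misattribution: convexity of an individual $e_j$ does not follow from the uniform convergence in Proposition~\ref{prop:lim-t-eps} (which only gives convexity of the maximum $e$); for a local minimum $x_j$ it instead follows from Corollary~\ref{cor:symm} and Proposition~\ref{prop:locally-a-grad} applied to the linearised problem at $x_j$, where $D^2V|_{x_j}>0$ guarantees the standing assumptions.
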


\begin{remark}
	Recall that the rate function~$e_*$ is the Legendre transform of~$e$, which is in turn the pointwise maximum among the family~$\{e_j\}_{j=1}^m$. Therefore,~$e_*$ can be computed as the convex hull of the family~$\{(e_j)_*\}_{j=1}^m$ of Legendre transforms coming from the linearised problems near the critical points of~$V$; see Theorem~16.5 in~\textnormal{\cite[Pt.\,III]{Roc}}.
\end{remark}

\section{Convergence in the proof of Proposition~\ref{prop:lim-t-eps}}
\label{app:semiclass}

We devote this section to proving the semicalssical result at the core of Proposition~\ref{prop:lim-t-eps}, that is the convergence expressed in~\eqref{eq:semiclass-lim-to-prove} for $e^\epsilon(\alpha) := \spb \Lambda^{\epsilon,\alpha}$. Our proof of the lower bound
\[
	\liminf_{\epsilon \to 0} \spb(\Lambda^{\epsilon,\alpha}) \geq \max_{j=1,\dotsc,m} \spb(Q_j^\alpha)
\]
uses the Protter--Weinberger characterisation of the spectral bound and follows some ideas of \cite[\S{5}]{BNS94}. The Protter--Weinberger characterisation is a variational principle which states that
\[
	\spb Q_j^\alpha = \inf_{u \gg 0} \sup_{x} \frac{(Q_j^\alpha u)(x)}{u(x)},
\]
where the infimum is taken over all strictly positive function~$u$ of class~$C^2$, and similarly for other uniformly elliptic operators; see~\cite{PW66,DV75a,NP92}. Our proof of the upper bound
\[
	\limsup_{\epsilon \to 0} \spb(\Lambda^{\epsilon,\alpha}) \leq \max_{j=1,\dotsc,m} \spb(Q_j^\alpha)
\]
is inspired by B.~Simon's localisation argument in the self-adjoint case~\cite[\S{2--3}]{Si83}, with the Rayleigh--Ritz principle replaced by the Protter--Weinberger principle.

Let us mention that the selfadjoint case was also covered by B.~Helffer and J.~Sj\"ostrand in a series of papers starting with~\cite{HS84} using different methods. In the non-selfadjoint case, a collection of similar results are available, even beyond the elliptic case, but under some extra smoothness and growth conditions; see e.g.~\cite{HSS05,HPS13}. Under minimal regularity assumptions for the quadratic expansion to make sense, W.~H.~Flemming and Sh.-J.~Sheu proved a similar result in the case of a single minimum; see~\cite[\S{4}]{FS97}.

We fix~$\alpha\in\cA$ for the rest of the section and omit the corresponding superscript from the notation. We show in Appendix~\ref{app:semig} (take $p = 2$ there) that the spectral properties of~$\Lambda^{\epsilon}$ can be deduced from those of the operator
\[
	A^\epsilon := \epsilon\upDelta + \braket{F,\nabla} - \epsilon^{-1} W_0  - W_1
\]
on the space~$\mathrm{L}^2(\rr^N,\d\vol)$, with domain~
\[
	\mathsf{D}^2 := \{f \in \mathrm{W}^{2,2}(\rr^N,\d\vol) : {|\nabla V|^2} f \in \mathrm{L}^2(\rr^N,\d\vol)\},
\]
with the auxiliary vector field
\[
	F := (1-2\alpha) b
\]
and the auxiliary potentials
\begin{equation}
\label{eq:aux-pot}
	W_0 :=  \tfrac 14 |\nabla V|^2  - \tfrac 12 \braket{b,\nabla V} + \alpha(1-\alpha) |b|^2
	\qquad \text{and} \qquad
	W_1 := - \tfrac 12 \upDelta V + \alpha\operatorname{div} b.
\end{equation}
We will use the fact that,~$F$, $W_0$ and~$W_1$ are of class~$C^2$,~$C^3$ and~$C^1$ respectively, but these assumptions can be slightly relaxed if necessary.
With
$
	\ell_j(x) := DF|_{x_j}(x-x_j),
$
$
	 q_{j}(x) := \tfrac 12 \braket{x-x_j, D^2W_0|_{x_j} (x-x_j)}
$
and
$
	w_{j} := -\tfrac 12 \tr D^2V|_{x_j} + \alpha \tr Db|_{x_j}
$
for each index $j = 1,\dotsc, m$, we set
\[
	Q_j^\epsilon := \epsilon\upDelta + \braket{\ell_j,\nabla} - \epsilon^{-1} q_j - w_{j}.
\]
This is the best approximation of~$A^\epsilon$ near~$x_j$ which is of the form considered in Section~\ref{sec:lin}.
Its leading eigenvalue admits $\phi_j^\epsilon := \exp (-(2\epsilon)^{-1} \braket{x-x_j, X_j(x-x_j)})$ as an eigenvector, where $X_j$ is positive definite and satisfies the~\textsc{are}
\begin{equation}\label{eq:ARE-for-semicl}
	X_j^2 - \tfrac 12 DF|_{x_j}^\mathsf{T} X_j - \tfrac 12 X_j DF|_{x_j} = \tfrac 12 D^2W_0|_{x_j}.
\end{equation}
Note that $Q_j$ defined in~\eqref{eq:def-Qj} coincides with $Q_j^1$ and that the leading eigenvalue $\spb Q_j^\epsilon$ is independent of~$\epsilon$.

\begin{description}
	\item[Lower bound.] If Assumptions~\textnormal{(L0)},~\textnormal{(L1)},~\textnormal{(RB)} and~\textnormal{(ND)} are satisfied, then
	\[
		\liminf_{\epsilon \to 0} \spb(A^{\epsilon}) \geq \max_{j=1,\dotsc,m} \spb(Q_j).
	\]
\end{description}

	Let~$j \in \{1,\dotsc,m\}$ and~$K\in(0,\infty)$  be arbitrary. Then,
	\[
		\inf_{u \gg 0} \sup_{x \in E_{j,K}} \frac{(Q_ju)(x)}{u(x)}
			 = \inf_{u \gg 0} \sup_{x \in E_{j,K}} \frac{(Q_j^\epsilon u)(x)}{u(x)},
	\]
	with the infimum taken over all functions of class $C^2$ which are strictly bounded away from 0 on the ellipsoid $E_{j,K} := \{x : \braket{(x-x_j),X_j(x-x_j)} < K \}$. In view of Lemma~\ref{lem:indep-p}, we may pick a strictly positive eigenfunction~$\psi^\epsilon$ for the eigenvalue~$\spb(A^\epsilon)$ of~$A^\epsilon$ which is of class~$C^2$.
	Hence,
	\begin{equation}\label{eq:intermediate-bound-a}
		\inf_{u \gg 0} \sup_{x \in E_{j,K}} \frac{(Q_ju)(x)}{u(x)}
			 \leq \sup_{x \in B(x_j,\epsilon^{1/2}K)} \frac{(Q_j^\epsilon (\psi^\epsilon)^{a})(x)}{(\psi^\epsilon(x))^a}.
	\end{equation}
	Now, by the chain rule and Young's inequality,
	{\small
	\begin{align*}
		\frac{(Q_j^\epsilon (\psi^\epsilon)^{a})(x)}{(\psi^\epsilon(x))^a} - \frac{a( A^\epsilon \psi^\epsilon)(x)} {(\psi^\epsilon(x))}
			&\leq \frac{a(Q_j^\epsilon \psi^\epsilon)(x)} {(\psi^\epsilon(x))} - \frac{a( A^\epsilon \psi^\epsilon)(x)} {(\psi^\epsilon(x))} +
		\frac{\epsilon a(a-1) |\nabla \psi^\epsilon(x)|^2} {(\psi^\epsilon(x))^2} - (1-a)w_j
			\\ &
			\leq  \epsilon^{-1} |W_0(x)- q_j(x)| + |W_1(x)-w_j|
			 \\ & \qquad
      + \frac{|F(x)-\ell_j(x)||\nabla \psi^\epsilon(x)|}{\psi^\epsilon(x)} - \frac{\epsilon a|1-a||\nabla \psi^\epsilon(x)|^2}{|\psi^\epsilon(x)|^2}
			- (1-a)w_j
			\\ &
			\leq  \epsilon^{-1} |W_0(x)- q_j(x)| + |W_1(x)-w_j|
			+ \frac{|F(x)-\ell_j(x)|^2}{4\epsilon a|1-a|} - (1-a)w_j.
	\end{align*}
	}
	Using the above in~\eqref{eq:intermediate-bound-a} and exploiting the regularity of~$F$, $W_0$ and $W_1$, we deduce that
	\begin{align*}
		\inf_{u \gg 0} \sup_{x \in E_{j,K}} \frac{(Q_ju)(x)}{u(x)}
		&\leq a \spb(A^\epsilon)
			+ C\Big(\epsilon^{3r-1} + \epsilon^r + \frac{1}{4a(1-a)}\epsilon^{4r-1}\Big)
			- (1-a) w_j
	\end{align*}
	for some constant~$C$ which is uniform in~$a$ and~$\epsilon$. Taking $\epsilon \to 0$ and then $a \to 1$ and using the Protter--Weinberger principle for the leading eigenvalue, we obtain
	\begin{align*}
		\spb \big(Q_j\!\!\upharpoonright_{E_{j,K}}\big) \leq \liminf_{\epsilon \to 0} \spb(\Lambda^\epsilon).
	\end{align*}
	Here,~``$\upharpoonright_{E_{j,K}}$'' denotes the restriction to~$E_{j,K}$ with a Dirichlet boundary condition.

	{
	Now note the following observation of~\cite{PW66}: if~$q$ is a function of class~$C^2$ which is strictly positive on~$E_{j,K}$ and vanishes at the boundary, then
	\[
		\sup_{x \in E_{j,K}} \frac{(Q_j q)(x)}{q(x)}
		\leq \spb \big(Q_j\!\!\upharpoonright_{E_{j,K}}\big).
	\]
	To see this, suppose that the inequality fails, let $\mu$ be strictly between the two members of the inequality, and derive a contradiction to the definite sign of the resolvent of~$Q_j\!\!\upharpoonright_{E_{j,K}}$ at $\mu$ (this last fact can be derived from the properties of the positivity-preserving semigroup). Using this observation with $q = q_{j,K}$ which is the shift $\phi_j - \Exp{-K}$ of the eigenfunction associated to the leading eigenvalue on the whole space and taking $K \to \infty$ gives
	\[
		 \spb Q_j \leq \liminf_{K\to\infty} \spb \big(Q_j\!\!\upharpoonright_{E_{j,K}}\big).
	\]
	It is precisely to have vanishing at the boundary with a simple constant shift that we chose $E_{j,K}$ in such a way that its boundary is a level set of~$\phi_j$.
	The desired lower bound holds.
	}

\begin{description}
	\item[Upper bound.] If Assumptions~\textnormal{(L0)},~\textnormal{(L1)},~\textnormal{(RB)} and~\textnormal{(ND)} are satisfied, then
	\[
		\limsup_{\epsilon \to 0} \spb(A^{\epsilon}) \leq \max_{j=1,\dotsc,m} \spb(Q_j).
	\]
\end{description}

Let $\chi :[0,\infty) \to [0,1]$ be a function of class~$C^2$ such that
$\chi(\rho) = 1$ for~$\rho \in [0,1]$, $\chi$~is strictly decreasing on~$(1,4)$ and $\chi(\rho) = 0$ for $\rho \in [4,\infty)$.
Note that the following quantity defined for $\beta \in [\tfrac 12 ,1)$ vanishes as $\beta \to 1$:
\[
	\gamma_\beta := \sup_{\chi(\rho) \geq \beta} |\nabla \chi(\rho)| + |\upDelta \chi(\rho)|.
\]
In order to focus on small neighbourhoods around the minima of~$W_0$, but which yet are large compared to the width of the eigenfunction~$\phi_j^\epsilon$ of~$Q_j^\epsilon$, we fix some
$
	r \in (\tfrac 13, \tfrac 12)
$
and set
\begin{equation*}
	\eta^\epsilon_j(x) := \chi(\epsilon^{-2r} \braket{x-x_j, X_j (x-x_j)})
\end{equation*}
for $j = 1, \dotsc, m$, and
\begin{equation*}
	\eta^\epsilon_0(x) := 1 - \sum_{j=1}^m \eta^\epsilon_j(x).
\end{equation*}
We consider~$\epsilon \in (0,\epsilon_0)$ with~$\epsilon_0 > 0$ small enough to guarantee $\supp \eta_j^\epsilon \cap \supp \eta_{j'}^\epsilon = \emptyset$ if $1 \leq j < j' \leq m$. We set
\[
	f_\beta^\epsilon(x) := \kappa_\beta^\epsilon \eta_0^\epsilon(x)+ \sum_{j=1}^m \eta^\epsilon_j(x) \phi_j^\epsilon(x),
\]
where
\[
	\kappa_\beta^\epsilon := \Exp{-\tfrac 12 \chi^{-1}(\beta)\epsilon^{2r-1}}.
\]
By the Protter--Weinberger principle,
\begin{equation}\label{eq:PW-ub-j}
	\spb A^\epsilon \leq \sup_{x \in \rr^N} \frac{(A^\epsilon f_\beta^\epsilon)(x)}{f_\beta^\epsilon(x)}
	 = \max\bigg\{\sup_{x : \eta_0^\epsilon(x) > 1-\beta} \frac{(A^\epsilon f_\beta^\epsilon)(x)}{f_\beta^\epsilon(x)},
	 \max_{j=1,\dotsc,m} \bigg\{\sup_{ x : \eta_j^\epsilon(x) \geq \beta } \frac{(A^\epsilon f_\beta^\epsilon)(x)}{f_\beta^\epsilon(x)} \bigg\}\bigg\}.
\end{equation}

Using Lemmas~\ref{lem:PW-out} and~\ref{lem:PW-in} below in~\eqref{eq:PW-ub-j} and taking~$\epsilon \to 0$ yields
\begin{align*}
	\limsup_{\epsilon \to 0} \spb A^\epsilon
	&  \leq \max_{j=1,\dotsc, m} \beta \spb Q_j + (\beta^{-1}-\beta)|w_j| +  C\gamma_\beta
\end{align*}
for some positive constant~$C$ independent of~$\beta$. Because $\beta \in [\tfrac 12 ,1)$ was arbitrary and both $\gamma_\beta \to 0$ and $\beta^{-1}-\beta \to 0$ as $\beta \to 1$, we conclude that
\begin{align*}
	\limsup_{\epsilon \to 0} \spb A^\epsilon
	&\leq
	 \max_{j=1,\dotsc, m} \spb Q_j.
\end{align*}
Before we state and prove Lemmas~\ref{lem:PW-out} and~\ref{lem:PW-in} to conclude the proof of the upper bound, let us give a collection of bounds which follow from the observation that
$\eta_j^\epsilon(x) \geq \beta$ if and only if~$\phi_j^\epsilon(x) \geq \kappa_\beta^\epsilon$.

\begin{lemma}\label{lem:beta-bounds}
	There exists a constant~$C$ with the following property:
	\begin{enumerate}
		\item[i.] if $\eta_0^\epsilon(x) > 1 - \beta$, then
		$
			0 < \frac{\phi_j^\epsilon(x)}{f_\beta^\epsilon(x)}
			< \frac{\kappa_\beta^\epsilon}{f_\beta^\epsilon(x)} < \frac 1{1-\beta}
		$
		and
		$
			\epsilon^{r}|\nabla \eta_j^\epsilon (x)| + \epsilon^{2r}|\upDelta \eta_j^\epsilon (x)| \leq C
		$
		for each~$j \in \{1,\dotsc,m\}$;
		\item[ii.] if $\eta_j^\epsilon(x) \geq \beta$, then
		$
			 0 \leq \frac{\kappa_\beta^\epsilon}{f_\beta^\epsilon(x)} \leq 1 \leq \frac{\phi_j^\epsilon(x)}{f_\beta^\epsilon(x)} \leq \frac{1}{\beta}
		$
		and
		$
			 \epsilon^{r}|\nabla \eta_j^\epsilon (x)| + \epsilon^{2r}|\upDelta \eta_j^\epsilon (x)| \leq C \gamma_\beta
		$.
	\end{enumerate}
\end{lemma}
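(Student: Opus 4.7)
The proposed proof relies entirely on unwinding the definitions of $\eta_j^\epsilon$, $\phi_j^\epsilon$, $\kappa_\beta^\epsilon$ and $f_\beta^\epsilon$, together with the chain rule. The central observation tying the two parts together is the equivalence
\[
    \eta_j^\epsilon(x) \geq \beta \iff \epsilon^{-2r}\braket{x-x_j, X_j(x-x_j)} \leq \chi^{-1}(\beta) \iff \phi_j^\epsilon(x) \geq \kappa_\beta^\epsilon,
\]
which follows because $\chi$ is strictly decreasing on $(1,4)$ and constant equal to~$1$ on $[0,1]$, and because the constant $\kappa_\beta^\epsilon = \exp(-\tfrac 12 \chi^{-1}(\beta)\epsilon^{2r-1})$ has been designed precisely to make the two thresholds agree.

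For part i, the hypothesis $\eta_0^\epsilon(x) > 1-\beta$ forces $\sum_j \eta_j^\epsilon(x) < \beta$ and hence $\eta_j^\epsilon(x) < \beta$ for every $j \in \{1,\dotsc,m\}$; the equivalence above then gives $\phi_j^\epsilon(x) < \kappa_\beta^\epsilon$. Combined with the crude lower bound $f_\beta^\epsilon(x) \geq \kappa_\beta^\epsilon \eta_0^\epsilon(x) > (1-\beta)\kappa_\beta^\epsilon$, this yields the chain of inequalities claimed in~(i).

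For part ii, I would first invoke the choice of $\epsilon_0$ which guarantees $\supp \eta_j^\epsilon \cap \supp \eta_{j'}^\epsilon = \emptyset$ for $j' \neq j$, so that on the locus $\eta_j^\epsilon(x) \geq \beta$ one has $\eta_{j'}^\epsilon(x) = 0$ and $\eta_0^\epsilon(x) = 1-\eta_j^\epsilon(x)$. Substituting into the definition of $f_\beta^\epsilon$ and using $\phi_j^\epsilon(x) \geq \kappa_\beta^\epsilon$ from the equivalence, one bounds $\kappa_\beta^\epsilon \leq f_\beta^\epsilon(x) \leq \phi_j^\epsilon(x)$, and separately $f_\beta^\epsilon(x) \geq \eta_j^\epsilon(x)\phi_j^\epsilon(x) \geq \beta\phi_j^\epsilon(x)$, which together give the quotient bounds in~(ii).

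The derivative estimates come from the chain rule applied to $\eta_j^\epsilon = \chi \circ \rho_j^\epsilon$ with $\rho_j^\epsilon(x) := \epsilon^{-2r}\braket{x-x_j, X_j(x-x_j)}$, yielding
\[
    \nabla \eta_j^\epsilon(x) = 2\epsilon^{-2r}\chi'(\rho_j^\epsilon(x)) X_j(x-x_j)
\]
and an analogous expression for $\upDelta \eta_j^\epsilon$ involving $\chi''(\rho_j^\epsilon)$ and $\tr X_j$. Because all derivatives of $\chi$ vanish outside $[1,4]$, the support of $\nabla \eta_j^\epsilon$ and $\upDelta \eta_j^\epsilon$ sits inside $\{|x-x_j| \leq C \epsilon^r\}$, which supplies the factor of $\epsilon^r$ needed to absorb the $\epsilon^{-2r}$; uniform bounds on $\chi'$ and $\chi''$ over $[0,4]$ give the constant $C$ in part~(i), and the refined bounds $|\chi'(\rho)|, |\chi''(\rho)| \leq \gamma_\beta$ on $\{\chi(\rho) \geq \beta\}$ give the factor $\gamma_\beta$ in part~(ii). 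The only minor obstacle is bookkeeping of the two terms in the Laplacian, but this does not present any difficulty.
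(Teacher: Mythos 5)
Your proof is correct and matches the paper's approach: the paper omits a detailed proof of this lemma, stating only that the bounds "follow from the observation that $\eta_j^\epsilon(x) \geq \beta$ if and only if $\phi_j^\epsilon(x) \geq \kappa_\beta^\epsilon$", which is precisely the equivalence you place at the centre of your argument and then unwind carefully, including the sandwich bound $\kappa_\beta^\epsilon \leq f_\beta^\epsilon(x) \leq \phi_j^\epsilon(x)$ on $\{\eta_j^\epsilon \geq \beta\}$ (using disjointness of supports), the crude lower bound $f_\beta^\epsilon(x) > (1-\beta)\kappa_\beta^\epsilon$ on $\{\eta_0^\epsilon > 1-\beta\}$, and the chain-rule computation on $|x - x_j| = O(\epsilon^r)$ to trade $\epsilon^{-2r}$ for $\epsilon^{-r}$ in $\nabla\eta_j^\epsilon$ and for $\epsilon^{-2r}$ in $\upDelta\eta_j^\epsilon$, with the sup of $|\chi'| + |\chi''|$ restricted to $\{\chi \geq \beta\}$ supplying the factor $\gamma_\beta$ in part~(ii).
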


\begin{lemma}\label{lem:PW-out}
	There exists strictly positive constants~$C$ and~$\delta$ such that
	\[
		\sup_{x : \eta_0^\epsilon(x) > 1-\beta} \frac{(A^\epsilon f_\beta^\epsilon)(x)}{f_\beta^\epsilon(x)} \leq - (1-\beta) \delta \epsilon^{2r-1} + C
	\]
	for all~$\epsilon \in (0,\epsilon_0)$ and all~$\beta \in [\tfrac 12,1)$.
\end{lemma}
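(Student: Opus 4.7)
My plan is to split the set $\{\eta_0^\epsilon > 1-\beta\}$, which by disjointness of the supports equals $\{\sum_j \eta_j^\epsilon < \beta\}$, into the subregion $E_0$ where every $\eta_j^\epsilon$ vanishes and the (disjoint union of) transition regions $E_j := \{0 < \eta_j^\epsilon < \beta\}$, and estimate $A^\epsilon f_\beta^\epsilon / f_\beta^\epsilon$ separately on each. On $E_0$ the function $f_\beta^\epsilon$ is constant equal to $\kappa_\beta^\epsilon$, so the quotient is simply $-\epsilon^{-1}W_0 - W_1$; positive-definiteness of $D^2 W_0|_{x_j} = 2 K_j^{(\alpha)}$ for $\alpha \in \cA$ gives the quadratic lower bound $W_0(x) \geq c|x-x_j|^2$ in a fixed neighbourhood of each $x_j$, and this, together with coercivity from \textnormal{(L0)} at infinity and the fact that $|x-x_j|^2 \gtrsim \epsilon^{2r}$ throughout $E_0$, yields $W_0(x) \geq \delta \epsilon^{2r}$ and hence a bound strictly better than required, with no factor of $(1-\beta)$.

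On each $E_j$, I would write $A^\epsilon = Q_j^\epsilon + (A^\epsilon - Q_j^\epsilon)$ and apply the product rule to $f_\beta^\epsilon = \kappa_\beta^\epsilon \eta_0^\epsilon + \eta_j^\epsilon \phi_j^\epsilon$, using $Q_j^\epsilon \phi_j^\epsilon = \spb(Q_j)\phi_j^\epsilon$ and $A^\epsilon 1 = -\epsilon^{-1}W_0 - W_1$, to obtain
\[
\frac{A^\epsilon f_\beta^\epsilon}{f_\beta^\epsilon} = -\epsilon^{-1}W_0 - W_1 + \frac{\eta_j^\epsilon \phi_j^\epsilon}{f_\beta^\epsilon}\bigl(\spb(Q_j) + \epsilon^{-1} q_j + w_j\bigr) + \frac{(\phi_j^\epsilon - \kappa_\beta^\epsilon) L\eta_j^\epsilon + 2\epsilon\langle\nabla \eta_j^\epsilon,\nabla\phi_j^\epsilon\rangle + \eta_j^\epsilon\langle F-\ell_j,\nabla\phi_j^\epsilon\rangle}{f_\beta^\epsilon},
\]
with $L = \epsilon \upDelta + \langle F,\nabla\rangle$. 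The decisive inequality is $\eta_j^\epsilon\phi_j^\epsilon/f_\beta^\epsilon \leq \beta$, which follows from the fact that the quadratic forms defining $\eta_j^\epsilon$ and $\phi_j^\epsilon$ coincide, so that $\phi_j^\epsilon \leq \kappa_\beta^\epsilon$ on $E_j$ with equality exactly where $\eta_j^\epsilon = \beta$. Combining this with $-\epsilon^{-1}W_0$, I get the leading estimate
\[
-\epsilon^{-1}W_0 + \beta \epsilon^{-1}q_j = -(1-\beta)\epsilon^{-1}q_j - \epsilon^{-1}(W_0 - q_j) \leq -(1-\beta)\delta\epsilon^{2r-1} + C,
\]
where I use $q_j(x) \gtrsim \epsilon^{2r}$ on $E_j$ (uniformly in $\beta$, since $\chi^{-1}(\beta) \geq 1$ for $\beta \leq 1$) and the Taylor estimate $|W_0 - q_j| \leq C|x-x_j|^3 = O(\epsilon^{3r})$, which contributes only $O(1)$ to the quotient thanks to $r > 1/3$.

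The remaining contributions should be bounded by an $\epsilon$- and $\beta$-independent constant. The bound $\eta_j^\epsilon\phi_j^\epsilon/f_\beta^\epsilon \leq 1$ combined with $|F-\ell_j| \leq C\epsilon^{2r}$, $|\nabla\phi_j^\epsilon|\leq C\epsilon^{r-1}\phi_j^\epsilon$ and $3r-1 > 0$ handles the $\spb(Q_j) + w_j$ and $\langle F-\ell_j,\nabla\phi_j^\epsilon\rangle$ pieces, while the elementary fact that $f_\beta^\epsilon \geq \phi_j^\epsilon$ (itself a consequence of $\kappa_\beta^\epsilon \geq \phi_j^\epsilon$ on $E_j$) together with $|2\epsilon\langle\nabla\eta_j^\epsilon,\nabla\phi_j^\epsilon\rangle|\leq C\phi_j^\epsilon$ handles the cross term. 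The step I expect to require the most care is the boundary term $(\phi_j^\epsilon - \kappa_\beta^\epsilon) L\eta_j^\epsilon/f_\beta^\epsilon$, for which the naive bounds $|\phi_j^\epsilon - \kappa_\beta^\epsilon|\leq \kappa_\beta^\epsilon$, $|L\eta_j^\epsilon|=O(1)$ from Lemma~\ref{lem:beta-bounds}(i), and $f_\beta^\epsilon \geq (1-\beta)\kappa_\beta^\epsilon$ only yield a constant of size $O(1/(1-\beta))$ that would destroy uniformity in $\beta$; to obtain the uniform $C$ claimed one should instead exploit the explicit identity $\kappa_\beta^\epsilon - \phi_j^\epsilon = \kappa_\beta^\epsilon\bigl(1 - e^{-(\tau - \chi^{-1}(\beta))\epsilon^{2r-1}/2}\bigr)$ with $\tau := \epsilon^{-2r}\langle x-x_j, X_j(x-x_j)\rangle$, trading the $1/(1-\beta)$ factor against Gaussian smallness on the portion of $E_j$ where $\tau$ is bounded away from $\chi^{-1}(\beta)$, while on the complementary sliver the deficit $\phi_j^\epsilon - \kappa_\beta^\epsilon$ is itself small.
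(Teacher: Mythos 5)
Your overall route is the paper's: the same test function, the same Protter--Weinberger framework, the key inequality $\eta_j^\epsilon\phi_j^\epsilon/f_\beta^\epsilon\leq\beta$ obtained from $\phi_j^\epsilon\leq\kappa_\beta^\epsilon$ on the transition region (equivalently $f_\beta^\epsilon\geq\phi_j^\epsilon$), the Taylor estimates $|W_0-q_j|=O(\epsilon^{3r})$ and $|F-\ell_j|=O(\epsilon^{2r})$, and the lower bound $q_j\gtrsim\epsilon^{2r}$ outside the $\epsilon^r$-balls. The paper simply computes $\nabla f_\beta^\epsilon$ and $\upDelta f_\beta^\epsilon$ once over the whole set $\{\eta_0^\epsilon>1-\beta\}$ rather than splitting into $E_0$ and the $E_j$'s, and collects what you call the remaining contributions into an $O(1)$. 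Your product-rule identity on $E_j$ is correct, and your treatment of $E_0$, of $\braket{F-\ell_j,\nabla\phi_j^\epsilon}$ and of the cross term $2\epsilon\braket{\nabla\eta_j^\epsilon,\nabla\phi_j^\epsilon}$ is sound.

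The step you rightly single out is the one your proposal does not close. Write $\tau:=\epsilon^{-2r}\braket{x-x_j,X_j(x-x_j)}$ and consider points with $\tau-\chi^{-1}(\beta)$ much larger than $\epsilon^{1-2r}$ but small enough that $\chi(\tau)$ is still comparable to $\beta$; such points exist for every $\beta$ once $\epsilon$ is small. There the deficit $\kappa_\beta^\epsilon-\phi_j^\epsilon=\kappa_\beta^\epsilon\bigl(1-\Exp{-(\tau-\chi^{-1}(\beta))\epsilon^{2r-1}/2}\bigr)$ is already essentially all of $\kappa_\beta^\epsilon$ (so the ``sliver'' alternative fails), while $f_\beta^\epsilon=\kappa_\beta^\epsilon(1-\eta_j^\epsilon)+\eta_j^\epsilon\phi_j^\epsilon$ is of order $\kappa_\beta^\epsilon(1-\chi(\tau))$, which can still be as small as $\kappa_\beta^\epsilon(1-\beta)$; the Gaussian smallness of $\phi_j^\epsilon$ enters neither the numerator nor the denominator in a helpful way. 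Your dichotomy therefore leaves the boundary term at order $|\epsilon\upDelta\eta_j^\epsilon+\braket{F,\nabla\eta_j^\epsilon}|/(1-\chi(\tau))$, and since $|\chi'|/(1-\chi)$ is not integrably bounded as $\rho\downarrow 1$, its supremum over $\{\tau>\chi^{-1}(\beta)\}$ diverges as $\beta\to1$: no $\beta$-uniform constant comes out.

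Two remarks to calibrate this. First, you have found a genuine soft spot rather than created one: after writing $\nabla\eta_0^\epsilon=-\sum_j\nabla\eta_j^\epsilon$, the paper's own proof buries exactly the term $\sum_j(\phi_j^\epsilon-\kappa_\beta^\epsilon)(\epsilon\upDelta\eta_j^\epsilon+\braket{F,\nabla\eta_j^\epsilon})/f_\beta^\epsilon$ in an ``$O(1)$ uniform in $\beta$'' whose only stated justification, Lemma~\ref{lem:beta-bounds}.i, yields $\kappa_\beta^\epsilon/f_\beta^\epsilon<1/(1-\beta)$. Second, a constant of order $(1-\beta)^{-1}$ is harmless for the only use made of the lemma, since in the proof of the upper bound one sends $\epsilon\to0$ at fixed $\beta$ first and $-(1-\beta)\delta\epsilon^{2r-1}\to-\infty$; but if one insists on the uniform constant as stated, the boundary term should not be treated as an additive error at all. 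It should be played against the unspent part of the potential: at the problematic points $\phi_j^\epsilon\ll\kappa_\beta^\epsilon$ forces the weight $\eta_j^\epsilon\phi_j^\epsilon/f_\beta^\epsilon$ in front of $+\epsilon^{-1}q_j$ to be negligible, so essentially the whole of $-\epsilon^{-1}q_j\leq-\delta''\epsilon^{2r-1}$\,---\,rather than only its $(1-\beta)$-fraction\,---\,remains available to absorb it. That comparison, not the sliver dichotomy, is what closes the estimate.
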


\begin{proof}
	Let $x$ such that~$\eta_0^\epsilon(x) > 1-\beta$ be arbitrary. Throughout the proof, the big $O$ notation refers to constants that are uniform in~$x$,~$\epsilon$ and~$\beta$.
	We compute
	\begin{equation}\label{eq:nabla-f}
		\nabla f_\beta^\epsilon(x) =  \kappa_\beta^\epsilon \nabla \eta_0^\epsilon(x) +
		\sum_{j=1}^m \phi_j^\epsilon(x) \nabla \eta_j^\epsilon(x) - \epsilon^{-1}\eta_j^\epsilon(x)\phi_j^\epsilon(x)X_j (x-x_j),
	\end{equation}
	and
	\begin{multline}\label{eq:upDelta-f}
		\upDelta f_\beta^\epsilon(x) =  \kappa_\beta^\epsilon \upDelta \eta_0^\epsilon(x) +
		\sum_{j=1}^m \phi_j^\epsilon(x)\upDelta \eta_j^\epsilon(x) + 2\epsilon^{-1}\braket{ \phi_j^\epsilon(x) X_j (x-x_j), \nabla \eta_j^\epsilon(x)}  \\ -  \epsilon^{-1}\eta_j^\epsilon(x)\phi_j^\epsilon(x)\tr X_j + \epsilon^{-2} \eta_j^\epsilon(x) \phi_j^\epsilon(x) |X_j (x-x_j)|^2.
	\end{multline}
	Hence, using Lemma~\ref{lem:beta-bounds}.i and the fact that $|F(x)| = O(\epsilon^r)$ on~$\supp \nabla \eta_0^\epsilon$ and~$\supp \nabla \eta_j^\epsilon$,
	\begin{equation*}
		\frac{\epsilon \upDelta f_\beta^\epsilon(x) + \braket{F(x),\nabla f_\beta^\epsilon(x)}}{f_\beta^\epsilon}
		 =
		 \sum_{j=1}^m\frac{\phi_j^\epsilon(x)\eta_j^\epsilon(x)}{\epsilon f_\beta^\epsilon(x)} \braket{x-x_j, X_j^2(x-x_j) - X_j F(x)}  + O(1).
	\end{equation*}
	Using $|F - \ell_j| = O(\epsilon^{2r})$ on~$\supp \eta_j^\epsilon$ and then the \textsc{are}, we obtain
	\begin{align*}
		\frac{\epsilon \upDelta f_\beta^\epsilon(x) + \braket{F(x),\nabla f_\beta^\epsilon(x)}}{f_\beta^\epsilon}
		&= \epsilon^{-1} \sum_{j=1}^m\frac{\phi_j^\epsilon(x)\eta_j^\epsilon(x)}{f_\beta^\epsilon(x)} q_j(x) + O(1).
	\end{align*}
	Using Lemma~\ref{lem:beta-bounds}.i again,
	\begin{align}\label{eq:deriv-terms}
		\frac{\epsilon \upDelta f_\beta^\epsilon(x) + \braket{F(x),\nabla f_\beta^\epsilon(x)}}{f_\beta^\epsilon}
		&\leq \epsilon^{-1} \beta \sum_{j=1}^m \one_{\supp \eta_j^\epsilon} q_j(x) + O(1).
	\end{align}
	Substracting
	\begin{align*}
		\epsilon^{-1} W_0 + W_1 \geq \epsilon^{-1}\beta \Big(\sum_{j=1}^m \one_{\supp_{\eta_j^\epsilon}}W_0\Big) + (1-\beta) W_0 + O(1)
	\end{align*}
	---\,we have used~\textnormal{(L0)} and~$\alpha \in \cA$ to obtain $\beta\epsilon^{-1}(1 - \sum_j \one_{\supp \eta_j^\epsilon})W_0 + W_1 \geq O(1)$\,---\,from~\eqref{eq:deriv-terms}, we obtain
	\begin{align*}
		\frac{(A^\epsilon f_\beta^\epsilon)(x)}{f_\beta^\epsilon(x)}
		&\leq \epsilon^{-1} \sum_{j=1}^m\beta \one_{\supp_{\eta_j^\epsilon}}(x) (q_j(x) - W_0(x))
			 - \epsilon^{-1} (1-\beta) W_0(x) + O(1).
	\end{align*}
	Now, because $|W_0 - q_j| = O(\epsilon^{3r})$ on~$\supp \eta_j^\epsilon$, we have
	\begin{align*}
		\frac{(A^\epsilon f_\beta^\epsilon)(x)}{f_\beta^\epsilon(x)}
		&\leq \epsilon^{-1} \beta O(\epsilon^{3r}) - \epsilon^{-1}(1-\beta)W_0(x) + O(1).
	\end{align*}
	Because~$W_0 \geq 0$ with nondegenerate zeroes precisely in~$\{x_j\}_{j=1}^m$, and because the set $\{x : \eta_j^\epsilon(x) < \beta \}$ excludes a ball of radius of order~$\epsilon^r$ around~$x_j$, there exists a strictly positive constant~$\delta > 0$ such that $W_0(x) > \delta\epsilon^{2r}$ for all~$x$ such that $\eta_0^\epsilon(x) > 1 - \beta$.
\end{proof}

\begin{lemma}\label{lem:PW-in}
	There exists a positive constant~$C$ such that
	\begin{equation*}
		\sup_{\{\eta_j^\epsilon(x) \geq \beta\}} \frac{(A^\epsilon f_\beta^\epsilon)(x)}{\ f_\beta^\epsilon(x)}   \leq \beta \spb Q_j^\epsilon + (\beta^{-1}-\beta)|w_j| +  C\big(\gamma_\beta (1+\beta^{-1})(1+\epsilon^{1-2r}) + \epsilon^{3r-1} + \epsilon^r\big).
	\end{equation*}
	for all~$\epsilon \in (0, \epsilon_0)$ and all~$\beta \in [\tfrac 12,1)$.
\end{lemma}

\begin{proof}
	Let~$x$ such that $\eta_j^\epsilon(x) \geq \beta$ be arbitrary. In particular, $|x-x_j| = O(\epsilon^r)$. Throughout the proof, the big $O$ notation refers to constants that are independent of~$x$,~$\epsilon$ and~$\beta$. By~\eqref{eq:nabla-f},~\eqref{eq:upDelta-f}, Lemma~\ref{lem:beta-bounds}.ii and the fact that $|F| = O(\epsilon^r)$,
	\begin{align}\label{eq:rid-O-gamma}
		\bigg|\frac{(A^\epsilon f_\beta^\epsilon)(x)}{f_\beta^\epsilon(x)} - \frac{\eta_j^\epsilon(x)(A^\epsilon \phi_j^\epsilon)(x)}{f_\beta^\epsilon(x)} \bigg| \leq C\gamma_\beta (1 + \beta^{-1}) (\epsilon^{1-2r} + 1).
	\end{align}
	Now, using $|F(x)-\ell_j(x)| = O(\epsilon^{2r})$,~$\nabla \phi_j^\epsilon(x) / \phi_j^\epsilon(x) = \epsilon^{-1}O(\epsilon^r)$,~$|W_0(x) - q_j(x)| = O(\epsilon^{3r})$ and~$|W_1(x)-w_j(x)| = O(\epsilon^r)$ for $x \in \supp \eta_j^\epsilon$,
	\begin{align*}
		\frac{\eta_j^\epsilon(x)(A^\epsilon \phi_j^\epsilon)(x)}{f_\beta^\epsilon(x)}
		&= \frac{\eta_j^\epsilon(x)\phi_j^\epsilon(x)}{f_\beta^\epsilon(x)}
		\bigg(\frac{([Q_j^\epsilon - |w_j|] \phi_j^\epsilon)(x)}{\phi_j^\epsilon(x)} + |w_j| + \epsilon^{-1}O(\epsilon^{3r})+O(\epsilon^r )\bigg).
	\end{align*}
Because $\phi_j^\epsilon$ is an eigenvector of $[Q_j^\epsilon - |w_j|]$ with eigenvalue $\spb Q_j - |w_j| \leq 0$ and because the prefactor on the right-hand side lies in the interval $[\beta,\beta^{-1}]$ by Lemma~\ref{lem:beta-bounds}.ii, we have
\begin{align}\label{eq:rid-o-in-eps}
	\frac{\eta_j^\epsilon(x)(A^\epsilon \phi_j^\epsilon)(x)}{f_\beta^\epsilon(x)} &\leq
	\beta \spb Q_j + (\beta^{-1}-\beta) |w_j|  + C(\epsilon^{3r-1} + \epsilon^r).
\end{align}
Combining~\eqref{eq:rid-O-gamma} and~\eqref{eq:rid-o-in-eps} and using the fact that $\eta_j^\epsilon(x) \geq \beta$ implies $|F(x)| = O(\epsilon^r)$, we conclude that a bound of the proposed form indeed holds.
\end{proof}

\appendix

  \section{Properties of the deformed generators}
  \label{app:semig}

  In this appendix, we collect some results from the theory of semigroups applied to partial differential equations involving elliptic operators of the form
  \begin{equation}
    \Lambda^{\epsilon,\alpha} := \epsilon \upDelta +  \braket{-\nabla V + (1- 2\alpha) b,\nabla } - \tfrac{\alpha(1-\alpha)}{\epsilon} |b|^2  + \tfrac {\alpha}{\epsilon} \braket{b, \nabla V} - \alpha\operatorname{div} b,
  \end{equation}
  which play a key role in Sections~\ref{sec:large-t},~\ref{sec:small-e} and~\ref{app:semiclass} of the paper, similarly as in Appendix~A of~\cite{BD15} (the case where $b$ is bounded). They are deformations of the infinitesimal generator of the semigroup associated to~\eqref{eq:SDE}.

  We use technical results from the article~\cite{MPRS05}, Chapter~1 of~\cite{Lan} and Chapters~A-I,~C-IV and~B-IV of~\cite{AGG+}. Throughout this section, whenever we refer to~$V$ and~$b$, we assume that~\textnormal{(L0)},~\textnormal{(L1)} and~\textnormal{(RB)} hold. Also, we write~$\mathrm{L}^p(\rr^N)$ for~$\mathrm{L}^p(\rr^N,\d\vol)$, and similarly for the Sobolev spaces. For the spaces~$C(\rr^N)$ of continuous functions and~$C^k(\rr^N)$ of~$k$-times differentiable functions, the subscript ``0'' is used for ``vanishing at infinity''; ``$\mathrm{c}$'', for ``compactly supported''.

  For $p \in (1,\infty)$, a straightforward computation shows that
  \[
    \Exp{-(p\epsilon)^{-1}V}\Lambda^{\alpha,\epsilon}(\Exp{(p\epsilon)^{-1}V}f) =  \epsilon\upDelta f + \braket{F_p,\nabla f} - \Omega_p f
  \]
  for all $f \in C^2_\mathrm{c}(\rr^N)$, where
  \[
    F_p := (\tfrac 2{p} - 1)\nabla V + (1-2\alpha) b,
  \]
  \[
    \Omega_p := \tfrac 1\epsilon
    W_0
    - \tfrac 1{p} \upDelta V + \alpha\operatorname{div} b
  \]
  and
  \[
     W_0 := \tfrac 1p (1- \tfrac 1p) |\nabla V|^2  - \tfrac{1 -2\alpha +\alpha p }{p}\braket{b,\nabla V} + \alpha(1-\alpha) |b|^2.
  \]
  For technical reasons, we need to restrict our attention to a certain $\alpha$-dependent set of powers~$p$. We introduce an admissibility condition for the pair~$(\alpha,p)$.
  \begin{definition}
   {The pair~$(\alpha,p) \in \rr \times (1,\infty)$ is said to be~\emph{admissible} if there exists $\ell\in(0,1)$
   such that
   \[
  		\inf \left\{ \ell \tfrac {1}{p} \big(1 - \tfrac 1p \big) |\nabla V(x)|^2  - \tfrac{1 -2\alpha +\alpha p }{p}\braket{b(x),\nabla V(x)} + \alpha(1-\alpha) |b(x)|^2 : x \in \rr^d \right\}> -\infty.
   \]
	 }
  \end{definition}
  The next lemma\,---\,whose proof follows from straightforward applications of~\textnormal{(RB)} and the Cauchy--Schwarz inequality\,---\,gives concrete sufficient conditions for admissibility. These conditions are illustrated in Figure~\ref{fig:alpha-and-p}.
  \begin{lemma}\label{lem:link-I-Abb}
  	Let~$k_b \in [0,\tfrac 12)$ and $h_b \in [0,\infty)$ be as in assumption~\textnormal{(RB)}. If \[1-2\alpha+\alpha p \geq 0\] and either
  	\begin{itemize}
  		\item[i.] we have $\alpha (1 - \alpha)  \geq 0$ and $1- p^{-1} -(1 - 2\alpha + \alpha p)k_b  > 0$ or
  		\item[ii.] we have $\alpha (1 - \alpha)  < 0$ and $1- p^{-1} - (1 - 2\alpha + \alpha p)k_b - p\alpha(\alpha - 1)h_b > 0$,
  	\end{itemize}
  	then the pair~$(\alpha,p)$ is admissible. In particular, if $p$ is fixed in the interval~$(\tfrac{1}{1-k_b},\tfrac{1}{k_b})$, then the pair~$(\alpha,p)$ is admissible for all~$\alpha$ in an open interval containing~$[0,1]$.
  \end{lemma}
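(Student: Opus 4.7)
The strategy is to use Assumption~\textnormal{(RB)} to replace the two mixed terms in the admissibility expression by multiples of~$|\nabla V|^2$, reducing everything to a single pointwise bound on~$|\nabla V(x)|^2$. Note first that the hypothesis~$1 - 2\alpha + \alpha p \geq 0$ makes the coefficient~$-(1 - 2\alpha + \alpha p)/p$ of~$\braket{b,\nabla V}$ nonpositive, so the first inequality of~\textnormal{(RB)} yields the lower bound
\[
-\frac{1 - 2\alpha + \alpha p}{p}\braket{b(x),\nabla V(x)} \geq -\frac{1 - 2\alpha + \alpha p}{p}k_b|\nabla V(x)|^2.
\]

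In Case~i, the term~$\alpha(1-\alpha)|b|^2 \geq 0$ may simply be dropped, so admissibility reduces to finding~$\ell \in (0,1)$ with $\ell(1 - p^{-1}) \geq (1 - 2\alpha + \alpha p) k_b$; this is possible precisely when the corresponding strict inequality holds with $\ell = 1$, which is the hypothesis of~i.  In Case~ii, I additionally use the second inequality of~\textnormal{(RB)} to bound~$\alpha(1-\alpha)|b|^2 \geq \alpha(1-\alpha) h_b |\nabla V|^2$ (the sign flip occurs because the coefficient is now negative), reducing admissibility to the existence of $\ell < 1$ with
\[
\ell(1 - p^{-1}) \geq (1 - 2\alpha + \alpha p)k_b + p\alpha(\alpha-1)h_b,
\]
which matches the stated hypothesis of~ii exactly.

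For the \emph{in particular} clause, I fix~$p \in (\tfrac{1}{1-k_b}, \tfrac{1}{k_b})$ and examine the affine function $\alpha \mapsto 1 - p^{-1} - (1 - 2\alpha + \alpha p)k_b$ on~$[0,1]$.  Its values at $\alpha = 0$ and $\alpha = 1$ are $1 - p^{-1} - k_b$ and $(p-1)(p^{-1} - k_b)$ respectively, both strictly positive by the choice of~$p$; by affinity it remains strictly positive throughout~$[0,1]$, hence on an open neighbourhood by continuity.  The constraint $1 - 2\alpha + \alpha p \geq 0$ likewise holds on an open neighbourhood of~$[0,1]$, since~$p > 1$.  Since the extra term $p\alpha(\alpha-1)h_b$ relevant for Case~ii vanishes at $\alpha \in \{0,1\}$, a continuity argument ensures that the Case~ii hypothesis continues to hold just outside~$[0,1]$, so every~$\alpha$ in a common open interval containing~$[0,1]$ falls under one of the two cases.

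I expect no substantial obstacle: the argument is essentially a sign-tracking exercise, and the main point requiring care is applying the correct inequality from~\textnormal{(RB)} depending on the sign of each coefficient, together with the numerical verification that the endpoints $\alpha \in \{0,1\}$ lie strictly inside the admissible region for the chosen~$p$.
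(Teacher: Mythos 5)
Your proof is correct. The paper offers no details, asserting only that the lemma "follows from straightforward applications of \textnormal{(RB)} and the Cauchy--Schwarz inequality," and your sign-tracking argument supplies exactly what is needed: the hypothesis $1 - 2\alpha + \alpha p \geq 0$ makes the coefficient of $\braket{b,\nabla V}$ nonpositive, so the one-sided bound $\braket{b,\nabla V} \leq k_b|\nabla V|^2$ from \textnormal{(RB)} already yields the required lower bound after the sign flip. Worth noting: with that sign hypothesis in place you never actually need Cauchy--Schwarz, which would only be required to produce a \emph{lower} bound $\braket{b,\nabla V} \geq -\sqrt{h_b}\,|\nabla V|^2$ should the coefficient ever be positive; your direct reliance on \textnormal{(RB)} alone is cleaner. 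The reduction of each case to the existence of $\ell \in (0,1)$ satisfying a scalar inequality of the form $\ell(1-p^{-1}) \geq c$ with $c \geq 0$, the endpoint evaluations at $\alpha = 0$ (giving $1 - p^{-1} - k_b$) and $\alpha = 1$ (giving $(p-1)(p^{-1}-k_b)$) for fixed $p \in (\tfrac{1}{1-k_b},\tfrac 1{k_b})$, the affinity observation, and the continuity argument exploiting that $p\alpha(\alpha-1)h_b$ vanishes at $\alpha \in \{0,1\}$ are all sound and complete.
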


  \begin{figure}
    \begin{center}
		\includegraphics{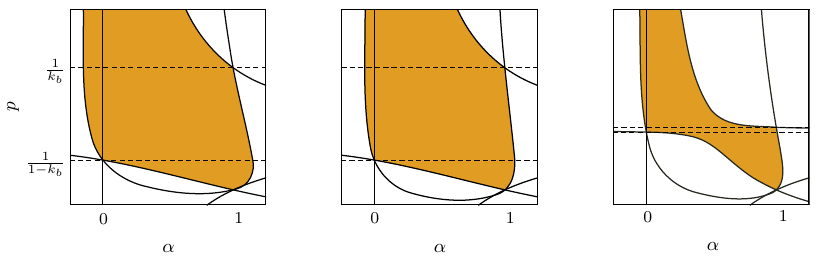}
  	\end{center}
    \caption{The orange region enclosed in the solid contours is the set of values allowed in Lemma~\ref{lem:link-I-Abb} computed for~$(k_b,h_b) = (0.33,0.75), (0.33,1.5)$ and~$(0.49,1.5)$\,---\,from left to right.
  	}
    \label{fig:alpha-and-p}
  \end{figure}

  Until further notice, we fix $\alpha$, $p$ and~$\ell$ as in the admissibility condition.
  By Assumption~\textnormal{(L0)} and the fact that~$b$ is globally Lipschitz, there exist~$c^0_p$ and $\theta \in (0,1)$ such that
  \begin{equation}\label{eq:MH5}
    |\operatorname{div} F_p| \leq \theta  \Big((1-\ell)\tfrac {1}{\epsilon p} (1- \tfrac 1p)|\nabla V|^2 + c^0_p\Big).
  \end{equation}
  Set
  \[
    U_p := (1-\ell) \tfrac 1{\epsilon p} (1- \tfrac 1p)|\nabla V|^2 + c^0_p.
  \]
  Using the same properties again, we may pick~$\kappa$ such that
  \begin{equation}\label{eq:MH4}
    |F_p| \leq \kappa U_p^{\frac 12}.
  \end{equation}
  Using Assumption~\textnormal{(L0)} and the admissibility condition, we can pick positive constants~$c_p$ and~$c^1_p$ such that
  \begin{equation}\label{eq:MH3}
    U_p \leq \Omega_p + c_p \leq c^1_p U_p.
  \end{equation}

  \begin{lemma}\label{lem:MPRS-as-one}
  	Suppose that the pair~$(\alpha,p)$ is admissible. Then, the operator
  	\[
  		A_p := \epsilon\upDelta + \braket{F_p,\nabla} - \Omega_p - c_p
  	\]
  	with domain
  	\[
  		\cD^q := \{f \in \mathrm{W}^{2,q}(\rr^N) : U_p f \in \mathrm{L}^q(\rr^N)\}
  	\]
  	is closed as an unbounded operator on~$\mathrm{L}^q(\rr^N)$ and generates an analytic, compact, positivity-preserving semigroup on~$\mathrm{L}^q(\rr^N)$ for all~$q \in (1,\infty)$. With domain
  	\[
  	  \cD^\infty := \{f \in C_0(\rr^N) : f \in \mathrm{W}_{\textnormal{loc}}^{2,q} \textnormal { for all } q \in (1,\infty) \textnormal{ and } \upDelta f, U_p f \in C_0(\rr^N)\},
  	\]
  	it is closed as an unbounded operator on~$C_0(\rr^N)$ and generates an analytic, compact, positivity-preserving semigroup on~$C_0(\rr^N)$.
  \end{lemma}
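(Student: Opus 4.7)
The plan is to verify that the structural inequalities~\eqref{eq:MH3},~\eqref{eq:MH4} and~\eqref{eq:MH5} set up precisely the hypotheses under which the generation and regularity results of~\cite{MPRS05} (for the $\mathrm{L}^q$ scale) and~\cite[Ch.~1]{Lan} (for $C_0$) apply, and then to harvest closure, generation of an analytic semigroup, compactness, and positivity preservation from those references.

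For $q \in (1,\infty)$, I would invoke the main theorem of~\cite{MPRS05} to obtain that $(A_p,\cD^q)$ is closed and generates an analytic contraction semigroup on $\mathrm{L}^q(\rr^N)$. Compactness then follows from a coercivity argument: Assumption~\textnormal{(L0)} forces $U_p(x) \to \infty$ as $|x| \to \infty$, which yields a compact embedding $\cD^q \hookrightarrow \mathrm{L}^q(\rr^N)$ via local elliptic regularity and the Rellich--Kondrachov theorem, hence a compact resolvent. Positivity preservation is immediate from the elliptic maximum principle applied to the resolvent equation $(\mu - A_p)u = f \geq 0$ for $\mu$ large, or alternatively from a Feynman--Kac representation of $\Exp{tA_p}$ applied to the positive part and negative part of the initial datum separately.

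For the $C_0(\rr^N)$ case, the strategy is identical with the appropriate replacements: closure and generation of an analytic semigroup on $\cD^\infty$ come from the Schauder-type theory in~\cite[Ch.~1]{Lan}; coercivity of $U_p$ together with interior Schauder estimates gives a compact resolvent; and the maximum principle yields positivity. Analyticity in both scales is standard once the sectorial resolvent bounds from~\cite{MPRS05} and~\cite{Lan} are established.

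The main obstacle is not analytical but bibliographic: one must carefully match the hypotheses of~\cite{MPRS05,Lan} with the bounds~\eqref{eq:MH3}--\eqref{eq:MH5} and confirm that the maximal domains singled out there coincide with $\cD^q$ and $\cD^\infty$ respectively. Once this identification is made, no genuinely new estimates are required and the lemma reduces to quoting the appropriate results from the literature.
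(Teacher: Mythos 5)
Your overall route is the same as the paper's: match the hypotheses of \cite{MPRS05} and quote its generation, compactness and positivity results (the paper uses Theorem~3.4 for the $\mathrm{L}^q$ scale, Theorem~4.4 for $C_0(\rr^N)$, and Proposition~6.4 for compactness). However, there is a concrete missing step. The hypotheses of \cite{MPRS05} are \emph{not} exhausted by~\eqref{eq:MH3}--\eqref{eq:MH5}: one also needs a smallness bound on the gradient of the auxiliary potential, namely that for every $r>0$ there is a constant $C_{p,r}$ with
\[
	|\nabla U_p| \leq 16\, r\, U_p^{\frac 32} + C_{p,r},
\]
which is hypothesis~(H2) of \cite{MPRS05} and is the one genuinely new estimate the proof must supply. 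It is derived from the \emph{second} clause of Assumption~\textnormal{(L0)} (the lower bound on $x \mapsto |\nabla V(x)|^2 - a\|D^2V(x)\|$ for all $a$, which exists in the assumption precisely for this purpose) together with Cauchy's inequality, since $\nabla U_p$ involves $D^2V\,\nabla V$. Your closing claim that ``no genuinely new estimates are required'' is therefore not correct as stated, and without~(H2) the citation of \cite{MPRS05} does not go through.

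The remaining deviations from the paper are acceptable alternatives rather than gaps. Your compactness argument (coercivity of $U_p$, which does follow from the first clause of~\textnormal{(L0)}, combined with local elliptic regularity and Rellich--Kondrachov to get a compact embedding of $\cD^q$ into $\mathrm{L}^q$) is essentially a re-derivation of Proposition~6.4 of \cite{MPRS05}, and your maximum-principle or Feynman--Kac argument for positivity is standard. For the $C_0(\rr^N)$ case the paper again quotes \cite{MPRS05} (Theorem~4.4) rather than the Schauder theory of \cite{Lan}, reserving \cite{Lan} for the later regularity upgrade of the eigenfunction in Lemma~\ref{lem:indep-p}; either source works, but the domain identification with $\cD^\infty$ still has to be checked against whichever reference you use. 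One small inaccuracy: the semigroup obtained is holomorphic but not claimed to be a contraction semigroup, so you should drop that adjective.
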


  \begin{proof}
  	For any real number~$r > 0$,
  	by~\textnormal{(L0)} and Cauchy's inequality, there exists~$C_{p,r} > 0$ such that
    \begin{equation}\label{eq:MH2}
      |\nabla U_p| \leq 16 r U_p^{\frac 32} + C_{p,r}.
    \end{equation}
  	The bounds~\eqref{eq:MH5}--\eqref{eq:MH2} precisely give hypotheses~(H2)--(H5) of~\cite{MPRS05}. Therefore, Theorem~3.4 in~\cite{MPRS05} gives that $(A_p,\cD^q)$ generates a holomorphic positivity-preserving semigroup on~$\mathrm{L}^q(\rr^N)$, and Theorem~4.4 in~\cite{MPRS05} gives that $(A_p,\cD^\infty)$ generates a holomorphic positivity-preserving semigroup on~$C_0(\rr^N)$. Compactness follows from Proposition~6.4 in~\cite{MPRS05}.
  \end{proof}

  \begin{lemma}\label{lem:irred-coinc}
  	The semigroups in Lemma~\ref{lem:MPRS-as-one} coincide on the intersection of their spaces of definition and are all irreducible (positivity improving) on their respective spaces.
  \end{lemma}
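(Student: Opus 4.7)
The plan is to deduce both assertions from standard properties of the core $C_\mathrm{c}^\infty(\rr^N)$ together with parabolic regularity.

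For the coincidence assertion, I will first verify that $C_\mathrm{c}^\infty(\rr^N)$ is contained in each of the domains $\cD^q$ ($q\in(1,\infty)$) and in $\cD^\infty$ defined in Lemma~\ref{lem:MPRS-as-one}: this is immediate since compactly supported smooth functions lie in every Sobolev space and the weights $U_p$ are locally bounded. Next, I will establish that $C_\mathrm{c}^\infty(\rr^N)$ is in fact a core for each realization of $A_p$. This will follow from a standard cutoff/mollification argument: given $f$ in the relevant domain, approximate it by $\rho_\delta \ast (\eta_R f)$ where $\eta_R$ is a smooth cutoff at radius $R$ and $\rho_\delta$ a mollifier, and check convergence in the graph norm; the bounds~\eqref{eq:MH5}--\eqref{eq:MH3} are tailored to control the commutator terms $[\eta_R,A_p]$ produced by the cutoff. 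Once this core property is in hand, the resolvents $(\lambda-A_p)^{-1}$ on any two of these spaces must agree on $C_\mathrm{c}^\infty(\rr^N)$ because the underlying differential expression is identical and the Green's operator solving the elliptic problem in $\rr^N$ is unique. By density and continuity, the resolvents agree on the intersection of any two spaces of definition; coincidence of the semigroups follows by the Post--Widder exponential formula or by standard Trotter--Kato reasoning.

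For irreducibility, I would rely on the characterization recorded in~\cite[Ch.\,{C-III}]{AGG+}: a positivity-preserving strongly continuous semigroup on a Banach lattice is irreducible if and only if for every nonzero nonnegative $f$ and every nonzero nonnegative $\phi$ in the dual, there is $t>0$ with $\braket{\phi,\Exp{tA_p}f}>0$. The key ingredient is the parabolic strong maximum principle (equivalently, the Harnack inequality) for the uniformly elliptic operator $\epsilon\upDelta+\braket{F_p,\nabla}-\Omega_p$ with smooth coefficients: any nonnegative, non-identically-zero distributional solution of $\partial_s u=A_p u$ on $(0,\infty)\times\rr^N$ is strictly positive everywhere. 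The function $\Exp{tA_p}f$ admits a continuous representative fitting this framework thanks to the analyticity of the semigroup combined with local elliptic regularity; hence $\Exp{tA_p}f>0$ pointwise, giving positivity improving and irreducibility simultaneously on each space.

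The principal obstacle is the core-property step: the weighted domains $\cD^q$ mix a Sobolev condition with the pointwise weight $U_p f\in\mathrm{L}^q$, so the truncation must be arranged to control both simultaneously, which is exactly where the admissibility bounds~\eqref{eq:MH4}--\eqref{eq:MH2} enter in an essential way. Everything else reduces to invoking classical results from parabolic theory and the general theory of positive semigroups.
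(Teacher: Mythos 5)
Your irreducibility argument is exactly the paper's route: the proof in the text simply invokes the strong maximum principle (via Step~6 of Lemma~A.1 in~\cite{BD15}), and your version\,---\,analyticity plus local elliptic regularity to get a genuine nonnegative solution of $\partial_s u = A_p u$, then strict positivity everywhere\,---\,is a faithful expansion of that. (Minor point: the coefficients are only $C^1$/$C^2$ here, not smooth, but that is enough for the strong maximum principle, as the paper itself remarks.)

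The coincidence half has a genuine gap. First, the core property of $C_\mathrm{c}^\infty(\rr^N)$ does not by itself yield agreement of the resolvents: knowing that $(A_p,\cD^q)$ and $(A_p,\cD^{q'})$ are both closures of $(A_p, C_\mathrm{c}^\infty)$ in their respective topologies says nothing about whether $(\lambda-A_p)^{-1}f$ computed in $\mathrm{L}^q$ and in $\mathrm{L}^{q'}$ are the same function. What you actually need\,---\,and what you assert in passing as ``the Green's operator solving the elliptic problem in $\rr^N$ is unique''\,---\,is a uniqueness theorem for $(\lambda-A_p)u=f$ in the class $\mathrm{L}^q+\mathrm{L}^{q'}$ (or $\mathrm{L}^q + C_0$). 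For second-order operators with unbounded coefficients on all of $\rr^N$ this is precisely the nontrivial point: without a Lyapunov-type condition such uniqueness can fail, and establishing it here requires the dissipativity coming from~\textnormal{(L0)}/\textnormal{(L1)} and the bounds~\eqref{eq:MH5}--\eqref{eq:MH3}. So the hard step is not the cutoff/mollification core argument you single out, but the uniqueness you treat as obvious. The paper sidesteps all of this by citing Lemma~4.3 of~\cite{MPRS05}, whose underlying mechanism is different and cleaner: each of the semigroups (on $\mathrm{L}^q$ for every $q$ and on $C_0$) is obtained as the increasing strong limit of the \emph{same} Dirichlet semigroups on balls $B_R$, so the limits automatically agree on common elements. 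If you want a self-contained proof, either adopt that exhaustion argument or supply the uniqueness statement explicitly (e.g.\ via the maximum principle applied to $\Exp{-\lambda s}u$ together with the Lyapunov function from~\textnormal{(L1)}).
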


  \begin{proof}
  	The first part of the lemma is proved as Lemma~4.3 in~\cite{MPRS05}. The second part follows from the strong maximal principle; see \textit{Step 6} in the proof of Lemma~A.1 in~\cite{BD15}.
  \end{proof}

  \begin{lemma}\label{lem:first-time-evec}
  	Suppose that the pair~$(\alpha,p)$ is admissible and let
  	\[
  		s_p := \sup \{\Re z : z \in \sp(A_p,\cD^p)\}.
  	\]
  	Then,~$s_p$ is a simple isolated eigenvalue and there exist a strictly positive
  	vector~$\tilde{\psi}_p \in \cD^p$ and a strictly positive functional~$\tilde{u}_p$ on~$\mathrm{L}^p(\rr^N)$ such that
  	\begin{equation}
  	\label{eq:convergence-Lp}
  		\lim_{t \to \infty} \big \| \Exp{-t s_p}\Exp{t A_p} f - \tilde{\psi}_p (\tilde{u}_p, f) \big \|_{p} = 0
  	\end{equation}
  	for all~$f \in \mathrm{L}^p(\rr^N)$.
  \end{lemma}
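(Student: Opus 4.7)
The plan is to invoke the theory of positive irreducible semigroups on Banach lattices as developed in [AGG+]. By Lemma~\ref{lem:MPRS-as-one} and Lemma~\ref{lem:irred-coinc}, the semigroup $(\Exp{tA_p})_{t \geq 0}$ on $\mathrm{L}^p(\rr^N)$ is analytic, compact for $t > 0$, positivity-preserving and irreducible. Compactness forces $\sigma(A_p)$ to be purely discrete, consisting of isolated eigenvalues of finite algebraic multiplicity; combined with the spectral mapping theorem for analytic semigroups, this implies that $s_p$ is itself an eigenvalue of $A_p$ (provided $s_p > -\infty$, which follows from the shift $-c_p$ built into the definition of $A_p$ and the standard bounds on semigroup norms).

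Next, a Perron--Frobenius-type theorem for irreducible positive compact semigroups on Banach lattices (see, e.g., [AGG+, Ch.~C-III] and [AGG+, Ch.~B-III]) yields that $s_p$ is an algebraically simple eigenvalue and that its eigenspace is spanned by a quasi-interior element of the positive cone of $\mathrm{L}^p(\rr^N)$; by standard identifications in $\mathrm{L}^p$-spaces this quasi-interior point is represented by an almost everywhere strictly positive function $\tilde\psi_p \in \cD^p$. The adjoint semigroup $(\Exp{tA_p^*})_{t\geq 0}$ on $\mathrm{L}^{p/(p-1)}(\rr^N)$ inherits compactness, analyticity and positivity, and irreducibility on the dual lattice follows from the fact that $(\Exp{tA_p})_{t\geq 0}$ is positivity-improving; hence the same Perron--Frobenius-type argument furnishes an a.e.\ strictly positive element $\tilde u_p$ in the dual for the eigenvalue $s_p$.

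The critical remaining step is to show that $s_p$ is strictly dominant, i.e.\ that there exists $\delta > 0$ with $\sigma(A_p) \setminus \{s_p\} \subseteq \{z \in \cc : \Re z \leq s_p - \delta\}$. In general, the peripheral spectrum of an irreducible positive semigroup forms a cyclic group contained in $s_p + \mathrm{i}\rr$, but analyticity of $(\Exp{tA_p})_{t \geq 0}$ constrains $\sigma(A_p)$ to lie in a sector with vertex at $s_p$ and opening angle strictly less than $\pi$; combined with the cyclic group structure, this forces the peripheral spectrum to reduce to the singleton~$\{s_p\}$. The discreteness of the spectrum then produces the required gap.

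Given this gap, normalise $(\tilde u_p, \tilde\psi_p) = 1$ and let $P = \tilde\psi_p(\tilde u_p, \,\cdot\,)$ be the Riesz spectral projection associated with $\{s_p\}$. The resulting semigroup decomposition
\[
	\Exp{tA_p} = \Exp{t s_p} P + R(t),\qquad \|R(t)\|_{\mathrm{L}^p \to \mathrm{L}^p} \leq C \Exp{t(s_p - \delta/2)}
\]
for $t$ large yields the norm convergence in~\eqref{eq:convergence-Lp}. The main obstacle is the verification of the spectral gap: once analyticity is used to rule out nontrivial peripheral spectrum, the rest is standard spectral theory.
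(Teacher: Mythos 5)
Your argument is correct and follows essentially the same route as the paper, whose proof is a one-line citation of the packaged Perron--Frobenius result for irreducible, compact, positive semigroups (Theorem~2.1 and Remark~2.2(e) in~\cite[Ch.\,{C-IV}]{AGG+}); you have simply unpacked the content of that theorem, including the standard use of analyticity together with cyclicity of the boundary spectrum to reduce the peripheral spectrum to~$\{s_p\}$. The one imprecise point is your justification that~$s_p > -\infty$: the shift by~$-c_p$ controls the growth bound from above rather than the spectral bound from below, and the correct reason is de~Pagter's theorem (a compact irreducible positive operator on a Banach lattice has strictly positive spectral radius), which is part of the machinery behind the cited result.
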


  \begin{proof}
  	This is a well-established consequence of irreducibility, compactness and preservation of positivity; see Theorem 2.1 and Remark 2.2(e) in~\cite[Ch.\,{C-IV}]{AGG+}.
  \end{proof}

  \begin{lemma}\label{lem:indep-p}
  	For all~$f \in C_\mathrm{c}(\rr^N)$, the convergence expressed in~\eqref{eq:convergence-Lp} holds in the norm~$\|\,\cdot\,\|_q$ for all~$q \in (1,\infty]$. Moreover, the vector~$\tilde{\psi}_p$ has a representative which is strictly positive, twice continuously differentiable, vanishes at infinity and belongs to~$\mathrm{L}^q(\rr^N)$ for all~$q \in (1,\infty]$.
  	If~$(\alpha,p)$ and~$(\alpha,p')$ are both admissible, then~$s_p + c_p$ conicides with~$s_{p'} + c_{p'}$.
  \end{lemma}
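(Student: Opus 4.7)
The plan is to bring in the $C_0(\rr^N)$-realization of $A_p$ granted by Lemma~\ref{lem:MPRS-as-one} as a common object for all admissible $p$. This realization generates an analytic, compact and positivity-preserving semigroup that is also irreducible by the strong-maximum-principle argument invoked in Lemma~\ref{lem:irred-coinc}, so the abstract Perron--Frobenius argument used in the proof of Lemma~\ref{lem:first-time-evec} provides a simple isolated leading eigenvalue $\sigma_p$ with strictly positive eigenvector $\phi_p \in \cD^\infty \subset C^2(\rr^N) \cap C_0(\rr^N)$.

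First I would identify $\tilde\psi_p$ with $\phi_p$ up to a scalar. By Lemma~\ref{lem:irred-coinc} the two semigroups agree on $C_c^\infty(\rr^N)$, which is a common core; a Laplace-transform representation yields agreement of the resolvents on $C_c^\infty$ for every $\lambda$ to the right of both spectral bounds, and contour integration over a small circle isolating $s_p$ (resp.\ $\sigma_p$) produces Riesz projections that coincide on $C_c^\infty$. Simplicity and strict positivity force both target ranges to be spanned by a common function, giving $s_p = \sigma_p$ and $\tilde\psi_p \propto \phi_p$. The $C^2$-regularity, strict positivity and decay-to-zero at infinity all come along for the ride, together with $U_p \tilde\psi_p \in C_0$, which by Assumption~\textnormal{(L0)} yields $\tilde\psi_p(x) = O(|\nabla V(x)|^{-2})$ at infinity.

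For $\mathrm{L}^q$-integrability for every $q \in (1,\infty]$, I would iterate the identity $\tilde\psi_p = e^{-t s_p}\exp(tA_p)\tilde\psi_p$ and invoke the Gaussian-type kernel estimates underlying \cite{MPRS05}, which hold because the potential $\Omega_p$ goes to $+\infty$ and show that $\exp(tA_p)$ maps $C_0(\rr^N)$ into $\mathrm{L}^q(\rr^N)$ for every $q > 1$ and every $t > 0$. The same ingredients upgrade \eqref{eq:convergence-Lp} to convergence in every $\mathrm{L}^q$-norm on $C_c(\rr^N)$: the analogue of Lemma~\ref{lem:first-time-evec} for the $C_0$-realization (same proof) gives uniform convergence; the stated $\mathrm{L}^p$-convergence is given; H\"older interpolation fills in the range $q \in (p,\infty)$; and for $q \in (1,p)$ uniform-in-$t$ decay bounds on both sides of \eqref{eq:convergence-Lp} reduce to convergence of continuous functions on a fixed compact set.

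The main obstacle is showing $s_p + c_p = s_{p'} + c_{p'}$. Since $A_p + c_p = M_p^{-1}\Lambda^{\epsilon,\alpha}M_p$ with $M_p$ the pointwise multiplication by $\exp((p\epsilon)^{-1}V)$, a direct computation shows that
\[
\tilde\psi := \exp\Big(\tfrac{1}{\epsilon}\Big(\tfrac{1}{p}-\tfrac{1}{p'}\Big)V\Big)\tilde\psi_p
\]
is a strictly positive $C^2$ function satisfying $A_{p'}\tilde\psi = (s_p + c_p - c_{p'})\tilde\psi$ pointwise. If $\tilde\psi \in \mathrm{L}^{p'}(\rr^N)$, simplicity of the leading eigenvalue of $A_{p'}$ on $\mathrm{L}^{p'}$ immediately forces $s_p + c_p - c_{p'} = s_{p'}$. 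When $p \geq p'$ the weight is bounded and $\tilde\psi \in \mathrm{L}^{p'}$ follows from the previous step; the remaining case $p < p'$ requires exponential Agmon-type decay of $\tilde\psi_p$ on the scale $V^{1/2}/\epsilon^{1/2}$, which I would extract by a weighted $\mathrm{L}^2$-estimate in the spirit of Helffer--Sj\"ostrand, exploiting that $\Omega_p$ grows like $|\nabla V|^2/\epsilon$; an alternative is to chain $p$ to $p'$ by a finite sequence of admissible steps so small that at each step the polynomial decay $\tilde\psi_p \lesssim |\nabla V|^{-2}$ already dominates the weight.
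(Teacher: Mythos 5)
There is a genuine gap, and it is concentrated in the two places where you depart from the mechanism that makes the paper's proof work. Lemma~\ref{lem:MPRS-as-one} gives the Perron--Frobenius structure for $(A_p,\cD^q)$ on \emph{every} $\mathrm{L}^q(\rr^N)$, $q\in(1,\infty)$, and on $C_0(\rr^N)$; the paper simply runs the argument of Lemma~\ref{lem:first-time-evec} on each of these spaces and identifies all the limits by applying the (coinciding, by Lemma~\ref{lem:irred-coinc}) rescaled semigroups to one common nonnegative $f\in C_\mathrm{c}(\rr^N)\setminus\{0\}$. You instead use only the $C_0$-realization and try to reach $\mathrm{L}^q$-membership of $\tilde\psi_p$ and $\mathrm{L}^q$-convergence via ``Gaussian-type kernel estimates underlying \cite{MPRS05}'' showing that $\Exp{tA_p}$ maps $C_0(\rr^N)$ into every $\mathrm{L}^q(\rr^N)$. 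No such mapping property is stated in \cite{MPRS05} or in the paper, and it is not a consequence of anything you have in hand; the decay you actually extract from $U_p\tilde\psi_p\in C_0(\rr^N)$ and \textnormal{(L0)} is only $\tilde\psi_p(x)=O(|x|^{-2})$, which does not even give $\tilde\psi_p\in\mathrm{L}^q$ for $q\leq N/2$. This step needs to be replaced, most economically by invoking the $\mathrm{L}^q$-realizations directly.

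The same weakness becomes fatal in your treatment of $s_p+c_p=s_{p'}+c_{p'}$. Your main route requires $\Exp{\epsilon^{-1}(p^{-1}-p'^{-1})V}\tilde\psi_p\in\mathrm{L}^{p'}(\rr^N)$ (in fact membership in $\cD^{p'}$, so that simplicity applies), which for $p<p'$ demands exponential decay of $\tilde\psi_p$ at an Agmon rate comparable to $V/\epsilon$; you only sketch this, and it is a substantial piece of analysis not present in, or needed by, the paper. Your fallback\,---\,chaining $p$ to $p'$ through small admissible increments so that the polynomial bound $\tilde\psi_p\lesssim|\nabla V|^{-2}$ ``dominates the weight''\,---\,cannot work: for any $\delta>0$ the weight $\Exp{\delta V/\epsilon}$ grows faster than any inverse polynomial of $|x|$ for a coercive $V$, so no number of small steps helps. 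The paper avoids all decay issues by working in the uniform norm: since $A_p+c_p$ and $A_{p'}+c_{p'}$ are conjugate by a smooth positive multiplication operator that preserves $C_\mathrm{c}(\rr^N)$, one applies both rescaled semigroups to a common $f\in C_\mathrm{c}(\rr^N)$, uses the $C_0$-convergence \eqref{eq:convergence-Lq} with $q=\infty$, and compares the exponential growth rates pointwise at a single point where the limit is strictly positive. I recommend adopting that comparison; your resolvent/Riesz-projection identification of $\tilde\psi_p$ with $\phi_p$ is workable but can likewise be replaced by the same one-test-function comparison.
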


  \begin{proof}
  	By the same argument giving Lemma~\ref{lem:first-time-evec}, there exist a real number~$\bar{s}_{p,q}$, a strictly positive vector~$\bar{\psi}_{p,q} \in \cD^q$ and a strictly positive functional~$\bar{u}_{p,q}$ on~$\mathrm{L}^q(\rr^N)$ or~$C_0(\rr^N)$ such that
  	\begin{equation}
  		\label{eq:convergence-Lq}
  		\lim_{t \to \infty} \big \| \Exp{-t \bar{s}_{p,q}}\Exp{t A_{p}} f - \bar{\psi}_{p,q} (\bar{u}_{p,q}, f) \big \|_{q} = 0
  	\end{equation}
  	for all~$f$ in~$\mathrm{L}^q(\rr^N)$ or~$C_0(\rr^N)$; see Corollary~2.2 in~\cite[Ch.\,{B-IV}]{AGG+} for~$q=\infty$. Taking a common nonnegative~$f \in C_\mathrm{c}(\rr^N)\setminus \{ 0 \}$ in both~\eqref{eq:convergence-Lp} and~\eqref{eq:convergence-Lq}
  	and using Lemma~\ref{lem:irred-coinc} gives~$\bar{s}_{p,q} = {s}_p$ and $\bar{\psi}_{p,q} \propto \tilde{\psi}_p$.
  	Because~$A_p + c_p$ and~$A_{p'} + c_{p'}$ are related by a conjugation which preserves~$C_\mathrm{c}(\rr^N)$, a similar argument also yields that e.g.~$\bar{s}_{p,\infty} + c_{p}$ coincides with~$\bar{s}_{p',\infty}+ c_{p'}$.

  	Note that~$\tilde{\psi}_2$ is in a H\"older space~$C^{1,\beta}(\rr^N)$ with $\beta \in (0,1)$ by a Sobolev embedding. The approximation method for inferring that~$\tilde{\psi}_2$ belongs to~$C^{2,\beta}(\rr^N)$ via classical interior Schauder estimates and the maximum principle is carried out in~\cite[\S{1.8}]{Lan}.
  \end{proof}

  It is proved as part of Theorem~7.4 in~\cite{MPRS05} that the isometry $f \mapsto \Exp{(p\epsilon)^{-1}V}f$ between the Banach spaces~$\mathrm{L}^p(\rr^N)$ and~$\mathrm{L}^p(\rr^N,\d\mu_0^\epsilon)$ used to introduce~$A_p$ maps the domain~$\cD^p$ to $\mathrm{W}^{2,p}(\rr^N,\d\mu_0^\epsilon)$.
  Hence, it follows immediately from Lemmas~\ref{lem:MPRS-as-one} and~\ref{lem:irred-coinc} that~$(\Lambda^{\alpha,\epsilon},\mathrm{W}^{2,p}(\rr^N,\d\mu_0^\epsilon))$ is the generator of an analytic semigroup which is compact and irreducible, provided that~$(\alpha,p)$ is admissible.
  Also, by Lemmas~\ref{lem:first-time-evec} and~\ref{lem:indep-p},
  \[
    e^\epsilon(\alpha) := \sup\{ \Re z : z \in \sp (\Lambda^{\alpha,\epsilon},\mathrm{W}^{2,p}(\rr^N,\d\mu_0^\epsilon))\}
  \]
  is indeed independent of~$p$ and admits an eigenvector with the properties stated in the proposition below.

  \begin{proposition}\label{prop:evec}
    Let the pair~$(\alpha,p)$ be admissible. Then,~$e^\epsilon(\alpha)$ is a simple isolated eigenvalue of~$(\Lambda^{\alpha,\epsilon},\mathrm{W}^{2,p}(\rr^N,\d\mu_0^\epsilon))$
    and there exists a strictly positive associated eigenfunction $\psi^{\alpha,\epsilon}  \in C^2(\rr^N) \cap \mathrm{W}^{2,p}(\rr^N,\d\mu_0^\epsilon)$ and a strictly positive linear functional~$u^{\alpha,\epsilon}$ on~$\mathrm{L}^{p}(\rr^N,\d\mu_0^\epsilon)$ such that
  	\begin{equation*}
  		\lim_{t \to \infty} \big \| \Exp{-t e^\epsilon(\alpha)}\Exp{t \Lambda^{\alpha,\epsilon}} f - \psi^{\alpha,\epsilon} (u^{\alpha,\epsilon} ,f )_{\mu_0^\epsilon} \big \|_{\mathrm{L}^p(\rr^N,\d\mu_0^\epsilon)} = 0
  	\end{equation*}
  	for all~$f \in \mathrm{L}^{p}(\rr^N,\d\mu_0^\epsilon)$.
  \end{proposition}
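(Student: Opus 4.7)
The plan is to transfer all the conclusions of Lemmas~\ref{lem:MPRS-as-one}--\ref{lem:indep-p} from the space~$\mathrm{L}^p(\rr^N)$ to the weighted space~$\mathrm{L}^p(\rr^N,\d\mu_0^\epsilon)$ via the explicit multiplication isometry
\[
U : \mathrm{L}^p(\rr^N) \to \mathrm{L}^p(\rr^N,\d\mu_0^\epsilon), \qquad (Uf)(x) := Z_\epsilon^{1/p}\Exp{(p\epsilon)^{-1}V(x)} f(x),
\]
where $Z_\epsilon := \int_{\rr^N}\Exp{-\epsilon^{-1}V(y)}\,\d y$ is the normalisation constant appearing in~\eqref{eq:eq-meas}. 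The conjugation identity recalled at the start of this appendix, together with the definition of~$A_p$, yields
\[
\Exp{-(p\epsilon)^{-1}V}\,\Lambda^{\alpha,\epsilon}\bigl(\Exp{(p\epsilon)^{-1}V}f\bigr) = (A_p+c_p)f
\]
for $f \in C^2_{\mathrm{c}}(\rr^N)$, so $U$ intertwines $A_p + c_p$ on~$\cD^p$ with $\Lambda^{\alpha,\epsilon}$ on~$U(\cD^p)=\mathrm{W}^{2,p}(\rr^N,\d\mu_0^\epsilon)$; the latter identification of domains is the one already invoked just before the statement of the proposition, based on Theorem~7.4 of~\cite{MPRS05}.

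From this unitary equivalence I would first read off that $(\Lambda^{\alpha,\epsilon},\mathrm{W}^{2,p}(\rr^N,\d\mu_0^\epsilon))$ is closed and generates an analytic, compact, irreducible semigroup (Lemmas~\ref{lem:MPRS-as-one} and~\ref{lem:irred-coinc}), and that $e^\epsilon(\alpha) = s_p+c_p$ is a simple, isolated eigenvalue (Lemma~\ref{lem:first-time-evec}); its independence from the admissible exponent is exactly Lemma~\ref{lem:indep-p}. For the eigenfunction, I would set $\psi^{\alpha,\epsilon} := U\tilde\psi_p$: strict positivity is preserved because $U$ is multiplication by a strictly positive function; membership in~$\mathrm{W}^{2,p}(\rr^N,\d\mu_0^\epsilon)$ is automatic from $U(\cD^p)=\mathrm{W}^{2,p}(\rr^N,\d\mu_0^\epsilon)$; and a $C^2$ representative is inherited from the $C^2$ representative of~$\tilde\psi_p$ supplied by Lemma~\ref{lem:indep-p}, together with $V \in C^3$. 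For the linear functional, I would define $u^{\alpha,\epsilon}\in (\mathrm{L}^p(\rr^N,\d\mu_0^\epsilon))^*$ by the prescription $(u^{\alpha,\epsilon},f)_{\mu_0^\epsilon} := (\tilde u_p, U^{-1}f)$, which is strictly positive because $\tilde u_p$ is strictly positive and $U^{-1}$ preserves positivity.

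The convergence claim is then obtained by applying the isometry~$U$ to the convergence~\eqref{eq:convergence-Lp} of Lemma~\ref{lem:first-time-evec}: writing
\[
\Exp{-t e^\epsilon(\alpha)}\Exp{t\Lambda^{\alpha,\epsilon}}f - \psi^{\alpha,\epsilon}\bigl(u^{\alpha,\epsilon},f\bigr)_{\mu_0^\epsilon}
= U\bigl(\Exp{-ts_p}\Exp{tA_p}U^{-1}f - \tilde\psi_p (\tilde u_p, U^{-1}f)\bigr),
\]
and taking norms gives the statement because $U$ is isometric and the scalar factor $c_p$ cancels between the two exponential prefactors. The only genuinely nontrivial ingredient in the entire argument is the identification $U(\cD^p)=\mathrm{W}^{2,p}(\rr^N,\d\mu_0^\epsilon)$, which is however not an obstacle for us since it is imported verbatim from Theorem~7.4 of~\cite{MPRS05} in the paragraph preceding the proposition; everything else amounts to routine bookkeeping of positivity, regularity, and scalar prefactors.
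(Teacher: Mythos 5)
Your proposal takes essentially the same route as the paper: the paper's justification for this proposition is exactly the preceding paragraph, which invokes the isometry between $\mathrm{L}^p(\rr^N)$ and $\mathrm{L}^p(\rr^N,\d\mu_0^\epsilon)$, the domain identification $\cD^p \leftrightarrow \mathrm{W}^{2,p}(\rr^N,\d\mu_0^\epsilon)$ from Theorem~7.4 of~\cite{MPRS05}, and Lemmas~\ref{lem:MPRS-as-one}--\ref{lem:indep-p} to transfer the spectral data, with the scalar shift by $c_p$ cancelling in $\Exp{-te^\epsilon(\alpha)}\Exp{t\Lambda^{\alpha,\epsilon}}$. You merely spell out the bookkeeping (the explicit form of~$U$ with the normalisation~$Z_\epsilon^{1/p}$, the pull-back definition of~$u^{\alpha,\epsilon}$, and the regularity of~$\psi^{\alpha,\epsilon} = U\tilde\psi_p$ via $V\in C^3$) that the paper leaves implicit, and this is all correct.
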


	\section{Standard probabilistic consequences of our assumptions}
	\label{app:cons}

	Under our assumptions, existence and uniqueness of the solutions to the \textsc{sde}~\eqref{eq:SDE} is standard; we refer the reader to~\cite[\S{3.3}]{Has}. Let us only mention that the following consequence of~\textnormal{(L1)} plays a key role in the proof. It is also used throughout the paper.

	\begin{lemma}\label{lem:dissipation-with-sup}
		Suppose that Assumption~\textnormal{(L1)} holds. Then,
		\[
			\pp\bigg\{ \sup_{s \in [0,t]} |X^{x,\epsilon}_s| \geq R \bigg\}
				\leq  \frac{\braket{x,H_b x} + 2K_b + 2\epsilon \tr H_b }{R^2 \inf \sp H_b } \Exp{ t}
		\]
		for all~$t \geq 0$, $x \in \rr^N$ and~$R > 0$.
	\end{lemma}

	\begin{proof}
		Using~\textnormal{(L1)}, follow the first steps of the proof of Theorem~3.5 in~\cite[Ch.\,3]{Has} with the nonnegative function~$x \mapsto \braket{x,H_b x} + 2K_b + 2\epsilon \tr H_b$ and $c=1$.
	\end{proof}

	Since the diffusion matrix is nondegenerate, existence and uniqueness of the stationary measure~$\lambda^\epsilon_\textnormal{inv}$  for~\eqref{eq:SDE} can be derived if one controls the expected hitting time of a large enough ball, uniformly on compact sets of initial conditions~\cite[\S{4.4}]{Has}.
	The following estimate is a key step in controlling these hitting times and is also used in the main body of the article. It is again a consequence of~\textnormal{(L1)}.

	\begin{lemma}\label{lem:dissipation-without-sup}
		Let~$H_b$ be as in Assumption~\textnormal{(L1)}. Then, for all~$\epsilon_0 > 0$, there exist positive constants~$c$ and~$C$
		such that
		\begin{equation}
			\label{eq:dissipation-without-sup}
			0 < \inf\sp H_b \ \ee  | X^{x,\epsilon}_t|^2 \leq \ee  \braket{X^{x,\epsilon}_t,H_b X^{x,\epsilon}_t}
			\leq \Exp{-ct}\braket{x,H_b x} + C
		\end{equation}
		for all $\epsilon \in (0,\epsilon_0)$ and $t \geq 0$ and almost all~$x \in \rr^N$.
	\end{lemma}

	\begin{proof}
		The first two inequalities in~\eqref{eq:dissipation-without-sup} are immediate from the fact that~$H_b$ is positive definite. Let~$f : x \mapsto \braket{x,H_b x}$.
		By Kolmogorov's backwards equation\,---\,see e.g.\ Lemma~3.3 in~\cite[Ch.\,3]{Has}\,---, $\partial_t \ee f_R(X^{x,\epsilon}_t) = \Lambda^{\epsilon,0} \ee f_R(X^{x,\epsilon}_t)$ for any approximation~$f_R \in C^2_\mathrm{c}(\rr^N)$ of~$f$.
		We showed in Appendix~\ref{app:semig} that $(\Lambda^{\epsilon,0},W^{2,2}(\rr^N,\d\mu_0^\epsilon))$ generates a strongly continuous semigroup of bounded linear operators on~$\mathrm{L}^2(\rr^N,\d\mu_0^\epsilon)$. Hence, using basic semigroup properties\,---\,see e.g.\
		Proposition~1.6.ii and Theorem~1.7 in~\cite[Ch.\,{A-I}]{AGG+}\,---,
		we find
		\begin{align*}
			\partial_t \ee f(X^{x,\epsilon}_t)
			&= \ee \big[\epsilon \upDelta f(X^{x,\epsilon}_t) + \braket{-\nabla V(X^{x,\epsilon}_t) +b(X^{x,\epsilon}_t), \nabla f(X^{x,\epsilon}_t)}\big]
		\end{align*}
		for almost all~$x$ by an approximation argument.
		Then, by~\textnormal{(L1)},
		\begin{align*}
			\partial_t \ee \braket{X^x_t,H_b X^x_t}
			&\leq 2\epsilon\tr H_b  - 2 \braket{X^x_t, X^x_t} + 2K_b
			\leq 2\epsilon\tr H_b + 2K_b - 2\| H_b\|^{-1} \ee \braket{X^x_t,H_b X^x_t}
		\end{align*}
		for almost all~$x$ and the last inequality in~\eqref{eq:dissipation-without-sup} follows from Gr\"onwall's lemma.
	\end{proof}

  \begin{lemma}\label{lem:pre-ub-ic-RN}
  	The measure~$\lambda_\textnormal{inv}^\epsilon$ is of the form
    \[
      \lambda_\textnormal{inv}^\epsilon (\d x) = \Exp{-(2\epsilon)^{-1} V(x)} \varphi^{\epsilon}(x) \d x
    \]
    for some strictly positive function~$\varphi^{\epsilon} \in C^2_0(\rr^N) \cap \mathrm{L}^{2}(\rr^N)$.
  \end{lemma}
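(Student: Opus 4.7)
The plan is to identify $\varphi^\epsilon$ with the Perron--Frobenius eigenfunction~$\tilde\psi_2$ of the conjugated operator~$A_2$ at $(\alpha,p) = (1,2)$, transported through the Gaussian isometry to~$\mathrm{L}^2(\rr^N)$, and then to read off the claimed regularity and decay directly from Lemma~\ref{lem:indep-p}. Since~$k_b \in [0,\tfrac 12)$ by Assumption~\textnormal{(RB)}, the value $p = 2$ lies in~$(\tfrac{1}{1-k_b},\tfrac{1}{k_b})$, so Lemma~\ref{lem:link-I-Abb} ensures the admissibility of~$(1,2)$, and Proposition~\ref{prop:evec} applied at~$\alpha = 1$, $p = 2$ yields a strictly positive eigenfunction $\psi^{1,\epsilon} \in C^2(\rr^N) \cap \mathrm{W}^{2,2}(\rr^N,\d\mu_0^\epsilon)$ for the simple leading eigenvalue~$e^\epsilon(1)$. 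A short integration by parts on~$C^\infty_\textnormal{c}(\rr^N)$ shows that the formal $\mathrm{L}^2(\rr^N,\d\mu_0^\epsilon)$-adjoint of~$\Lambda^{0,\epsilon}$ is~$\Lambda^{1,\epsilon}$: the gradient-flow part~$\epsilon \upDelta - \braket{\nabla V,\nabla\,\cdot\,}$ is symmetric with respect to~$\mu_0^\epsilon$, while the adjoint of~$\braket{b,\nabla\,\cdot\,}$ is $-\braket{b,\nabla\,\cdot\,} + \epsilon^{-1}\braket{b,\nabla V} - \operatorname{div} b$, matching the $\alpha = 1$ specialisation of the zeroth- and first-order terms in~\eqref{eq:def-Lambda}.

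Next I would pin down $e^\epsilon(1) = 0$ and identify the normalised $\psi^{1,\epsilon}$ with the Radon--Nikodym density of~$\lambda_\textnormal{inv}^\epsilon$ with respect to~$\mu_0^\epsilon$. Because~$\mu_0^\epsilon$ is a probability measure and $\psi^{1,\epsilon} \in \mathrm{L}^2(\rr^N,\d\mu_0^\epsilon)$, the constant $Z' := \int_{\rr^N} \psi^{1,\epsilon}\,\d\mu_0^\epsilon$ is finite and strictly positive. For every~$f$ in the domain of~$\Lambda^{0,\epsilon}$, the adjoint identity and differentiation under the integral yield
\[
	\partial_t \int_{\rr^N} (\Exp{t\Lambda^{0,\epsilon}}f)\,\psi^{1,\epsilon}\,\d\mu_0^\epsilon
	= e^\epsilon(1) \int_{\rr^N} (\Exp{t\Lambda^{0,\epsilon}}f)\,\psi^{1,\epsilon}\,\d\mu_0^\epsilon.
\]
Choosing $f \equiv 1$\,---\,which lies in the domain since $\Lambda^{0,\epsilon} \cdot 1 = 0$, and hence $\Exp{t\Lambda^{0,\epsilon}} \cdot 1 = 1$\,---\,forces $Z' \Exp{t e^\epsilon(1)} = Z'$ for all~$t \geq 0$, so~$e^\epsilon(1) = 0$. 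The same identity then expresses invariance of the probability measure $\nu := (Z')^{-1}\psi^{1,\epsilon}\,\d\mu_0^\epsilon$ under the transition semigroup, and uniqueness of the stationary measure gives~$\nu = \lambda_\textnormal{inv}^\epsilon$.

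Writing $Z_V := \int_{\rr^N} \Exp{-\epsilon^{-1}V}\,\d x$ so that $\d\mu_0^\epsilon = Z_V^{-1}\Exp{-\epsilon^{-1}V}\,\d x$, the previous step rewrites as
\[
	\d\lambda_\textnormal{inv}^\epsilon = (Z' Z_V)^{-1}\psi^{1,\epsilon}\,\Exp{-\epsilon^{-1}V}\,\d x = \Exp{-(2\epsilon)^{-1}V}\,\varphi^\epsilon\,\d x
\]
with~$\varphi^\epsilon = (Z' Z_V)^{-1}\,\psi^{1,\epsilon}\,\Exp{-(2\epsilon)^{-1}V}$. The Gaussian isometry $J : \mathrm{L}^2(\rr^N,\d\mu_0^\epsilon) \to \mathrm{L}^2(\rr^N)$, $Jf := Z_V^{-1/2}\Exp{-(2\epsilon)^{-1}V}f$, conjugates~$\Lambda^{1,\epsilon}$ into $A_2 + c_2$ at~$\alpha = 1$, $p = 2$ (as in the conjugation formula at the top of Appendix~\ref{app:semig}) and sends~$\psi^{1,\epsilon}$ to a positive scalar multiple of~$\varphi^\epsilon$. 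Hence~$\varphi^\epsilon$ is proportional to the Perron--Frobenius eigenfunction~$\tilde\psi_2$ of~$A_2$, to which Lemma~\ref{lem:indep-p} applies and provides a representative which is strictly positive, of class~$C^2$, vanishes at infinity, and belongs to~$\mathrm{L}^q(\rr^N)$ for every~$q \in (1,\infty]$; in particular $\varphi^\epsilon \in C^2_0(\rr^N) \cap \mathrm{L}^2(\rr^N)$ and is strictly positive. The main conceptual subtlety is that the Gallavotti--Cohen symmetry~$e^\epsilon(1) = e^\epsilon(0)$ of the main text cannot be invoked here without circularity; this is why the value $e^\epsilon(1) = 0$ is instead extracted from the adjoint pairing with the constant function~$1$.
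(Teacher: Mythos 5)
Your proof is correct and works within the same Perron--Frobenius framework as the paper's own: both single out the operator at $\alpha = 1$, $p = 2$, identify the leading eigenvector, and read off the stated regularity from Lemma~\ref{lem:indep-p}. The implementation differs in two places. First, the paper's proof computes $A_2^*$ explicitly in $\mathrm{L}^2(\rr^N,\d\vol)$, observes that $\Exp{-(2\epsilon)^{-1}V}$ is an explicit strictly positive eigenvector of $A_2^*$ with eigenvalue $-c_2$, and concludes $\spb A_2 = -c_2$ from uniqueness of the positive eigenvector; you instead establish $(\Lambda^{0,\epsilon})^* = \Lambda^{1,\epsilon}$ on $\mathrm{L}^2(\rr^N,\d\mu_0^\epsilon)$ and extract $e^\epsilon(1) = 0$ by pairing the constant function with $\psi^{1,\epsilon}$, which is the conjugated version of the same observation. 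Second, to identify the eigenvector with the stationary density, the paper plugs $\rho^\epsilon := \Exp{-(2\epsilon)^{-1}V}\phi^\epsilon$ into the stationary Fokker--Planck equation and cites uniqueness of bounded solutions (Lemma~4.16 of Khasminskii), whereas you show that $(Z')^{-1}\psi^{1,\epsilon}\,\d\mu_0^\epsilon$ is invariant under the transition semigroup and appeal to uniqueness of the stationary measure\,---\,an equivalent but more probabilistic identification that avoids the PDE-uniqueness citation. One small remark: the circularity you flag around the Gallavotti--Cohen symmetry does not actually arise, since the proof of Corollary~\ref{cor:symm} does not rely on the present lemma; still, the direct pairing argument you substitute is clean and self-contained, and the adjoint relation alone already yields $e^\epsilon(1) = e^\epsilon(0) = 0$ without any appeal to that corollary.
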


  \begin{proof}
  	Consider the operator~$(A_2,\cD^2)$ introduced in Appendix~\ref{app:semig} the case $\alpha = 1$, that is
  	\[
  		A_2 = \epsilon\upDelta - \braket{ b, \nabla} - \tfrac 1{4\epsilon}|\nabla V|^2 + \tfrac{1}{2\epsilon}\braket{b,\nabla V} + \tfrac 12 \upDelta V - \operatorname{div} b - c_2.
  	\]
  	One can show that its adjoint has domain~$\cD^2$ and is given by the formula
  	\[
  		A_2^* = \epsilon\upDelta + \braket{ b, \nabla} - \tfrac 1{4\epsilon}|\nabla V|^2 + \tfrac{1}{2\epsilon}\braket{b,\nabla V} + \tfrac 12 \upDelta V - c_2.
  	\]
  	Note that~$A_2^*$ just as well satisfies (H1)--(H5) in~\cite{MPRS05} and thus generates a semigroup with the same properties.
  	Note that~$\Exp{-(2\epsilon)^{-1}V}$ is a strictly positive eigenvector of~$(A_2^*,\cD^2)$ with eigenvalue~$-c_2$.
  	But it is easy to show by contradiction that~$\spb(A_2^*,\cD^2)$ is the \textit{only} eigenvalue of~$A_2$ admitting a strictly positive eigenvector.
  	Hence, we have $\spb(A_2,\cD^2) = \spb(A_2^*,\cD^2) = - c_2$.

  	Therefore, there exists a strictly positive function~$\phi^\epsilon \in C^2_0(\rr^N) \cap \mathrm{L}^{2}(\rr^N)$ such that $A_2 \phi^\epsilon = -c_2\phi^\epsilon$.
  	Then, $\rho^\epsilon := \Exp{-(2\epsilon)^{-1}V}\phi^\epsilon$ satisfies the stationary Fokker--Planck equation
  	\[
  		(\epsilon \upDelta + \braket{\nabla V - b, \nabla} + \upDelta V - \operatorname{div} b)\rho^\epsilon = 0,
  	\]
  	to which the density of the invariant measure~$\lambda_{\textnormal{inv}}^\epsilon$ is\,---\,up to normalisation\,---\,the unique bounded solution; see e.g.\ Lemma~4.16 in~\cite[Ch.\,{4}]{Has}.
  \end{proof}

  \begin{lemma}\label{lem:ub-ic-RN}
  	For all~$\beta \in (0,2)$, $(\d \lambda_{\textnormal{inv}}^\epsilon/\d\mu_0^\epsilon)^\beta \in \mathrm{L}^1(\rr^N,\d\mu_0^\epsilon).$
  \end{lemma}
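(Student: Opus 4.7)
The plan is to leverage the representation of $\lambda_\textnormal{inv}^\epsilon$ given by Lemma~\ref{lem:pre-ub-ic-RN} and to handle the resulting integral via H\"older's inequality. Writing $Z_\epsilon := \int_{\rr^N} \Exp{-\epsilon^{-1}V(y)}\d y$, the explicit formula~\eqref{eq:eq-meas} for~$\mu_0^\epsilon$ together with Lemma~\ref{lem:pre-ub-ic-RN} yield
\[
	\frac{\d\lambda_\textnormal{inv}^\epsilon}{\d\mu_0^\epsilon}(x) = Z_\epsilon\, \Exp{(2\epsilon)^{-1} V(x)} \varphi^\epsilon(x)
\]
for almost every~$x \in \rr^N$, with $\varphi^\epsilon \in C^2_0(\rr^N) \cap \mathrm{L}^2(\rr^N)$ strictly positive. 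Raising to the power~$\beta$ and integrating against~$\mu_0^\epsilon$ gives
\[
	\int_{\rr^N} \Big(\frac{\d\lambda_\textnormal{inv}^\epsilon}{\d\mu_0^\epsilon}\Big)^\beta \d\mu_0^\epsilon
	= Z_\epsilon^{\beta-1} \int_{\rr^N} (\varphi^\epsilon)^\beta \, \Exp{-(1-\beta/2)\epsilon^{-1} V(x)} \d x,
\]
where the exponential factor has strictly positive argument in front of~$V$ precisely because~$\beta < 2$.

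The key step is then to apply H\"older's inequality with conjugate exponents~$2/\beta$ and~$2/(2-\beta)$, both of which lie in~$(1,\infty)$ thanks to the hypothesis $\beta \in (0,2)$. This yields
\[
	\int_{\rr^N} (\varphi^\epsilon)^\beta \, \Exp{-(1-\beta/2)\epsilon^{-1} V(x)} \d x
	\leq \Big(\int_{\rr^N} (\varphi^\epsilon)^2 \d x\Big)^{\beta/2}\Big(\int_{\rr^N} \Exp{-\epsilon^{-1} V(x)}\d x\Big)^{(2-\beta)/2},
\]
since the second factor multiplied through exactly reconstitutes the exponent $-\epsilon^{-1}V$. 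Both factors on the right-hand side are finite: the first because~$\varphi^\epsilon \in \mathrm{L}^2(\rr^N)$ by Lemma~\ref{lem:pre-ub-ic-RN}, and the second because~$Z_\epsilon < \infty$ by the integrability of~$\Exp{-\epsilon^{-1}V}$ guaranteed by the coercivity of~$V$ from Assumption~\textnormal{(L0)}.

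There is no real obstacle here; the only point requiring attention is to keep track of the exponents so that the H\"older partition recovers precisely the exponent~$-\epsilon^{-1}V$ for which integrability is already known (ensuring that neither factor demands integrability of~$\Exp{-cV}$ for some~$c<1$, which would require more than what~\textnormal{(L0)} gives directly). The final bound, $Z_\epsilon^{\beta/2}\|\varphi^\epsilon\|_{\mathrm{L}^2}^\beta$, depends only on~$\beta$ and~$\epsilon$, confirming the claim.
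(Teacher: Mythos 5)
Your proof is correct and takes essentially the same approach as the paper's: express $\d\lambda_{\textnormal{inv}}^\epsilon/\d\mu_0^\epsilon$ via Lemma~\ref{lem:pre-ub-ic-RN}, then apply H\"older with conjugate exponents $2/\beta$ and $2/(2-\beta)$ (the paper's $r=2/\beta$), so that the first factor is controlled by $\|\varphi^\epsilon\|_{\mathrm{L}^2}$ and the second factor's exponent works out to exactly $-\epsilon^{-1}V$. The only difference is cosmetic: you track the normalisation constant $Z_\epsilon$ explicitly, whereas the paper absorbs it.
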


  \begin{proof}
  	Set $r := 2\beta ^{-1}$ and let~$\varphi^\epsilon$ be as in Lemma~\ref{lem:pre-ub-ic-RN}. Then, by H\"older's inequality,
  	\begin{align*}
  		\int_{\rr^N} \Big|\frac{\d \lambda_{\textnormal{inv}}^\epsilon}{\d\mu_0^\epsilon}\Big|^\beta  \d\mu_0^\epsilon
  		& = \int_{\rr^N}  \big|\varphi^\epsilon\big|^\beta   \Exp{\beta (2\epsilon)^{-1} V} \Exp{-\epsilon^{-1}V} \d\vol \\
  		& \leq \bigg(\int_{\rr^N} \big|\varphi^{\epsilon}\big|^{\beta r}\d\vol \bigg)^{\frac 1r}
  		 \bigg(\int_{\rr^N} \Exp{-\epsilon^{-1}(1-r^{-1})^{-1}(1 - \frac 12 \beta )V}\d\vol \bigg)^{1 - \frac 1r}.
  	\end{align*}
  	Since $\beta r = 2$, the first integral is a power of the $\mathrm{L}^2(\rr^N)$-norm of~$\varphi^\epsilon$, which is finite by Lemma~\ref{lem:pre-ub-ic-RN}. The second integral is finite because $(1-r^{-1})(1- \tfrac 12 \beta )$ is strictly positive and~$V$ satisfies~\textnormal{(L0)}.
  \end{proof}

	Recall that $\pi_s : \cCt \to \rr^N$ is evaluation map $\gamma \mapsto \gamma(s)$ and that {time reversal} is the unique involution $\Theta_t : \cCt \to \cCt$ determined by the relation
	$
		\pi_s \circ \Theta_t = \pi_{t-s}
		.
	$
	We have used the identity~\eqref{eq:change-ic} in the proof of Proposition~\ref{prop:RN}, i.e.\ to give a more explicit epression for the canonical entropy production functional. We state and prove it as a lemma.

	\begin{lemma}
	\label{lem:change-ic}
		Under Assumption~\textnormal{(L0)}, if~$\lambda$ and the Lebesgue measure are mutually absolutely continuous, then~$\mathcal{Q}_t^{\lambda,\epsilon}$ and~$\mathcal{Q}_t^{\lambda,\epsilon} \circ \Theta_t$ are mutually absolutely continuous and
		\[
			\log \frac{\d\mathcal{Q}_t^{\lambda,\epsilon}}{\d(\mathcal{Q}_t^{\lambda,\epsilon} \circ \Theta_t)}(\gamma) = \log \frac{\d\lambda}{\d\mu_0^\epsilon}(\gamma(0))
			-
			\log \frac{\d\lambda}{\d\mu_0^\epsilon}(\gamma(t))
		\]
		for $\mathcal{Q}_t^{\lambda,\epsilon}$-almost all~$\gamma \in \cCt$.
	\end{lemma}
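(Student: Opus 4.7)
The plan is to reduce the claim to the reversibility of the measure $\mathcal{Q}_t^{\mu_0^\epsilon,\epsilon}$ (the ``equilibrium'' path measure with $b\equiv0$ started from $\mu_0^\epsilon$) and then use the disintegration formula~\eqref{eq:disint} to transfer to general~$\lambda$.

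First, I would argue that because $\mu_0^\epsilon$ has density proportional to $\Exp{-\epsilon^{-1}V}$, it satisfies detailed balance for the generator $\epsilon\upDelta + \langle-\nabla V,\nabla\rangle$: by integration by parts, this generator is symmetric on, say, $C_{\mathrm c}^2(\rr^N)$ in $\mathrm{L}^2(\rr^N,\d\mu_0^\epsilon)$. Combined with the generation and invariance results of Appendix~\ref{app:semig} (applied in the case $b\equiv0$), this gives that the $Y$-semigroup is self-adjoint on $\mathrm{L}^2(\mu_0^\epsilon)$, which in turn implies that the Markov process started from $\mu_0^\epsilon$ is reversible in the usual Kolmogorov sense. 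In path-space language, $\mathcal{Q}_t^{\mu_0^\epsilon,\epsilon}\circ\Theta_t^{-1}=\mathcal{Q}_t^{\mu_0^\epsilon,\epsilon}$.

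Next, I would use~\eqref{eq:disint} to describe how the mixing measure affects the path law. Since $\lambda$ and Lebesgue are mutually absolutely continuous (and so $\lambda$ and $\mu_0^\epsilon$ are too, as $\mu_0^\epsilon$ has a positive continuous Lebesgue density), the map
\[
  \gamma \longmapsto \frac{\d\lambda}{\d\mu_0^\epsilon}(\gamma(0))
\]
is a density of $\mathcal{Q}_t^{\lambda,\epsilon}$ with respect to $\mathcal{Q}_t^{\mu_0^\epsilon,\epsilon}$: indeed, for any nonnegative measurable $H$ on $\cCt$,
\[
  \int H\,\d\mathcal{Q}_t^{\lambda,\epsilon}
  = \int_{\rr^N} \Big(\int H(\gamma)\,\mathcal{Q}_t^{x,\epsilon}(\d\gamma)\Big)\tfrac{\d\lambda}{\d\mu_0^\epsilon}(x)\,\mu_0^\epsilon(\d x)
  = \int H(\gamma)\,\tfrac{\d\lambda}{\d\mu_0^\epsilon}(\gamma(0))\,\mathcal{Q}_t^{\mu_0^\epsilon,\epsilon}(\d\gamma).
\]
Pushing this identity forward by $\Theta_t$ and using $\pi_0\circ\Theta_t=\pi_t$ gives
\[
  \frac{\d(\mathcal{Q}_t^{\lambda,\epsilon}\circ\Theta_t)}{\d(\mathcal{Q}_t^{\mu_0^\epsilon,\epsilon}\circ\Theta_t)}(\gamma) = \frac{\d\lambda}{\d\mu_0^\epsilon}(\gamma(t)).
\]

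Finally, combining both displays with $\mathcal{Q}_t^{\mu_0^\epsilon,\epsilon}\circ\Theta_t=\mathcal{Q}_t^{\mu_0^\epsilon,\epsilon}$ from the first step, and noting that $\tfrac{\d\lambda}{\d\mu_0^\epsilon}$ is strictly positive almost everywhere so the intermediate densities never vanish, one obtains the mutual absolute continuity and the chain-rule identity
\[
  \frac{\d\mathcal{Q}_t^{\lambda,\epsilon}}{\d(\mathcal{Q}_t^{\lambda,\epsilon}\circ\Theta_t)}(\gamma)
  = \frac{\d\lambda/\d\mu_0^\epsilon(\gamma(0))}{\d\lambda/\d\mu_0^\epsilon(\gamma(t))},
\]
from which taking logarithms yields the stated formula. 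The main obstacle is the first step: rigorously lifting the symmetry of the generator on a core to genuine path-space reversibility of $\mathcal{Q}_t^{\mu_0^\epsilon,\epsilon}$ requires invoking the precise form of the semigroup and invariant measure machinery developed in Appendix~\ref{app:semig} (including existence, uniqueness and positive density of $\mu_0^\epsilon$, which rely on Assumption~\textnormal{(L0)}); once that is granted, everything else is a formal manipulation of Radon--Nikodym derivatives and~\eqref{eq:disint}.
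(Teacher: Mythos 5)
Your proposal is correct and follows essentially the same route as the paper: disintegrate $\mathcal{Q}_t^{\lambda,\epsilon}$ against $\mathcal{Q}_t^{\mu_0^\epsilon,\epsilon}$ via~\eqref{eq:disint} to identify the density $\gamma\mapsto\tfrac{\d\lambda}{\d\mu_0^\epsilon}(\gamma(0))$, invoke reversibility of the equilibrium measure $\mathcal{Q}_t^{\mu_0^\epsilon,\epsilon}$ under $\Theta_t$, and conclude by the chain rule for Radon--Nikodym derivatives together with $\pi_0\circ\Theta_t=\pi_t$. The only cosmetic difference is that where you sketch a detailed-balance argument for the reversibility of $\mathcal{Q}_t^{\mu_0^\epsilon,\epsilon}$, the paper simply cites Kolmogorov's classical result~\cite{Ko37}; your self-contained sketch is fine as far as it goes, but (as you yourself note) spelling it out fully would require additional work that the citation sidesteps.
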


	\begin{proof}
		Let $\Gamma$ be a measurable subset of~$\cCt$. Using~\eqref{eq:disint},
		\begin{align*}
			\mathcal{Q}^{\lambda,\epsilon }_t(\Gamma)
				&= \int_{\rr^N} \int_{\cCt} \one_{\Gamma}(\gamma)	\one_{\pi_0^{-1}\{x\}}(\gamma) \, \mathcal{Q}^{x,\epsilon}_t(\d\gamma) \frac
				{\d\lambda}{\d\mu_0^{\epsilon}}(x) \mu_0^\epsilon(\d x) \\
				&= \int_{\rr^N} \int_{\cCt} \one_{\Gamma}(\gamma) \frac
				{\d\lambda}{\d\mu_0^{\epsilon}}(\pi_0 \gamma)	\one_{\pi_0^{-1}\{x\}}(\gamma)  \, \mathcal{Q}^{x, \epsilon}_t(\d\gamma) \,  \mu_0^\epsilon(\d x) \\
				&= \int_{\cCt} \one_{\Gamma}(\gamma) \, \frac
				{\d\lambda}{\d\mu_0^\epsilon}(\pi_0 \gamma)   \mathcal{Q}^{ \mu_0,\epsilon}_t(\d\gamma),
		\end{align*}
		that is
		\[
			\mathcal{Q}^{\lambda,\epsilon}_t (\d\gamma) = \frac
			{\d\lambda}{\d\mu_0^\epsilon}(\pi_0 \gamma) \,  \mathcal{Q}^{ \mu_0^\epsilon,\epsilon}_t(\d\gamma).
		\]
		Now, by the celebrated result of Kolmogorov~\cite{Ko37}, $\mathcal{Q}^{ \mu_0^\epsilon,\epsilon}_t = \mathcal{Q}^{ \mu_0^\epsilon,\epsilon}_t \circ \Theta_t^{-1}$ so that
		\[
			\frac{\d\mathcal{Q}_t^{\lambda,\epsilon}}{\d(\mathcal{Q}_t^{\lambda,\epsilon} \circ \Theta_t)}
			= \frac{\d\mathcal{Q}_t^{\lambda,\epsilon}}{\d\mathcal{Q}_t^{\mu_0^\epsilon,\epsilon}} \times
			\Big(\frac{\d\mathcal{Q}_t^{\mu_0^\epsilon,\epsilon}}{\d\mathcal{Q}_t^{\lambda,\epsilon}} \circ \Theta_t \Big)
		\]
		and we conclude the proof using the identity $\pi_0 (\Theta_t \gamma) = \pi_t \gamma$.
	\end{proof}


\begin{thebibliography}{AGG{\etalchar{+}}86}

\bibitem[ABG19]{ABG19}
A.~Arapostathis, A.~Biswas, and D.~Ganguly.
\newblock Certain {L}iouville properties of eigenfunctions of elliptic
  operators.
\newblock {\em Trans. Amer. Math. Soc.}, 371(6):4377--4409, 2019.

\bibitem[AGG{\etalchar{+}}]{AGG+}
W.~Arendt, A.~Grabosch, G.~Greiner, U.~Moustakas, R.~Nagel, U.~Schlotterbeck,
  U.~Groh, H.~P. Lotz, and F.~Neubrander.
\newblock {\em One-parameter semigroups of positive operators}, volume 1184 of
  {\em Lecture Notes in Mathematics}.
\newblock Springer, 1986.

\bibitem[A82]{An82}
B.~D.~O. Anderson.
\newblock Reverse-time diffusion equation models.
\newblock {\em Stoch. Process. Appl.}, 12(3):313--326, 1982.

\bibitem[BCFG18]{BCFG18}
L.~Bertini, R.~Chetrite, A.~Faggionato, and D.~Gabrielli.
\newblock Level 2.5 large deviations for continuous-time {M}arkov chains with
  time periodic rates.
\newblock {\em Ann. Henri Poincar{\'e}}, 19(10):3197--3238, 2018.

\bibitem[BCX21]{BCX20}
A.~Budhiraja, Y.~Chen, and L.~Xu.
\newblock Large deviations of the entropy production rate for a class of
  gaussian processes.
\newblock {\em J. Math. Phys.}, 62(5):052702, 2021.

\bibitem[BDG15]{BD15}
L.~Bertini and G.~Di~Ges{\`u}.
\newblock Small noise asymptotic of the {G}allavotti--{C}ohen functional for
  diffusion processes.
\newblock {\em ALEA, Lat. Am. J. Probab. Math. Stat.}, 12:743--763, 2015.

\bibitem[BNV94]{BNS94}
H.~Berestycki, L.~Nirenberg, and S.~R.~S. Varadhan.
\newblock The principal eigenvalue and maximum principle for second-order
  elliptic operators in general domains.
\newblock {\em Commun. Pure Appl. Math.}, 47(1):47--92, 1994.

\bibitem[CJPS19]{CJPS19}
N.~Cuneo, V.~Jak{\v{s}}i{\'c}, C.-A. Pillet, and A.~Shirikyan.
\newblock Large deviations and fluctuation theorem for selectively decoupled
  measures on shift spaces.
\newblock {\em Rev. Math. Phys.}, 31(10):1950036, 2019.

\bibitem[CJPS]{EPbrief}
N.~Cuneo, V.~Jak{\v{s}}i{\'{c}}, C.-A. Pillet, and A.~Shirikyan.
\newblock {\em What is a Fluctuation Theorem?}
\newblock SpringerBriefs in Mathematical Physics. Springer, to be published,
  2023.

\bibitem[DV75]{DV75a}
M.~D. Donsker and S.~R.~S. Varadhan.
\newblock On a variational formula for the principal eigenvalue for operators
  with maximum principle.
\newblock {\em Proc. Nat. Acad. Sci. USA}, 72(3):780--783, 1975.

\bibitem[Fel]{Fel2}
W.~Feller.
\newblock {\em An {I}ntroduction to probability theory and its applications,
  {V}ol.~{II}}.
\newblock Wiley series in probability and statistics. John Wiley~\&~Sons, 1966.

\bibitem[FS97]{FS97}
W.~H. Fleming and S.-J. Sheu.
\newblock Asymptotics for the principal eigenvalue and eigenfunction of a
  nearly first-order operator with large potential.
\newblock {\em Ann. Prob.}, 25(4):1953--1994, 1997.

\bibitem[HP86]{HP86}
U.~G. Haussmann and E.~Pardoux.
\newblock Time reversal of diffusions.
\newblock {\em Ann. Probab.}, 14(4):1188--1205, 1986.

\bibitem[HPS13]{HPS13}
M.~Hitrik and K.~Pravda-Starov.
\newblock Eigenvalues and subelliptic estimates for non-selfadjoint
  semiclassical operators with double characteristics.
\newblock {\em Ann. Inst. Fourier}, 63(3):985--1032, 2013.

\bibitem[HS84]{HS84}
B.~Helffer and J.~Sj\"{o}strand.
\newblock Multiple wells in the semi-classical limit {I}.
\newblock {\em Commun. Partial Differ. Equ.}, 9(4):337--408, 1984.

\bibitem[HSS05]{HSS05}
F.~H{\'e}rau, J.~Sj\"{o}strand, and C.~C. Stolk.
\newblock Semiclassical analysis for the {K}ramers--{F}okker--{P}lanck
  equation.
\newblock {\em Commun. Partial Differ. Equ.}, 30(5-6):689--760, 2005.

\bibitem[JOPP11]{JOPP}
V.~Jak{\v{s}}i{\'c}, Y.~Ogata, Y.~Pautrat, and C.-A. Pillet.
\newblock Entropic fluctuations in quantum statistical mechanics an
  introduction.
\newblock In J.~Frohlich, M.~Salmhofer, V.~Mastropietro, W.~De~Roeck, and L.~F.
  Cugliandolo, editors, {\em Quantum Theory from Small to Large Scales},
  volume~95 of {\em Lecture Notes of the Les Houches Summer School}, pages
  213--410. Oxford University Press, 2011.

\bibitem[JOPS12]{JOPS12}
V.~Jak{\v{s}}i{\'c}, Y.~Ogata, C.-A. Pillet, and R.~Seiringer.
\newblock Quantum hypothesis testing and non-equilibrium statistical mechanics.
\newblock {\em Rev. Math. Phys.}, 24(06):1230002, 2012.

\bibitem[JPS17]{JPS17}
V.~Jak{\v{s}}i{\'{c}}, C.-A. Pillet, and A.~Shirikyan.
\newblock Entropic fluctuations in thermally driven harmonic networks.
\newblock {\em J. Stat. Phys.}, 166(3):926--1015, 2017.

\bibitem[Kat]{Kat}
T.~Kato.
\newblock {\em Perturbation theory for linear operators}, volume 132 of {\em
  Grundlehren der mathematischen Wissenschaften}.
\newblock Springer, second edition, 1995.

\bibitem[Kha]{Has}
R.~Khasminskii.
\newblock {\em Stochastic stability of differential equations}, volume~66 of
  {\em Stochastic modeling and applied probability}.
\newblock Springer, second edition, 2011.

\bibitem[KKT10]{KKT09}
S.~Kusuoka, K.~Kuwada, and Y.~Tamura.
\newblock Large deviation for stochastic line integrals as ${L}^p$-currents.
\newblock {\em Probab. Theory Relat. Fields}, 147(3):649--674, 2010.

\bibitem[K37]{Ko37}
A.~N. Kolmogoroff.
\newblock Zur {U}mkehrbarkeit der statistischen {N}aturgesetze.
\newblock {\em Math. Ann.}, 113:766--772, 1937.
\newblock Engl.: {O}n the {R}eversibility of the statistical laws of nature.

\bibitem[K98]{Ku98}
J.~Kurchan.
\newblock Fluctuation theorem for stochastic dynamics.
\newblock {\em J. Phys. A}, 31(16):3719, 1998.

\bibitem[K07]{Ku07}
J.~Kurchan.
\newblock {G}allavotti--{C}ohen theorem, chaotic hypothesis and the zero-noise
  limit.
\newblock {\em J. Stat. Phys.}, 128(6):1307--1320, 2007.

\bibitem[Lan]{Lan}
E.~M. Landis.
\newblock {\em Second order equations of elliptic and parabolic type}, volume
  171 of {\em Translations of mathematical monographs}.
\newblock Amer. Math. Soc., 1997.

\bibitem[LaRo]{LaRo}
P.~Lancaster and L.~Rodman.
\newblock {\em The algebraic {R}iccati equation}.
\newblock Calderon Press, 1995.

\bibitem[LS99]{LS99}
J.~L. Lebowitz and H.~Spohn.
\newblock A {G}allavotti--{C}ohen-type symmetry in the large deviation
  functional for stochastic dynamics.
\newblock {\em J. Stat. Phys.}, 95(1--2):333--365, 1999.

\bibitem[MNV03]{MNV03}
C.~Maes, K.~Neto{\v{c}}n{\`y}, and M.~Verschuere.
\newblock Heat conduction networks.
\newblock {\em J. Stat. Phys.}, 111(5-6):1219--1244, 2003.

\bibitem[MPSR05]{MPRS05}
G.~Metafune, J.~Pr{\"u}ss, R.~Schnaubelt, and A.~Rhandi.
\newblock {$L^p$}-regularity for elliptic operators with unbounded
  coefficients.
\newblock {\em Adv. Differ. Equ.}, 10(10):1131--1164, 2005.

\bibitem[NP92]{NP92}
R.~D. Nussbaum and Y.~Pinchover.
\newblock On variational principles for the generalized principal eigenvalue of
  second order elliptic operators and some applications.
\newblock {\em J. Anal. Math.}, 59(1):161--177, 1992.

\bibitem[Pro]{Pro}
P.~E. Protter.
\newblock {\em Stochastic integration and differential equations}, volume~21 of
  {\em Stochastic modeling and applied probability}.
\newblock Springer, second edition, 2005.

\bibitem[PW66]{PW66}
M.~H. Protter and H.~F. Weinberger.
\newblock On the spectrum of general second order operators.
\newblock {\em Bull. Amer. Math. Soc.}, 72(2):251--255, 1966.

\bibitem[RBT00]{RT00}
L.~Rey-Bellet and L.~E. Thomas.
\newblock Asymptotic behavior of thermal nonequilibrium steady states for a
  driven chain of anharmonic oscillators.
\newblock {\em Commun. Math. Phys.}, 215(1):1--24, 2000.

\bibitem[Roc]{Roc}
R.~T. Rockafellar.
\newblock {\em Convex analysis}.
\newblock Princeton University Press, 1970.

\bibitem[S83]{Si83}
B.~Simon.
\newblock Semiclassical analysis of low lying eigenvalues {I}.
\newblock {\em Ann. Inst. Henri Poincar{\'e}~A: Phys. th{\'e}or.},
  38(3):295--308, 1983.

\bibitem[Sim]{Sim4}
B.~Simon.
\newblock {\em Operator theory}, volume~4 of {\em A Comprehensive course in
  analysis}.
\newblock Amer. Math. Soc., 2015.

\bibitem[S74]{Sj74}
J.~Sj{\"o}strand.
\newblock Parametrices for pseudodifferential operators with multiple
  characteristics.
\newblock {\em Ark. Mat.}, 12(1):85--130, 1974.

\bibitem[VF70]{FW70}
A.~D. Ventcel' and M.~I. Freidlin.
\newblock On small random perturbations of dynamical systems.
\newblock {\em Russ. Math. Surv.}, 25(1):1--55, 1970.

\bibitem[vZC03]{CVZ03}
R.~van Zon and E.~G.~D. Cohen.
\newblock Extension of the fluctuation theorem.
\newblock {\em Phys. Rev. Lett.}, 91(11):110601, 2003.

\bibitem[WXX16]{WXX16}
F.-Y. Wang, J.~Xiong, and L.~Xu.
\newblock Asymptotics of sample entropy production rate for stochastic
  differential equations.
\newblock {\em J. Stat. Phys.}, 163(5):1211--1234, 2016.

\end{thebibliography}

\newcommand{\etalchar}[1]{$^{#1}$}

\end{document}